\makeatletter\renewcommand{\paragraph}{\@startsection{paragraph}{4}%
  {\z@}{1ex \@plus 0.6ex \@minus .4ex}{-1em}{\normalfont\normalsize\bfseries\boldmath}}
\crefname{equation}{}{} % reference to equations just as (#)
\Crefname{equation}{}{} %
\def\eps{\varepsilon}           % small real > 0
\def\epstr{\epsilon}            % empty string
\def\nq{\hspace{-1em}}          % negative 1em space
\def\fr#1#2{{\textstyle{\frac{#1}{#2}}}} % textstyle fraction
\DeclareMathOperator*{\argmax}{argmax}
\def\cA{\mathcal{A}}
\def\cE{\mathcal{E}}
\def\cG{\mathcal{G}}
\def\cM{\mathcal{M}}
\def\cP{\mathcal{P}}
\def\cT{\mathcal{T}}
\def\cQ{\mathcal{Q}}
\algnewcommand\algorithmicinput{\textbf{Input:}}
\algnewcommand\Input{\item[\algorithmicinput]}
\algnewcommand\algorithmicoutput{\textbf{Output:}}
\algnewcommand\Output{\item[\algorithmicoutput]}
\algnewcommand\algorithmiceffect{\textbf{Effect:}}
\algnewcommand\Effect{\item[\algorithmiceffect]}
\def\h{\text\ae}
\def\rO{r$O$} % reflective oracle
\def\rOc{r$O$-computable} % eflective-oracle computable
\def\rOi{_\text{refl}^O}
\def\TS{{T\!S}} % Thompson sampling
\DeclareMathOperator{\flip}{flip}
\DeclareMathOperator{\eval}{eval}
\DeclareMathOperator{\query}{query}
\newtheoremstyle{boldtitle}{\topsep}{\topsep}{}{}{\bfseries\boldmath}{}{5pt}{\thmname{#1}\thmnumber{ #2}\thmnote{ (#3)}}
\theoremstyle{boldtitle}
\newtheorem{theorem}{Theorem}
\newtheorem{definition}[theorem]{Definition}
\newtheorem{lemma}[theorem]{Lemma}
\title{\bf\Large\hrule height5pt \vskip 4mm
%%%%%%%%%%%%%%%%%%%%%%%%%%%%%%%%%%%%%%%%%%%%%%%%%%%%%%%%%%%%%%%
Limit-Computable Grains of Truth for \\ Arbitrary Computable Extensive-Form (Un)Known Games
%%%%%%%%%%%%%%%%%%%%%%%%%%%%%%%%%%%%%%%%%%%%%%%%%%%%%%%%%%%%%%%
\vskip 4mm \hrule height2pt}
\begin{document}

\author[1]{\bf Cole Wyeth}
\affil[1]{David R. Cheriton School of Computer Science, University of Waterloo}
\author[2]{\bf Marcus Hutter}
\affil[2]{Google DeepMind and Australian National University}
\author[3]{\bf Jan Leike}
\affil[3]{Anthropic}
\author[4]{\bf Jessica Taylor}
\affil[4]{Median Group}
\date{September 2024} % Last content edit / Fix final date to avoid (e.g.arXiv) recompilation producing a new date
\maketitle

\begin{abstract}
    A Bayesian player acting in an infinite multi-player game
    learns to predict the other players' strategies
    if his prior assigns positive probability to their play (or contains a \emph{grain of truth}).
    Kalai and Lehrer's classic \emph{grain of truth problem} is to find a reasonably large class of strategies
    that contains the Bayes-optimal policies with respect to this class,
    allowing mutually-consistent beliefs about strategy choice that obey the rules of Bayesian inference.
    Only small classes are known to have a grain of truth
    and the literature contains several related impossibility results.
    In this paper we present a formal and general solution to
    the full grain of truth problem:
    we construct a class of strategies wide enough to contain all computable strategies as well as Bayes-optimal strategies for every reasonable  prior over the class.
    When the ``environment'' is a known repeated stage game, we show convergence in the sense of \cite{kalai_rational_1993} and \cite{kalai_subjective_1993}.
    When the environment is unknown, agents using Thompson sampling
    converge to play $\varepsilon$-Nash equilibria
    in arbitrary unknown computable multi-agent environments.
    Finally, we include an application to self-predictive policies that avoid planning.
    While these results use computability theory only as a conceptual tool to solve a classic game theory problem,
    we show that our solution can naturally be computationally approximated
    arbitrarily closely. 

    \paragraph{Keywords:}
    history-based; general reinforcement learning; multi-agent systems; game theory; self-reflection; asymptotic optimality; Nash equilibrium; Thompson sampling; Matching Pennies; AIXI; limit-computability;
    % tiny top-level table of contents
    \newpage\def\contentsname{\centering\normalsize Contents}\setcounter{tocdepth}{1}
    {\parskip=-2.7ex\tableofcontents}
\end{abstract}

%%%%%%%%%%%%%%%%%%%%%%%%%%%%%%%%%%%%%%%%%%%%%%%%%%%%%%%%%%%%%%%
\section{Introduction}\label{sec:intro}
%%%%%%%%%%%%%%%%%%%%%%%%%%%%%%%%%%%%%%%%%%%%%%%%%%%%%%%%%%%%%%%

We will consider the behavior of Bayesian players engaged in infinite games. This is a core problem of game theory, but because of reflective difficulties the first rich class of solutions grounded in rigorous decision theory has only recently been proposed \cite{leike_formal_2016}. According to the standards of rational behavior derived by von Neumann and Morgenstern \cite{von_neumann_theory_1944}, players should act to maximize their expected utility. This requires each player to maintain and update beliefs about all opponents' strategies, which presumably depend on their reasoning about the player's strategy. This well studied infinite recursion gives rise to the grain of truth problem \cite{kalai_rational_1993}: how can one construct a consistent set of beliefs for each player such that they all assign nonzero probability to each others' Bayes-optimal strategies? This problem is known to be difficult, with many impossibility results (e.g.\ \cite{nachbar_prediction_1997,nachbar_beliefs_2005,foster_impossibility_2001}).

%-------------------------------%
\paragraph{Solution overview.}
%-------------------------------%
Using the recently invented concept of a ``reflective oracle'' \cite{fallenstein_reflective_2015} which selects a commonly-known fixed point for players' mutually recursive belief distributions, one can construct priors that overcome recursive difficulties and are even limit-computable. Intuitively, reflective oracles allow algorithms to make predictions about their own behavior (by asking the oracle), which would normally be impossible by diagonal arguments. In Section~\ref{sec:refl_oracles}, we will introduce reflective oracles and show how they can be used to construct a class of strategies $\cP\rOi$. In Section~\ref{sec:multiagents} we show that when strategies are chosen in $\cP\rOi$, each player's ``subjective environment'' is reflective-oracle computable and has a Bayes-optimal strategy in $\cP\rOi$ (Theorem~\ref{sigma_i_refl} and \cref{thm:O_gen_optimal_strategy}). This allows us to construct an interesting reflective-oracle computable Nash equilibrium (\cref{thm:refl_oracle_comp_nash}) which has the informal (self-referential) interpretation that ``everyone assumes everyone else is playing a best response to this strategy profile.'' To model players' uncertainty about the strategies they face we construct a dominant ``mixture'' strategy $\zeta \in \cP\rOi$ in \cref{sec:convergence_for_Bayesian}\  \cref{thm:dominant_zeta}, which is the final step to establish the grain of truth property. This allows us to satisfy the conditions of \cite{kalai_rational_1993}, proving that Bayesian players with beliefs supported on $\cP\rOi$ will converge to an $\eps$-Nash equilibrum in an infinitely repeated stage game; this is a particularly surprising result because mixed strategy Nash equilbria arise naturally despite the fact that Bayesians are not a priori required to randomize. Finally, we address the general case that players do not know either the game or their opponents' strategies, but only know the classes the game and strategy are drawn from. We establish a grain of truth property for this case by showing that Thompson sampling versions of the Bayes-optimal strategies are in the strategy class, so are assigned non-zero probability. This allows us to use single-agent asymptotic optimality results for Thompson sampling in an unknown environment to prove convergence to a $\eps$-Nash equilibrium:
\begin{quote}
    \emph{There is a class of limit-computable strategies satisfying the grain of truth property with respect to any computable game, and it includes limit-computable strategies that converge to a $\varepsilon$-Nash equilibrium even when the game is unknown.}
\end{quote}
A brief outline of the paper including dependencies between major theorems appears below.

%-------------------------------%
\paragraph{Contributions.}
%-------------------------------%
We solve the long-standing grain of truth problem by introducing a class of reflective-oracle computable strategies. This allows us to establish convergence of Bayesian players to $\varepsilon$-Nash equilibrium in known repeated games, followed by convergence for Thompson sampling strategies on unknown games. The rigor and elegance of proofs are improved over the conference version of this paper \cite{leike_formal_2016} by extending reflective oracles to non-binary alphabets with ``types'' for distinct action and percept spaces. We also include a novel application of the machinery developed for the grain of truth problem to answer a question posed in \cite{catt_self-predictive_2023}, constructing a self-predictive agent with consistent beliefs about its own policy and suggesting a direction for further research. All results are shown to be limit-computable.   

\begin{center}
\unitlength=2.4ex
\linethickness{0.4pt}
\begin{picture}(31,8)
\thicklines\small
% lower row
\put(3,1.5){\oval(6,3)[cc]\makebox(0,0)[cb]{\raisebox{3pt}{Theorem~\ref{thm:est_pi_TS_comp}}}\makebox(0,0)[ct]{[$\pi_\TS \in \cP\rOi$]}}
\put(11,1.5){\oval(6,3)[cc]\makebox(0,0)[cb]{\raisebox{3pt}{Theorem~\ref{thm:O_gen_optimal_strategy}}}\makebox(0,0)[ct]{[$\pi^*_\nu \in \cP\rOi$]}}
\put(19,1.5){\oval(6,3)[cc]\makebox(0,0)[cb]{\raisebox{3pt}{Theorem~\ref{sigma_i_refl}}}\makebox(0,0)[ct]{[$\sigma_i \in \cM\rOi$]}}
\put(27,1.5){\oval(6,3)[cc]\makebox(0,0)[cb]{\raisebox{3pt}{Theorem~\ref{thm:dominant_zeta}}}\makebox(0,0)[ct]{[$\cP\rOi \geq \cP\rOi$]}}
% upper row
\put(3,6.5){\oval(6,3)[cc]\makebox(0,0)[cb]{\raisebox{3pt}{Theorem~\ref{thm:limit_comp_nash_convergence}}}\makebox(0,0)[ct]{[$\pi_\TS \rightarrow \varepsilon$-Nash]}}
\put(11,6.5){\oval(6,3)[cc]\makebox(0,0)[cb]{\raisebox{3pt}{Theorem~\ref{thm:refl_oracle_comp_nash}}}\makebox(0,0)[ct]{[Nash $\in \cP\rOi$]}}
\put(19,6.5){\oval(6,3)[cc]\makebox(0,0)[cb]{\raisebox{3pt}{Theorem~\ref{thm:kalai_lehrer_convergence}}}\makebox(0,0)[ct]{[$\pi^*_{\sigma_i} \rightarrow \eps$-Nash]}}
\put(27,6.5){\oval(6,3)[cc]\makebox(0,0)[cb]{\raisebox{3pt}{Theorem~\ref{thm:self_aixi_comp_env}}}\makebox(0,0)[ct]{[$\pi_S \in \cP\rOi$]}}
% dependencies
% left arrow
\put(8,1.5){\vector(-1,0){2}}
% vertical arrows
\put(3,3){\vector(0,1){2}}
\put(11,3){\vector(0,1){2}}
\put(19,3){\vector(0,1){2}}
\put(27,3){\vector(0,1){2}}
% right diagonal
\put(11,3){\vector(4,1){8}}
% left diagonals
\put(27,3){\vector(-4,1){8}}
% left vector (misleading?)
\put(9,0){\oval(20,2)[br]\oval(20,2)[bl]} % around
\put(0,0){\oval(2,13)[tl]}\put(0,6.45){\vector(1,0){0}}
\end{picture}
\end{center}

%%%%%%%%%%%%%%%%%%%%%%%%%%%%%%%%%%%%%%%%%%%%%%%%%%%%%%%%%%%%%%%
\section{Mathematical Preliminaries}\label{sec:prelim}
%%%%%%%%%%%%%%%%%%%%%%%%%%%%%%%%%%%%%%%%%%%%%%%%%%%%%%%%%%%%%%%

%-------------------------------%
\paragraph{Notation.}
%-------------------------------%
$\cA^*$ is the set of finite strings $x$ over a finite set of alphabet symbols $a\in\cA$. We will use $x_t \in \cA$ to denote the $t^\text{th}$ element of such a string (indexing from 1) and $x_{1:t} = x_{\leq t}$ is the substring $x_1x_2...x_t \in \cA^*$; $x_{<t}$ and $x_{\neq t}$ are defined analogously. The string $xy$ is the concatenation of $x$ and $y$. We bars $|\cdot|$ are overloaded, representing length for strings, cardinality for sets, and absolute value for reals. While the index $t$ is reserved for ``temporal'' indexing such as elements of an ordered string or sequence, we will reserve $n$ for the number of players in a game and $1 \leq i,j \leq n$ for the indices of (the currently considered and other, respectively) players in the game. We will use $\cT$ to denote the set of probabilistic Turing machines with oracle access. We will use $\Delta$ to represent a probability simplex; for example $\Delta \cA$ is a probability distribution on $\cA$. For probability measures $\mu,\nu$ we write ``$\mu$ is absolutely continuous w.r.t. $\nu$'' as $\mu \ll \nu$. The Iverson brackets $[\![R]\!]$ are 1 when $R$ is true and 0 when $R$ is false (they cast booleans to integers). Further notation will be introduced as needed; see \cref{app:Notation} for a complete list.

We will need the following computability levels from the arithmetic hierarchy.

%-------------------------------%
\begin{definition}[computability]
%-------------------------------%
A function $f$ is\vspace{-1.5ex}
\begin{itemize}\parskip=0ex\parsep=0ex\itemsep=0ex
    \item \emph{(finitely) computable} (or recursive) if it is computed by some Turing machine. 
    \item \emph{estimable} if there is a computable function $\phi(x,k)$ such that $\forall k |f(x) - \phi(x,k)| < \frac{1}{k}$. That is, $f$ can be approximated to arbitrary pre-specified precision. 
    \item \emph{lower semicomputable} (l.s.c.) if there is a computable function $\phi(x,k)$ monotonically increasing in its second argument with $\lim_{k\to\infty} \phi(x,k) = f(x)$. That is, $f$ can be approximated from below.  
    \item \emph{limit-computable} (or approximable) if there is a computable function $\phi(x,k)\to f(x)$ for $k\to\infty$. That is, $f$ can be approximated to arbitrary but unknown precision. 
\end{itemize}
\end{definition}
An estimatable function is always l.s.c.: If $\phi$ estimates $f$, then $\phi'(x,k'):=\max_{k\leq k'}\{\phi(x,k)-\fr1k\}$ lower semicomputes $f(x)$.
Estimable functions are often called `computable,' but we find this term ambiguous.

%-------------------------------%
\begin{definition}[semimeasure]\hfill\par
%-------------------------------%
\noindent A semimeasure $\nu$ is a function $\cA^*\to\mathbb{R}^+$ satisfying $\nu(x) \geq \sum_{a \in \cA} \nu(xa)$. 
\end{definition}
For our purposes semimeasures always assign probability $1$ to the empty string $\eps$.
Because algorithms do not always halt, the objects of algorithmic probability are often semimeasures with  probability gaps arising from non-halting behavior. 
Semimeasures are designed for sequence prediction and assign a (defective) probability to observing a sequence starting with string $x$. Another viewpoint more in line with measure theory is that $x$ represents the cylinder set $\Gamma_x$ including all infinite continuations of $x$. A standard well-defined extension of semimeasures to (near-arbitrary) sets of infinite strings analogous to Carathéodory's extension theorem for measures is given in \cite{wyeth_hutter_semimeasures_2025}. 

%%%%%%%%%%%%%%%%%%%%%%%%%%%%%%%%%%%%%%%%%%%%%%%%%%%%%%%%%%%%%%%
 \section{The Grain of Truth Problem} \label{sec:GoT}
%%%%%%%%%%%%%%%%%%%%%%%%%%%%%%%%%%%%%%%%%%%%%%%%%%%%%%%%%%%%%%%

%-------------------------------%
\paragraph{Problem statement.}
%-------------------------------%
The grain of truth problem concerns a set of $n$ players engaged in a multi-player game $\sigma$ in some (countable) class of games $\cG$. Each player believes that the other players' strategies are drawn independently from a (countable) strategy (=\emph{policy}) class $\cP$. Then it is natural to ask whether under some choices of prior over $\cP$, a Bayesian optimal strategy for each player is itself in $\cP$. In other words, we are seeking conditions that make the players' beliefs about each other ``consistent'' and subjectively optimal. We can also view this as a single agent interacting with an environment $\nu$ which consists of game $\sigma$ combined with the remaining agents (in contrast to games, environments have only one ``player''). If and only if $(\cG,\cP)$ contain a grain of truth in the above sense, the condition that the true environment is $\mu\in\cM$ is satisfied.
Consider a multi-agent setup where $n$ agents interact with each other via a common environment in rounds.
In game-theoretic parlance, player $i$ follows some mixed strategy $\pi_i$ from some class of strategies $\cP$.
The extensive-form ``game'' $\sigma$ they are ``playing'' also includes observations and rewards to the agents, 
where the utility is the expected discounted reward sum. 
Conventionally a repeated known game and Nash equilibria are considered.
We deal with this case only as a stepping stone to our much more general setting:
Our main setting and results consider one long extensive-form game
which is only known to belong to a countable class of games $\cG$,
and essentially no structural assumptions are made on $\cG$.
We also do not assume that players play Nash equilibria.
Player $i$ only assumes that the others' strategies $\pi_j\in\cP$ and that the game $\sigma\in\cG$\footnote{In Section~\ref{sec:unknown_games} the players' belief distributions can be made in a sense even more general, taking a mixture over a class of subjective environments that includes any combination of a game from $\cG$ and opponent strategies from $\cP$, but need not explicitly draw a distinction between an opponent and any other part of the environment.}.
From player $i$'s perspective, he interacts with an environment $\sigma_i$ that consists
of game $\sigma\in\cG$ and a (deleted) strategy profile $\pi_{\neq i}:=(\pi_1,...,\pi_{i-1},\pi_{i+1},...,\pi_n)\in\cP^{n-1}$ of the other players $j\neq i$.
Player $i$ does not know $\sigma$ and does not assume that $\pi_{\neq i}$ is a Nash strategy, 
so needs to infer both from the experienced interaction history $h_{<t}$. 
This setup is more realistic in that it allows to model agents that learn from experience and do not assume any particular strategy (e.g.\ Nash) of their ``opponents'' beyond being in $\cP$, and do not need to know the game they are playing.
This is the multi-agent version of the optimal history-based reinforcement learning agent AIXI \cite{Hutter:04uaibook,Hutter:24uaibook2}, and it can model very general problems. For instance, player $i$ may face sub-optimal, colluding, or cooperative opponents.

%-------------------------------%
\paragraph{Strategies.}
%-------------------------------%
A player's strategy is a mapping from a string of previous moves in his action set $\cA$ and previous percepts from his information set $\cE$ to a (conditional) probability distribution for his next action $\Delta \cA$, with signature $\pi : (\cA \times \cE)^* \rightarrow \Delta \cA$. For instance $\cE$ may be a list of the other players' moves. We can equivalently model a strategy as a family of measures over action sequences for each percept sequence in a player's information set (by taking the product of the conditional probabilities). The probabilities of the first $t$ actions can only depend on observations available before time $t$. This is sometimes referred to as a ``chronological contextual'' measure \cite{Hutter:24uaibook2}. Typically, we will show that our grain of truth classes $\cP$ contain a ``dominant'' mixture policy $\zeta\in\cP$ such that $\forall \pi \in \cP$, $\exists c \in \mathbb{R^+}$ satisfying $\zeta \geq c\pi$. In that case, each player may express their belief that the others choose some strategy in $\cP$ by modeling their strategies as $\zeta$ (this is essentially Kuhn's theorem \cite{aumann_mixed_1964}; see also \cite{alexander_private_2023}).
In the case that each player has a different action set it is necessary to generalize this problem by indexing the policy classes by player as $(\cP_i)_{1 \leq i \leq n}$.

%-------------------------------%
\paragraph{Note on computability.}
%-------------------------------%
The vast majority of strategies do not satisfy any reasonable computability standards. As an (informal) example, consider $\cA = \{0,1\}$ and an empty percept space. Then any deterministic (sometimes \emph{pure}) strategy can be identified with an infinite binary sequence, which encodes a real number in the unit interval. It is a well-known result of recursion theory that almost all reals are not computable under any natural formalization (e.g. the type two theory of effectivity). We introduce relevant computability properties in \cref{sec:refl_oracles} and apply them to strategies in \cref{sec:multiagents}.

%-------------------------------%
\paragraph{Multi-player games.}
%-------------------------------%
Formally, a multi-player game has signature $\sigma : (\cA^n \times \cE^n)^* \times \cA^n \rightarrow \Delta \cE^n$. Given a sequence of action vectors $a_t = (a_t^1, a_t^2, ..., a_t^n) \in \cA^n$, a game $\sigma$ assigns a probability to the sequence of percept vectors $e_t = (e_t^1, e_t^2, ..., e_t^n) = (o_t^1r_t^1, o_t^2r_t^2, ..., o_t^nr_t^n) \in \cE^n$ including observations $o_t^i$ and rewards $r_t^i \in [0,1]$ (by taking the product of conditionals). This is written $\sigma(e_{\leq t}||a_{\leq t})$. The probabilities of percepts received at time $t$ can only depend on the actions taken up to and including time $t$.

%-------------------------------%
\paragraph{Action/percept encodings.}
%-------------------------------%
The settings we will discuss restrict players and games to be represented by probabilistic Turing machines, so that they accept the interaction history on an input tape and stochastically produce an action or observation on the output tape. The symbolic representation of the history on these tapes can become important. We require it to be uniquely decodeable into actions and percepts (for instance, by devoting a consistent number of symbols to each action and percept). We will denote the representation of any object $s$ as $\langle s \rangle$ and assume that $\langle a_1^ie_1^i...a_t^ie_t^i \rangle = \langle a_1^i \rangle \langle e_1^i \rangle ... \langle a_t^i \rangle \langle e_t^i \rangle.$ The simplest ``highly granular'' representation is to use a unique symbol for each $e_t^i \in \cE$\footnote{A perception can be encoded as a single symbol as long as the observation and reward can be computably extracted.} and a unique symbol for each $a_t^i \in \cA$. Conceptually these symbol sets should be disjoint, but it is always possible to determine which is which by indexical position. 
By the compositionality properties of Turing machines, if a player's opponents have computable strategies and the game is computable, we can construct a Turing machine to simulate both the opponents and the game from his perspective; this forms a computable \emph{subjective environment}. 

A Bayesian learner chooses Bayes-optimal actions w.r.t.\ the Bayes-mixture over $(\sigma,\pi_{\neq i})$
w.r.t.\ some prior $w$ over $\cG\times\cP^{n-1}$. If there is such a Bayes-optimal strategy in $\cP$, we say that $\cG, \cP$ satisfies the grain of truth property with respect to this choice of prior.
There are general theorems which show that (a Thompson-sampling version of) the Bayes-optimal
strategy $\pi_i^*$ is asymptotically optimal in the sense that it converges 
to the optimal (informed) agent who knows the environment (here $\sigma,\pi_{\neq i}$).
This single-agent view is asymmetric in that it singles out one (Bayes-optimal) agent $i$ against $n-1$ agents from some class $\cP$.
A symmetric treatment requires agent $i$ to consider the possibility that the other agents $j\neq i$ are also Bayes-optimal agents $\pi_j^*$.
This consideration is formally satisfied iff (the Thompson sampling version of) $\pi_j^*$ is itself in $\cP$ with a non-zero prior $w(\pi_j^*)>0$.
We say that $(\cP,\cG,w)$ contains a \emph{grain of truth} if this condition is satisfied.
The grain of truth problem is the question of whether there exist interesting classes $(\cP,\cG)$ that contain a grain of truth.

As a special case, we construct an interesting policy class satisfying the convergence conditions of \cite{kalai_rational_1993} in known infinitely repeated games. For these purposes, we prove a weak form of the grain of truth property, formally:

\begin{definition}[Grain of truth property]\label{def:grain_of_truth}
    Given a class of policies=strategies $\cP$ and a class of games $\cG$, consider a vector of policies $\pi = (\pi_1,...,\pi_n)\in\cP^n$.
    Hold out one player $i$ and construct the environment $\sigma_i^\pi$ it is interacting with (the game $\sigma$ together with the other players' $j\neq i$ strategies). Let player $i$'s beliefs about his environment be described by the Bayesian mixture environment $\xi_i$.  Let $\pi_{\xi_i}^*$ be some Bayes-optimal strategy for $\xi_i$. We say that $(\cP,\cG)$ contains a grain of truth iff $\forall i,\ \forall \pi_{\neq i} \in \cP^{n-1},\ \forall \sigma \in \cG,\ \sigma^\pi_i \ll \xi_i$ and $\pi^*_{\xi_i} \in \cP$. 
\end{definition}
This is the property we will need to show convergence to $\epsilon$-Nash equilibrium in a known game. The condition $\sigma^pi_i \ll \xi_i$ can be satisfied by choosing $\xi_i$ as an explicit Bayesian mixture over $\cG \times \cP^{n-1}$ with weights $w_i$, in which case it makes sense to say that $(\cP,\cG,w)$ satisfies the grain of truth property, but this is not required by \cref{def:grain_of_truth}. For unknown games $(\cG \neq \{\sigma\})$ we need a stronger property for convergence:
\begin{definition}[Strong grain of truth property]\label{def:strong_grain_of_truth}
    Given a class of policies=strategies $\cP$ and a class of games $\cG$, let player $i$'s beliefs about his environment be described by the Bayesian mixture $\xi_i$ with weights $w_i(\nu) > 0$ for $\nu \in \cM$.  Let $\pi_{\xi_i}^*$ be some Bayes-optimal strategy for $\xi_i$. We say that $(\cP,\cG)$ and specifically $(\cP,\cG,w)$ contains a strong grain of truth iff $\forall i,\ \forall \pi_{\neq i} \in \cP^{n-1},\ \forall \sigma \in \cG,\ \sigma^\pi_i \in \cM$ and $\pi^*_{\xi_i} \in \cP$. 
\end{definition}
In fact, we need the ``Thompson sampling version'' of \cref{def:strong_grain_of_truth}, obtained by replacing $\pi^*_{\xi_i}$ by the Thompson sampling strategy $\pi_\TS$ in the final condition. It is clear that the strong grain of truth property is in fact stronger than the grain of truth property; it implies not only absolute continuity but even bounded Radon-Nikodym derivative $d\sigma^\pi_i/d\xi_i \leq w_i(\sigma^\pi_i)^{-1}$. 

%-------------------------------%
\paragraph{History of the grain of truth problem.}
%-------------------------------%
Progress towards discovering rich strategy and game classes satisfying the grain of truth property has been slow. After \cite{kalai_rational_1993} introduced the grain of truth property in the context of infinitely repeated games and independent strategies, along with a simple prisoner's dilemma example, many impossibility results were proven (\cite{nachbar_prediction_1997,nachbar_beliefs_2005,foster_impossibility_2001}). Much later, \cite{fallenstein_reflective_2015} introduced reflective oracles, laying the groundwork for a solution (but only considering stage games and not the grain of truth property). This work was extended to sequential decision theory by \cite{aixi_reflective_2015}, but \cite{leike_formal_2016} (the conference version of this paper) was the first to solve the grain of truth problem. However, Leike et al. focused on convergence in unknown games which required Thompson sampling strategies instead of the Bayesian strategies of \cite{kalai_rational_1993}. As a result they did not fully correctly formulate or prove a solution to Kalai and Lehrer's problem. We provide such a solution along with more elegant and complete proofs of many of Leike et al.'s other results.

%-------------------------------%
\begin{table}[t]
%-------------------------------%
\begin{center}
\begin{tabular}{p{0.445\columnwidth}p{0.445\columnwidth}}
\toprule
Reinforcement learning & Game theory \\
\midrule
stochastic policy & mixed strategy \\
deterministic policy & pure strategy \\
agent & player \\
multi-agent environment & infinite extensive-form game \\
reward & payoff/utility \\
(finite) history & history \\
infinite history & path of play \\
\bottomrule
\end{tabular}
\end{center}
\caption{Terminology dictionary between reinforcement learning and game theory from \cite{leike_formal_2016}.}
\label{tab:rl-game-theory-translation}
\end{table}

%-------------------------------%
\paragraph{Examples.}
%-------------------------------%
Any Nash equilibrium of a game $\sigma$ with strategy $\pi_i$ for each player $i$ is a trivial ``solution'' to the grain of truth problem with $\cG = \{ \sigma \}$, $\mathcal{P}_i = \{ \pi_i \}$, but this is not very interesting for our purposes because it does not necessarily model learning. 

A basic but non-trivial example is discussed \cite{kalai_rational_1993}; consider an infinitely repeated prisoner's dilemma. In every time step the payoff matrix is as follows,
where C means cooperate and D means defect.
\begin{center}
\begin{tabular}{l|cc}
  & C        & D \\
\hline
C & 3/4, 3/4 & 0, 1 \\
D & 1, 0     & 1/4, 1/4
\end{tabular}
\end{center}

Let the strategy class be $\cP = \{g_t\}_{t \in \mathbb{N}\cup\{\infty\}}$ where $g_t$ is a grim trigger strategy that punishes defection by defecting indefinitely, but by default cooperates until time $t$ and defects afterwards. It is fairly easy to see that regardless of a player's prior belief $w_t$ in each $g_t$, once he or his opponent has defected, any strategy in $\cP$ continues to defect indefinitely, so he expects his opponent to certainly defect. The Bayes optimal strategy, which is strictly dominant in the game theoretic (not our measure-theoretic) sense, is for him to defect from that point on, which is itself a grim trigger strategy. Depending on his priors $w_t$, he may earlier expect his opponent to very likely defect at time $t_d < \infty$ despite continued cooperation for $t \leq t_d$, in which case his optimal policy may be $g_{t_d-1}$. Not only does this $(\cP,\cG) = (\cP,\{\sigma\})$ satisfy the strong grain of truth property, $(\cP,\cG,w)$ satisfies the strong grain of truth property for any choice of $w_i$ supported on $\{\sigma^\pi | \pi_{\neq i} \in \cP^{n-1}\}$. The catch is that $\cG$ is only a single (known) environment. 

For a much larger (non)example, the class containing all strategies naively appears to satisfy the grain of truth property, but in any nontrivial infinite game it is not countable and certainly has no dominant strategy, so it is usually not possible to define a useful prior over this class.

%-------------------------------%
\paragraph{} % Main result revisited.
%-------------------------------%
Using the terminology introduced above, we can rephrase our main result as follows:

%-------------------------------%
\begin{theorem}[limit-computable convergence to equilibrium]
%-------------------------------%
    There are limit-computable (Thompson sampling) strategies  $\pi_1, ..., \pi_n$ such that for any computable multi-player game $\sigma$ and for all $\eps > 0$ and all $i \in \{ 1, ..., n \}$ the $\sigma^{\pi_{1:n}}$-probability that the policy $\pi_i$ is an $\eps$-best response converges to 1 as $t\to\infty$.
\end{theorem}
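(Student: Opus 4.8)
The plan is to exhibit the strategies $\pi_1,\dots,\pi_n$ concretely as the Thompson sampling policies of reflective-oracle Bayesian learners, and then to reduce the $\eps$-Nash claim to a single-agent asymptotic optimality theorem, with the (Thompson sampling version of the) strong grain of truth property serving as the bridge that makes the reduction legitimate. Concretely, I would fix a limit-computable reflective oracle $O$ and, for each player $i$, let $\xi_i$ be an explicit Bayesian mixture over the class $\cM\rOi$ of reflective-oracle computable subjective environments with strictly positive weights $w_i(\nu)>0$, and set $\pi_i:=\pi_\TS$ equal to the Thompson sampling version of the Bayes-optimal strategy for $\xi_i$. By \cref{thm:est_pi_TS_comp} each $\pi_i$ lies in $\cP\rOi$ and is limit-computable (using that $O$ itself may be taken limit-computable), which already discharges the ``limit-computable'' clause of the statement.

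Next I would verify the grain of truth. Fixing the profile $\pi_{1:n}$ and an arbitrary computable game $\sigma$, \cref{sigma_i_refl} shows that the subjective environment $\sigma_i^{\pi}$ formed by folding the opponents $\pi_{\neq i}\in(\cP\rOi)^{n-1}$ into $\sigma$ is itself reflective-oracle computable, i.e.\ $\sigma_i^{\pi}\in\cM\rOi$. Hence $\sigma_i^{\pi}$ sits in the support of $\xi_i$, so $\sigma_i^{\pi}\ll\xi_i$ with bounded Radon--Nikodym derivative $d\sigma_i^{\pi}/d\xi_i\le w_i(\sigma_i^{\pi})^{-1}$; this is exactly the Thompson sampling strong grain of truth property of \cref{def:strong_grain_of_truth}, and it holds simultaneously for every player because all players share the single oracle $O$.

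The substantive step is convergence. Since $\xi_i$ dominates the true environment $\sigma_i^{\pi}$, I would invoke the single-agent asymptotic optimality theorem for Thompson sampling: the discounted value that $\pi_i$ attains in its own subjective environment converges, on-policy and in $\sigma_i^{\pi}$-probability, to the optimal value $V^*_{\sigma_i^{\pi}}$. Because player $i$'s optimal value against $\sigma_i^{\pi}$ is by definition its best-response value against $\pi_{\neq i}$ in $\sigma$, a value gap below $\eps$ is precisely the event that $\pi_i$ is an $\eps$-best response; thus the $\sigma_i^{\pi}$-probability of that event, and hence its $\sigma^{\pi_{1:n}}$-probability, tends to $1$. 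Running this argument for each of the finitely many $i$ and intersecting the events yields convergence of the whole profile to an $\eps$-Nash equilibrium.

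I expect the convergence step, not the bookkeeping, to be the main obstacle. One must check that the reflective-oracle computable environment class meets the hypotheses of the Thompson sampling asymptotic optimality result (a well-defined mixture over an enumerable support, with bounded discounted rewards), and, more delicately, that the conclusion---usually stated as convergence of value---can be converted without loss into the uniform ``$\eps$-best response with probability $\to1$'' form demanded here, uniformly over the $n$ players and over all $\eps>0$. A secondary subtlety is confirming that the shared, self-referential oracle does not invalidate the single-agent analysis: each player's asymptotic optimality must hold against a fixed (if $O$-dependent) environment $\sigma_i^{\pi}$, so the fixed-point guarantee provided by the reflective oracle must be exploited exactly once, to define the profile, and then held constant while the single-agent theorems are applied.
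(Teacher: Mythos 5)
Your proposal is correct and follows essentially the same route as the paper: fix a limit-computable reflective oracle, take $\pi_i := \pi_\TS$ over $\cM\rOi$ (limit-computable by \cref{thm:est_pi_TS_comp} and \cref{thm:O_limit_comp}), note via \cref{sigma_i_refl}/\cref{thm:M_includes_G_P} that each subjective environment $\sigma_i^\pi$ lies in $\cM\rOi$, and then invoke single-agent asymptotic optimality of Thompson sampling together with the conversion from convergence in mean to convergence in probability for bounded random variables (\cref{thm:asymptotically_opt_convergence}). The subtleties you flag are exactly the ones the paper discharges through those same theorems, so nothing further is needed.
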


%%%%%%%%%%%%%%%%%%%%%%%%%%%%%%%%%%%%%%%%%%%%%%%%%%%%%%%%%%%%%%%
\section{Reflective Oracles}\label{sec:refl_oracles}
%%%%%%%%%%%%%%%%%%%%%%%%%%%%%%%%%%%%%%%%%%%%%%%%%%%%%%%%%%%%%%%

Rational players can face an infinite regress in which each mutually reasons about the other's reasoning. For instance, if each player's strategy is computed by a commonly known Turing machine, it would seem to be rational to run the other players' machines to predict their behavior and choose the utility maximizing response. When the other players' Turing machines halt, this is computable. Unfortunately, if all players attempt this strategy there is mutual recursion as player 1 simulates player 2 simulating player 1 ad infinitum\footnote{An amusing example of this behavior appears in William Goldman's ``The Princess Bride," when Vizzini attempts to determine which of two cups Westley poisoned by speculating about what Westley will think Vizzini thinks about Westley. This is isomorphic to the game of matching pennies described in Section~\ref{sec:convergence_for_Bayesian}.}.\ 
Classically this interdependence of optimal strategies is resolved by assuming players will choose a Nash equilibrium, but this is not a Bayesian optimality notion because it is not clear why players should learn to play any particular equilibrium strategy. To model the (subjective) uncertainty of Bayesian players, as well as intentionally randomized behavior strategies, we will use probabilistic Turing machines (pTM's). Formally, a probabilistic Turing Machine (pTM) is a Turing Machine with access to both the ordinary input, output, and work tapes and an additional infinite tape initialized with random bits. To cut through the infinite regress and allow players to consistently reason about each others' strategies, we will allow all pTM's access to the same ``reflective'' oracle. In this section we show how pTM's model probability distributions (such as behavior strategies) and introduce reflective oracles. Combining these two ideas to construct pTM's with reflective oracle access, we establish some basic computability properties that lay the foundations for the rest of the paper. 

%-------------------------------%
\paragraph{Sampling from a pTM.}
%-------------------------------%
Probabilistic Turing machines are interpreted as computing conditional probabilities. For a pTM $T$, we define $\lambda_T(\alpha|x)$ to be the probability that on input $x$, $T$ produces the symbol $\alpha$ (and nothing else). Then we can define a semimeasure
\begin{equation}
    \lambda_T(x) = \prod_{t=1}^{|x|} \lambda_T(x_t | x_{<t})
\end{equation}
This is not in general a proper probability measure because there is some chance that the pTM does not halt or produces an invalid output. Later, we will find an interesting way to use reflective oracles to complete these semimeasures to probability measures.  

%-------------------------------%
\paragraph{Oracle access.}
%-------------------------------%
 Oracle access means that the pTM's can write a query on an oracle tape and enter a special state that queries the oracle, with the next transition depending on the oracle's output. We show in Appendix~\ref{sec:lscm_vs_pTM}, Theorem~\ref{thm:lscsm_from_pTM} that for any pTM $T$, $\lambda_T$ has l.s.c.\ conditionals. It is possible to invert this construction, and the other direction (Theorem~\ref{thm:pTM4lscsm}) works even when machines have access to oracles. We define $\lambda_T^O(\alpha|x)$ as the probability that the oracle pTM $T$ with access to $O$ returns $\alpha$ on input $x$. The semimeasure $\lambda_T^O$ is defined analogously to before. We will use the symbol $\nu$ to represent arbitrary semimeasures.

%-------------------------------%
\begin{theorem}[l.s.c.\ semimeasures vs pTM semimeasures]\label{thm:lscsm_vs_pTM}\hfil\par
%-------------------------------%
    A semimeasure $\nu$ has l.s.c.\ conditionals \emph{iff} there exists a pTM $T$ such that $\nu=\lambda_T$
\end{theorem}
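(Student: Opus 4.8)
The plan is to prove both directions of the equivalence, treating the ``pTM $\Rightarrow$ l.s.c.'' direction as essentially known (the excerpt already cites \cref{thm:lscsm_from_pTM} for it) and concentrating on the reverse construction. For the forward direction, given a pTM $T$, I would observe that the probability $\lambda_T(\alpha\mid x)$ that $T$ halts and outputs exactly $\alpha$ on input $x$ can be approximated from below by simulating $T$ for $k$ steps on all prefixes of the random tape of length $\le k$, summing the (dyadic, hence exactly computable) measure of those random-bit prefixes that cause $T$ to halt with output $\alpha$ within the step budget. This quantity is monotone nondecreasing in $k$ and converges to $\lambda_T(\alpha\mid x)$, so each conditional is l.s.c.; the semimeasure property $\nu(x)\ge\sum_{a}\nu(xa)$ follows from the product definition together with the fact that halting-and-outputting is a sub-probability event.

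For the harder reverse direction, the plan is: assume $\nu$ has l.s.c.\ conditionals witnessed by a computable $\phi(\alpha,x,k)\nearrow\nu(\alpha\mid x)$, and build a pTM $T$ that uses its random tape to sample $\alpha$ with exactly the right conditional probability. The standard device is to interpret the random tape as a real number $U\in[0,1]$ read bit by bit, and to carve $[0,1]$ into half-open subintervals of lengths $\nu(\alpha_1\mid x),\nu(\alpha_2\mid x),\ldots$ (in some fixed enumeration of $\cA$), outputting $\alpha_j$ when $U$ lands in the $j$th interval. The subtlety is that the interval endpoints $\sum_{j'<j}\nu(\alpha_{j'}\mid x)$ are only l.s.c., not exactly computable, so $T$ cannot decide interval membership by finite comparison in general. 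The fix is to run the lower approximations $\phi(\cdot,\cdot,k)$ for increasing $k$ concurrently while also reading more bits of $U$: at each stage $T$ holds a lower bound on each endpoint and an interval of uncertainty for $U$; it commits to output $\alpha_j$ as soon as the confirmed lower interval for $\alpha_j$ provably contains the confirmed location of $U$. Because the lower bounds converge to the true endpoints and $U$ almost surely avoids the (measure-zero) set of endpoints, $T$ halts and outputs $\alpha_j$ with probability exactly $\nu(\alpha_j\mid x)$; the leftover probability mass $1-\sum_j\nu(\alpha_j\mid x)$ corresponds to nonhalting, which is exactly the probability gap a semimeasure is allowed to have.

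I would then assemble the single-symbol sampler into the full semimeasure by having $T$, on input $x$, compute the conditional factors and verify that $\lambda_T(x)=\prod_{t=1}^{|x|}\lambda_T(x_t\mid x_{<t})$ reproduces $\nu(x)$; since $T$ samples each symbol with the correct conditional and the definition of $\lambda_T$ is exactly this product, the two semimeasures coincide. The main obstacle, and the place I would spend the most care, is the reverse direction's sampling argument: making rigorous that the concurrent ``dovetailing'' between refining the l.s.c.\ endpoint approximations and reading random bits yields a \emph{halting} output with probability precisely equal to the true conditional (no mass lost to spurious nonhalting beyond the intended gap, and no mass misassigned), and handling the measure-zero boundary case where $U$ coincides with an endpoint. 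The semimeasure (rather than measure) framing is what makes this clean, because it legitimizes the nonhalting probability gap instead of forcing us to account for every bit of mass.
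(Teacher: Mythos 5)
Your forward direction matches the paper's argument (simulate $T$ on all random-tape prefixes of bounded length and sum the dyadic masses), so the issue is entirely in the reverse direction, and it is a genuine gap. You allocate to each symbol $\alpha_j$ a single contiguous block $[L_j,R_j)$ whose endpoints are the \emph{true} partial sums $L_j=\sum_{j'<j}\nu(\alpha_{j'}|x)$, and you propose to certify $U\in[L_j,R_j)$ by dovetailing the lower approximations $\phi$ with reading bits of $U$. Certifying $U<R_j$ is indeed possible from below, but certifying $U\geq L_j$ requires an \emph{upper} bound on $L_j$, and from finitely many lower approximations the best upper bound valid for every consistent semimeasure is $1-\sum_{j'\geq j}\phi_{j'}(x,m)$, which converges not to $L_j$ but to $L_j+\delta$, where $\delta:=1-\sum_\alpha\nu(\alpha|x)$ is the defect. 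Hence for a strictly defective semimeasure --- exactly the case this theorem must cover, since defectiveness is what non-halting produces --- your machine can only ever commit to $\alpha_j$ when $U\geq L_j+\delta$, so for every $j\geq 2$ it outputs $\alpha_j$ with probability at most $\max(0,\nu(\alpha_j|x)-\delta)$: mass is lost beyond the intended non-halting gap, and $\lambda_T\neq\nu$. If instead you read ``provably contains'' as containment in the interval formed by the \emph{lower} bounds of both endpoints, you get positive-measure misassignment: with two symbols of true conditionals $\tfrac12,\tfrac12$ and approximations $\tfrac12-2^{-k}$, every $U\in[\tfrac14,\tfrac12)$ (which truly belongs to $\alpha_1$) is already captured at stage $k=2$ by $\alpha_2$'s lower-bound interval $[\tfrac14,\tfrac12)$. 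Your construction is sound only when $\nu(\cdot|x)$ is a proper measure, because only then are the cumulative endpoints estimable from both sides.

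The paper's construction avoids moving or unknowable endpoints altogether: instead of one block per symbol, it lays down the \emph{increments} $\Delta_\alpha(k)=\phi_\alpha(x,k)-\phi_\alpha(x,k-1)$ as successive intervals $I_1(1),\ldots,I_d(1),I_1(2),\ldots,I_d(2),\ldots$ in a fixed interleaved order. Symbol $\alpha$ then owns the countable union $\bigcup_{k'}I_\alpha(k')$, whose total length is exactly $\nu(\alpha|x)$, and --- crucially --- each individual $I_\alpha(k')$ has endpoints that are finite sums of already-computed quantities, known exactly when the interval is created and never revised afterwards. Membership of $0.\omega$ in such a union is semi-decidable (check whether $[0.\omega_{1:k},0.\omega_{1:k}+2^{-k})$ is covered by finitely many laid-down intervals), so $T$ halts with output $\alpha$ with probability exactly $\nu(\alpha|x)$, and the unallocated mass $\delta$ becomes non-halting, as intended. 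Your dovetailing instinct is right, but it must be applied to statically allocated increments, not to the uncomputable cumulative endpoints.
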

\begin{proof} \emph{sketch (detailed proof in Appendix~\ref{sec:lscm_vs_pTM})}:\\
($\Leftarrow$) That the conditionals of $\lambda_T$ are l.s.c.\ is rather straight-forward from their construction.\\
($\Rightarrow$) Let $\phi_\alpha(x,k)$ be computable and monotone increasing in $k$ converging to $\nu(\alpha|x)$ for $\alpha\in\cA=\{1,...,d\}$.
Consider a pTM $T$ implementing the following procedure:
Let $\Delta_\alpha(k):=\phi_\alpha(x,k)-\phi_\alpha(x,k-1)\geq 0$ with $\phi_\alpha(x,0):=0$.
Then chop 
\begin{align*}
  \text{successive intervals}~~~ & ~I_1(1),...,~I_d(1),~I_1(2),...,~I_d(2),~I_1(3),... \\
  \text{of lengths}~~~ & \Delta_1(1),...,\Delta_d(1),\Delta_1(2),...,\Delta_d(2),\Delta_1(3),...  
\end{align*}

from interval $[0;1)$. All-together these intervals cover $[0;\sum_\alpha \nu(\alpha|x))\subseteq[0;1)$.
Let $\omega_{1:\infty}$ be uniform random bits.
Let $T$ output $\alpha$ if $\exists k:[0.\omega_{1:k};0.\omega_{1:k}+2^{-k}) \subseteq \bigcup_{k'=1}^\infty I_\alpha(k')$.
For $0.\omega < \sum_\alpha \nu(\alpha|x)$, the condition can be tested effectively by running through $k=1,2,3,...$ while only finitely many $k'$ need to be checked.
The procedure terminates in finite time, since the interval on the l.h.s.\ shrinks to a point ($0.\omega$) for $k\to\infty$,
hence eventually is contained in some $I_\alpha(k')$.
This procedure outputs $\alpha$ with probability $\nu(\alpha|x)$, 
since 
$$
  \textstyle P[0.\omega\in\bigcup_{k'=1}^\infty I_\alpha(k')] 
  ~=~ |\bigcup_{k'=1}^\infty I_\alpha(k')|
  ~=~ \lim_{k\to\infty} \sum_{k'=1}^k \Delta_\alpha(k')
  ~=~ \lim_{k\to\infty}\phi_\alpha(x,k) ~=~ \nu(\alpha|x)
$$

For $0.\omega \geq \sum_\alpha \nu(\alpha|x)$ no $k$ is found, and $T$ runs forever with no output (which is fine).
\end{proof}
%

%-------------------------------%
\paragraph{$O$-sampled conditionals.}
%-------------------------------%
Without $O$ access, \cref{thm:lscsm_vs_pTM} shows that a semimeasure has l.s.c.\ conditionals iff it is sampled by a pTM. However, because the oracle $O$ may be probabilistic, it is not clear that the $\Leftarrow$ direction still holds with oracle access. Therefore, we will avoid Leike et al.'s \cite{leike_formal_2016} potentially misleading terminology ``l.s.c.\ with oracle access'' for these semimeasures. Instead, we will say that a semimeasure $\mu$ has $O$-sampled conditionals (or ``is $O$-sampled'' for brevity) if there is a pTM $T$ such that $\mu(\alpha|x) = \lambda_T^O(\alpha|x)$ for $\alpha \in \cA$.

%-------------------------------%
\paragraph{$O$-estimable conditionals.}
%-------------------------------%
Following the convention set by ``$O$-sampled'' conditionals, we will use the term ``$O$-estimable'' conditionals to refer to semimeasures that have conditional probabilities estimable with $O$ access. When it is clear from context we will drop the word ``conditionals.''

%-------------------------------%
\paragraph{Formalizing oracles.}
%-------------------------------%
For our purposes, oracles always answer queries with 0 or 1 (which can be interpreted as false or true). Because they are allowed to (independently) randomize their answers on queries, an oracle's behavior is specified by its probability of answering 1. This means we can treat oracles as functions to the unit interval. 
%-------------------------------%
\begin{definition}[reflective oracle]\label{def:refl_oracle}
%-------------------------------%
An oracle $O : \cT \times \cA^* \times (\mathbb{Q} \cap [0,1]) \times \cA \rightarrow [0,1]$ is called reflective iff for each pTM $T$ and string $x \in \Sigma^*$, $\exists \{ q_\alpha \}_{\alpha \in \cA}$ satisfying the following properties:
\begin{equation} \label{eq:q_on_simplex}
    \sum_{\alpha \in \cA} q_\alpha = 1
\end{equation}
And for all $\alpha \in \cA$ and $p\in\mathbb{Q}$,\vspace{-2ex}
\[
\lambda_T^O(\alpha|x) \leq q_\alpha \leq 1 - \sum_{\beta \neq \alpha} \lambda_T^O(\beta|x)
\]
\[
O_\alpha(T,x,p) = 1 ~~~\text{for}~~~ p < q_\alpha
\]
\[
O_\alpha(T,x,p) = 0 ~~~\text{for}~~~ p > q_\alpha
\]    
\end{definition}
This is the same as \cref{def:reflective_oracle} restated to take advantage of our $\lambda^O_T$ notation. We will often abbreviate ``reflective oracle'' as \rO.

We will reserve the notation $O_\alpha(T,x,p) \rightarrow 0$ (respectively 1) for the event that reflective oracle $O_\alpha$ called on the query $(T,x,p)$ yields response 0 (respectively 1). This occurs with probability $O_\alpha(T,x,p)$ by definition, so ``calling'' $O_\alpha(T,x,p)$ is equivalent to invoking $\flip(O_\alpha(T,x,p))$ where $\flip(p)$ is a function that returns 1 with probability $p$ and 0 with probability $1-p$.  

Because of equation \cref{eq:q_on_simplex}, $q_\alpha$ can be viewed as a conditional probability assignment for each symbol $\alpha\in\cA$. When $\lambda^O_T$ is a measure, the query $(T,x,p)$ can be viewed as asking the question ``is (case 0) $p > \lambda^O_T(\alpha|x)$ or (case 1) $p < \lambda^O_T(\alpha|x)$?" then $O$'s answers are consistent with $q_\alpha = \lambda^O_T$; always 1 when $p < q_\alpha$ and 0 when $p > q_\alpha$, but allowed to randomize when $p = q_\alpha$ exactly.
This randomization means that $q_\alpha$ cannot be determined exactly and avoids diagonalization.
When $\lambda^O_T$ is only a defective semimeasure, its conditionals do not sum to 1 so cannot satisfy equation \cref{eq:q_on_simplex}, which means that $q_\alpha \neq \lambda^O_T(\alpha|x)$; however the definition requires at least $q_\alpha \geq \lambda^O_T(\alpha|x)$. This means that $O$ ``redistributes" the non-halting probability mass of $T^O$ and completes $\lambda^O_T$ to a measure. The requirement $q_\alpha \leq 1 - \sum_{\beta \neq \alpha} \lambda_T^O(\beta|x)$ is actually redundant because it follows from $q_\alpha \geq \lambda^O_T(\alpha|x)$ and $\sum_{\alpha \in \cA} q_\alpha = 1$. The existence of reflective oracles on non-binary alphabets is proven in \cref{sec:non_binary_refl_oracle_existence}.

Fallenstein et.\ al.\ originally defined reflective oracles for a binary alphabet in an analogous way \cite{fallenstein_reflective_2015}. Leike \cite{leike_formal_2016} used a more general definition which allowed $O$ to randomize arbitrarily in the entire range $\lambda_T^O(\alpha|x)$ to $1 - \sum_{\beta \neq \alpha} \lambda_T^O(\beta|x)$. Leike's general definition slightly simplifies his proof of limit-computability, but is only specified for the binary case and is not easy to directly extend to the non-binary case. When it is necessary to distinguish between the two cases we will call reflective oracles satisfying Fallenstein et.\ al.'s and our stricter definition ``step reflective oracles.'' 

Let $\bar\lambda_T^O$ be the completion of $\lambda_T^O$ by a reflective oracle $O$, with $\bar\lambda_T^O(\alpha|x) = q^O_{\alpha,T,x}$, where $q^O_{\alpha, T,x}=q_\alpha$ as defined in \cref{def:refl_oracle}. This is a properly normalized probability measure by equation \cref{eq:q_on_simplex}. Note that $\bar\lambda$ is a function of $O$ and $T$ (producing a measure). The completion (bar) is not applied as an operator to $\lambda_T^O$, because many different pTM's may produce the same semimeasure which can be completed in different ways. For example when $T$ does not make oracle calls and $\lambda_T^O = \lambda_T$ is defective (say, 0 everywhere) it can be completed arbitrarily with appropriate choice of \rO\ as mentioned in Appendix~\ref{sec:non_binary_refl_oracle_existence}.
%
%-------------------------------%
\begin{theorem}[properties of $\bar\lambda_T^O$]\label{thm:bar_lambda_est_properties}
%-------------------------------%
    For any pTM $T$, $\bar\lambda_T^O$ is an $O$-estimable probability measure. In particular, there is an oracle pTM $B_T$ 
    \ estimating $\bar\lambda_T^O$ that is computably constructable from $T$.
\end{theorem}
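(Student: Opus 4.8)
The plan is to separate the two claims. That $\bar\lambda_T^O$ is a \emph{probability measure} is already immediate: by \cref{eq:q_on_simplex} the completed conditionals $\bar\lambda_T^O(\alpha|x)=q^O_{\alpha,T,x}$ sum to one over $\alpha\in\cA$ for every $x$, so no mass is lost and $\bar\lambda_T^O$ is properly normalized. The substance is the $O$-estimability of these conditionals, and the idea I would use is to read the threshold $q^O_{\alpha,T,x}$ directly off the oracle by binary search, \emph{without ever estimating $\lambda_T^O$ itself}. This is precisely what the \emph{step} form of \cref{def:refl_oracle} buys us: for fixed $T,x,\alpha$ the query response is a deterministic step function of $p$, flipping from $1$ to $0$ at $q^O_{\alpha,T,x}$.

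Concretely, fix $\alpha$, $x$, and a precision $1/k$. The machine $B_T$ maintains an interval $[L,U]$, initialized to $[0,1]$, with the invariant $q^O_{\alpha,T,x}\in[L,U]$; it repeatedly queries $O_\alpha(T,x,m)$ at the dyadic midpoint $m=(L+U)/2$ and updates $L:=m$ on response $1$, or $U:=m$ on response $0$. By \cref{def:refl_oracle}, a response of $1$ can only occur when $m\leq q^O_{\alpha,T,x}$ and a response of $0$ only when $m\geq q^O_{\alpha,T,x}$, so the invariant is preserved and the interval halves each round; after $\lceil\log_2 k\rceil$ rounds its width is below $1/k$ and its midpoint is within $1/k$ of $q^O_{\alpha,T,x}$. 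Running this once per symbol estimates every conditional, and estimating the $|x|$ factors of $\bar\lambda_T^O(x)=\prod_{t=1}^{|x|}\bar\lambda_T^O(x_t|x_{<t})$ each to precision $1/(k|x|)$ yields the whole measure value to precision $1/k$ via $\big|\prod_t a_t-\prod_t b_t\big|\leq\sum_t|a_t-b_t|$ for factors in $[0,1]$. The procedure hard-codes (the index of) $T$ into its oracle queries and is otherwise uniform, so the map $T\mapsto B_T$ is computable, which gives the ``computably constructable from $T$'' clause.

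The one delicate point, which I would treat carefully, is the boundary case $m=q^O_{\alpha,T,x}$: here \cref{def:refl_oracle} permits the oracle to return a random bit, so $B_T$'s output is not a deterministic function of its input. I would argue that this does not threaten estimability. Whichever bit is returned at such a midpoint, the inequality it certifies ($m\leq q^O_{\alpha,T,x}$ or $m\geq q^O_{\alpha,T,x}$) holds with equality, so $q^O_{\alpha,T,x}$ remains in the shrinking interval and the final estimate lands within $1/k$ of the true value \emph{with probability one}. Thus the accuracy bound is deterministic even though the particular value emitted may be random, which is exactly what $O$-estimability asks for; and since bisection never revisits a midpoint, the independence of repeated oracle draws is never invoked. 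Everything else — the invariant, termination, and the product error bound — is routine.
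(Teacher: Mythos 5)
Your proposal is correct and takes essentially the same route as the paper's proof: a binary search for $q^O_{\alpha,T,x}$ driven by oracle queries, whose interval halves each round, with the same observation that randomization at the exact threshold $p=q^O_{\alpha,T,x}$ cannot push the true value out of the interval, so the estimate is always within the required precision. You simply spell out details the paper leaves implicit (normalization via \cref{eq:q_on_simplex}, the product error bound for $\bar\lambda_T^O(x)$, and uniformity of the map $T\mapsto B_T$), which is fine.
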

\begin{proof}
    Given any reflective oracle $O$, for each pTM $T$, and string $x$ there are particular (clearly unique) values $q_{\alpha, T,x}^O$ satisfying the above requirements for $q_\alpha$. There is a pTM $B_T$ with $O$ access that conducts a binary search for $q_{\alpha, T, x}^O$ by using queries to $O$ to determine whether each $p$ is above or below $q_{\alpha, T,x}^O$. This process may behave stochastically if $q_{\alpha, T,x}^O$ itself is ever a query, but the limit is always correct. Since the range of possible values for $q_{\alpha, T, x}^O$ halves with each query, it is $O$-estimable. 
\end{proof}
Notably, our procedure for estimating $\bar{\lambda}^O_T$ does not involve simulating $T$ as in the procedure to l.s.c.\ $\lambda_T$ (which does not work with oracle access), but only uses the description of $T$ to run the binary search $B_T$. This is related to $\lambda^O_T$ \emph{only} because reflectivity of $O$ leads to $\bar{\lambda}^O_T \geq \lambda^O_T$.\\

%-------------------------------%
\paragraph{Reflective Oracles and Diagonalization.}
\label{ex:diagonalization-for-reflective-oracles}
%-------------------------------%

The following example is from \cite{fallenstein_reflective_2015}. Let $T \in \mathcal{T}$ be a probabilistic Turing machine with a two symbol output alphabet $\cA = \{\alpha, \beta\}$ that
outputs $\beta$ if $O_\alpha(T, \epsilon, 1/2) \rightarrow 1$ and $\alpha$ if $O_\alpha(T, \epsilon, 1/2) \rightarrow 0$. 
($T$ can know its own source code by quining~\cite[Thm.~27]{Kleene1952-KLEITM}).
In other words, $T$ queries the oracle about whether it is more likely
to output $\alpha$ or not, and then does whichever the oracle says is less likely.
In this case we can use an oracle $O_\alpha(T, \epsilon, 1/2) := 1/2$
(answer $0$ or $1$ with equal probability),
which implies $\lambda_T^O(\alpha | \epsilon) = \lambda_T^O(\beta | \epsilon) = 1/2$,
so the conditions of Definition~\ref{def:refl_oracle} are satisfied.
In fact, for this machine $T$ we must have
$O_\alpha(T, \epsilon, 1/2) = 1/2$ for all reflective oracles $O$.

%
%-------------------------------%
\begin{theorem}[pTM for $\bar{\lambda}^O_T$] \label{thm:bar_lambda_O_sampled}
%-------------------------------%
    For any reflective oracle $O$, all $O$-estimable semimeasures are $O$-sampled. In particular, for any pTM $T$, $\bar\lambda_T^O$ is $O$-sampled. 
\end{theorem}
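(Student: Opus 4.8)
The plan is to first reduce the ``in particular'' clause to the general statement: \cref{thm:bar_lambda_est_properties} already gives that $\bar\lambda_T^O$ is an $O$-estimable measure, so it suffices to show that every $O$-estimable semimeasure $\mu$ is $O$-sampled, i.e.\ to build an oracle pTM $T'$ with $\lambda_{T'}^O(\alpha\mid x)=\mu(\alpha\mid x)$ for every $\alpha\in\cA$ and every $x$. I would reuse the interval-chopping idea from the $(\Rightarrow)$ direction of \cref{thm:lscsm_vs_pTM}, recast as inverse-CDF sampling so that two-sided error control can be exploited. Fix $x$, write $\cA=\{1,\dots,d\}$, and set the cumulative masses $c_0:=0$ and $c_\alpha:=\sum_{\beta\le\alpha}\mu(\beta\mid x)$, so that $c_d\le 1$ because $\mu$ is a semimeasure.

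The machine $T'$ reads uniform random bits $\omega_1\omega_2\dots$ from its random tape, interprets them as $U=0.\omega\in[0,1)$, and searches for a certificate that $U$ lies strictly inside one of the slots $(c_{\alpha-1},c_\alpha)$. Let $\phi_\alpha^O(x,k)$ be the $O$-estimator of $\mu(\alpha\mid x)$ guaranteed by $O$-estimability, so that $|\phi_\alpha^O(x,k)-\mu(\alpha\mid x)|<1/k$ holds \emph{for every realization of the oracle's coin flips}; then $\hat c_\alpha^{(k)}:=\sum_{\beta\le\alpha}\phi_\beta^O(x,k)$ satisfies $|\hat c_\alpha^{(k)}-c_\alpha|<d/k$. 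Dovetailing over $\alpha$ and over budgets $k,m$, the machine halts and outputs $\alpha$ as soon as it finds $k,m$ with $[0.\omega_{1:m},\,0.\omega_{1:m}+2^{-m})\subseteq(\hat c_{\alpha-1}^{(k)}+d/k,\ \hat c_\alpha^{(k)}-d/k)$.

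Correctness splits into soundness and termination. Soundness: the two-sided bound gives $c_{\alpha-1}<\hat c_{\alpha-1}^{(k)}+d/k$ and $\hat c_\alpha^{(k)}-d/k<c_\alpha$, so any successful certificate forces the whole $U$-interval into the open slot $(c_{\alpha-1},c_\alpha)$; as the open slots are pairwise disjoint, only the correct $\alpha$ can ever be certified, no matter how $O$ randomizes, which rules out wrong outputs pointwise in the oracle's coin flips. Termination: whenever $U$ is strictly interior to some $(c_{\alpha-1},c_\alpha)$, the slack $d/k\to 0$ makes the safe interval $(\hat c_{\alpha-1}^{(k)}+d/k,\hat c_\alpha^{(k)}-d/k)$ contain $(c_{\alpha-1}+2d/k,\,c_\alpha-2d/k)$, which for large enough $k$ contains $U$ with positive margin, after which a small enough $2^{-m}$ seals the containment; the required $k,m$ depend only on $U$ and the true masses, so a certificate is found with certainty on this event. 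Consequently $T'$ outputs $\alpha$ exactly on $\{U\in(c_{\alpha-1},c_\alpha)\}$, an event depending only on $T'$'s own bits and of probability $c_\alpha-c_{\alpha-1}=\mu(\alpha\mid x)$; on the residual gap $\{U\in[c_d,1)\}$ (and the measure-zero boundary) no certificate exists and $T'$ runs forever, correctly realizing the missing mass $1-c_d$. Thus $\lambda_{T'}^O(\alpha\mid x)=\mu(\alpha\mid x)$, and specializing to $\mu=\bar\lambda_T^O$ gives the theorem.

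The one genuine obstacle is precisely the feature absent from \cref{thm:lscsm_vs_pTM}: because $O$ is probabilistic, $\phi_\alpha^O(x,k)$ is a random variable and cannot be treated as a fixed approximation, so monotone l.s.c.\ reasoning breaks down. The fix---and the reason the hypothesis must be $O$-estimable rather than merely one-sided---is that the \emph{two-sided} bound holds per oracle realization, letting $T'$ commit only on provably safe certificates; soundness is then a pointwise statement in the oracle randomness, so the output law is pinned entirely by $T'$'s own uniform bits. I would expect the only delicate bookkeeping to be checking that re-invoking $\phi^O$ with fresh oracle randomness at each budget $k$ does not spoil the per-realization error bound, which it does not, since that bound is assumed to hold on every run.
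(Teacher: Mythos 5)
Your proposal is correct, and it reaches the theorem by a genuinely different route than the paper. The paper's proof is a two-step reduction: it observes that an estimable function is l.s.c.\ (take lower ends of the interval estimates), and then feeds this monotone lower approximation---instantiated with the binary-search machine $B_T$ from \cref{thm:bar_lambda_est_properties}---into the already-established sampling routine of \cref{thm:pTM4lscsm}, i.e.\ \cref{alg:sample}, concluding $\bar\lambda_T^O = \lambda_{S(B_T,\cdot)}^O$. You instead build the sampler from scratch: rather than converting to monotone lower approximations, you use the two-sided error bound directly, certifying that the dyadic interval around $U=0.\omega$ lies strictly inside an estimated slot shrunk by the error bar $d/k$. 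Both constructions realize the same inverse-CDF idea of mapping the random tape into sub-intervals of $[0,1)$ whose lengths are the conditional probabilities, but your certificate formulation buys something real: soundness is pointwise in the oracle's coin flips and requires no consistency or monotonicity between estimates obtained at different budgets $k$, which is exactly the bookkeeping that becomes delicate when each invocation of the estimator draws fresh oracle randomness. The paper's route is shorter because it reuses \cref{alg:sample}, but it silently relies on the estimable-to-l.s.c.\ conversion and the interval chopping remaining valid per oracle realization---the very subtlety that led the paper to distinguish ``$O$-sampled'' from ``l.s.c.\ with oracle access,'' and which your write-up confronts explicitly. Two further points in your favor: you handle the semimeasure deficit explicitly (running forever on $\{U \in [c_d,1)\}$), where the paper's prose slips into speaking of probability measures, and your reading of $O$-estimability as a per-realization two-sided bound is exactly what the binary search of \cref{thm:bar_lambda_est_properties} delivers, so the specialization to $\mu = \bar\lambda_T^O$ goes through as you claim.
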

\begin{proof}
    Any estimable function is also l.s.c.\ since the lower bound of the estimate can be used as the approximation from below. This means that the sampling algorithm (\cref{alg:sample}) can be used to sample from any $O$-estimable probability measure (in this case halting with probability 1). Therefore, $\bar\lambda_T^O$ is also $O$-sampled; assuming that pTM $S$ implements the sampling algorithm and accepts its argument $\phi_\alpha$ in the form of a pTM encoding, and that the binary search pTM $B_T$ returns the low end of its interval estimates, $\bar\lambda_T^O = \lambda_{S(B_T, \cdot)}^O$.
\end{proof}
More succinctly, ``$O$-estimable conditionals'' implies ``$O$-sampled conditionals". The converse does not hold because a semimeasure is not necessarily equal to its completion, but the converse does hold for probability measures. See Appendix \ref{sec:general_refl_oracle_comp} for proofs using Leike et al.'s definition.

%-------------------------------%
\begin{lemma}[all estimable measures $O$-sampled]\label{lemma:est_O_sampled}
%-------------------------------%
    For any (joint) estimable measure $\nu$ there exists one pTM $T$ such that $\nu = \bar{\lambda}^O_T$ regardless of the choice of reflective oracle $O$. 
\begin{proof}
    This theorem is a stronger version of \cref{thm:bar_lambda_O_sampled} that applies when $\nu$ is estimable without oracle access, but requires only $\nu(x)$ to be estimable, not the conditional $\nu(\cdot|x)$. Let $T$ sample its output $\alpha$ from an estimate of $\nu(x\alpha)/\nu(x)$. This is estimable except when $\nu(x)=0$, so for $\nu(x) \neq 0$, $\lambda_T^O(\alpha|x) = \nu(\alpha|x)$. When $\nu(x) = 0$, $T$ may never halt and $O$ completes $\lambda^O_T(\cdot|x)$ in some arbitrary way. This only affects conditionals for strings that already have probability 0 so the product defining $\bar{\lambda}^O_T$ still assigns all continuations probability 0 and $\bar{\lambda}^O_T = \nu$. 
 \end{proof}
\end{lemma}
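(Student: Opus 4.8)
The plan is to exhibit a single oracle-free pTM $T$ whose conditionals reproduce $\nu$ wherever $\nu$ is positive, and then to argue that the only strings on which $\lambda_T$ is defective---hence the only places the completion by $O$ retains any freedom---have $\nu$-measure zero and contribute vanishing factors to the product $\bar\lambda_T^O(x)=\prod_t\bar\lambda_T^O(x_t|x_{<t})$. Concretely, on input $x$ I would first run the monotone lower approximation to $\nu(x)$ (which exists since estimable functions are l.s.c.) and wait for it to exceed some $\ell>0$; this halts exactly when $\nu(x)>0$ and loops forever otherwise. Once $\nu(x)\geq\ell>0$ is confirmed the denominator is bounded away from zero, so each ratio $\nu(x\alpha)/\nu(x)$ is estimable to prescribed precision, and feeding these estimates into the interval-chopping sampler of \cref{thm:lscsm_vs_pTM} makes $T$ output $\alpha$ with probability exactly $\nu(\alpha|x)=\nu(x\alpha)/\nu(x)$. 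Since $\nu$ is a measure, $\sum_\alpha\nu(x\alpha)=\nu(x)$, so these conditionals sum to $1$ and $T$ halts with probability $1$ whenever $\nu(x)>0$.

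Where the conditionals sum to $1$ the reflective bounds of \cref{def:refl_oracle} squeeze $q_\alpha$ from both sides, $\nu(\alpha|x)\leq q_\alpha\leq 1-\sum_{\beta\neq\alpha}\nu(\beta|x)=\nu(\alpha|x)$, forcing $\bar\lambda_T^O(\alpha|x)=\nu(\alpha|x)$ for every $O$; so on the positive-measure part of the tree the completion coincides with $\nu$ and is $O$-independent. To finish I would verify $\bar\lambda_T^O(x)=\nu(x)$ for all $x$. If $\nu(x)>0$ then, by monotonicity of the measure, every prefix has positive measure, every factor equals $\nu(x_t|x_{<t})$, and the telescoping product is $\nu(x)$. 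If $\nu(x)=0$, let $s$ be the least index with $\nu(x_{\leq s})=0$; then $\nu(x_{<s})>0$, so the preceding squeeze applies at step $s$ and gives $\bar\lambda_T^O(x_s|x_{<s})=\nu(x_{\leq s})/\nu(x_{<s})=0$, a single vanishing factor that forces $\bar\lambda_T^O(x)=0=\nu(x)$ no matter how $O$ completes the non-halting conditionals at the later null prefixes.

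I expect the crux to be the construction of $T$ with the correct halting behaviour in the absence of any a priori lower bound on $\nu(x)$: the key observation is that semideciding $\nu(x)>0$ by waiting for a positive lower bound simultaneously supplies the denominator bound needed to estimate $\nu(\alpha|x)$ to arbitrary precision \emph{and} produces exactly the looping behaviour on $\nu$-null strings. That looping is what lets the ``first zero factor'' argument guarantee that the arbitrary $O$-completions on null strings never leak into $\bar\lambda_T^O$, which is precisely the $O$-independence the statement demands and the feature distinguishing this lemma from \cref{thm:bar_lambda_O_sampled}.
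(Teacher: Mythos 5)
Your proposal is correct and follows essentially the same route as the paper's proof: sample from estimates of $\nu(x\alpha)/\nu(x)$ with an oracle-free pTM, note the ratio is estimable exactly when $\nu(x)>0$, and argue that the arbitrary $O$-completion on $\nu$-null prefixes cannot affect $\bar\lambda_T^O$ because a zero factor already occurs at the first null prefix. You merely make explicit two steps the paper leaves implicit---the semidecision of $\nu(x)>0$ supplying the denominator bound, and the squeeze $\nu(\alpha|x)\leq q_\alpha\leq 1-\sum_{\beta\neq\alpha}\nu(\beta|x)$ forcing $O$-independence---which is a faithful elaboration rather than a different argument.
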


The original construction of $O$ in \cite{fallenstein_reflective_2015} involved a non-constructive fixed-point argument implicitly invoking a continuous ``hierarchy'' of oracles. It looked like $\bar{\lambda}^O_T$ may not even be expressible within the arithmetic hierarchy. Surprisingly, we can choose $O$ so that $\bar{\lambda}^O_T$ is limit-computable (without requiring $O$ access, instead limit-computing $O$) by the following result:
%
%-------------------------------%
\begin{theorem}[a limit-computable reflective oracle {\cite[Thm.6]{leike_formal_2016}}] \label{thm:O_limit_comp}\hfill\par 
%-------------------------------%
    \noindent There is a limit-computable (binary alphabet) reflective oracle.
\end{theorem}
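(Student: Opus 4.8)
The plan is to exhibit the set of reflective oracles as an effectively closed (``co-semidecidable'') subset of a product of rational intervals, and then extract a limit-computable member as the leftmost surviving branch of a finitely-branching refutation tree. Since the alphabet is binary, an oracle is determined by its answers $O_\alpha(T,x,p)$ on the countably many rational queries $(T,x,p,\alpha)$, so I would fix a computable enumeration of these queries and view an oracle as a point in $[0,1]^{\mathbb N}$. A \emph{finite partial oracle} assigns rational values (on a rational grid) to the first $m$ queries; ordered by extension, these form a finitely-branching tree.

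The engine of the construction is a conservative, computable lower bound on $\lambda_T^O$ that is valid for \emph{every} extension of a given partial oracle, which is what breaks the self-referential circularity. Given a finite partial oracle $\hat O$, I simulate $T$ on input $x$ for $s$ steps but follow only those computation branches all of whose oracle queries already lie in the domain of $\hat O$; summing the probability of such branches that halt with output $\alpha$ yields a number $\underline\lambda^{\hat O,s}_T(\alpha|x)$ which lower-bounds $\lambda_T^{O}(\alpha|x)$ for any reflective $O \supseteq \hat O$, since every omitted branch contributes nonnegatively. I then declare $\hat O$ \emph{refuted by stage $s$} if these bounds, together with the answers recorded in $\hat O$, already contradict a constraint of \cref{def:refl_oracle}; for instance $\hat O$ records $O_\alpha(T,x,p)=1$ (forcing the implicit threshold $q_\alpha>p$) yet $\underline\lambda^{\hat O,s}_T(\beta|x)\ge 1-p$ for the other symbol $\beta$, which is incompatible with $q_\alpha\le 1-\lambda_T^O(\beta|x)$; symmetrically for the answer $0$ and for monotonicity of the threshold in $p$. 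This relation is computable, monotone in $s$, and---the key point---a restriction of a genuine reflective oracle is never refuted, because all its constraints hold exactly.

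With refutation in hand I would carry out the standard $\Pi^0_1$-class extraction. The existence theorem of \cref{sec:non_binary_refl_oracle_existence} supplies at least one reflective oracle; rounding its rational-query answers onto the grid and restricting to the first $m$ queries yields an unrefuted node for every $m$, so the tree of unrefuted partial oracles is infinite and finitely branching. By K\"onig's lemma it has an infinite path, and the \emph{leftmost} such path is well defined. I define $\phi(\cdot,k)$ to output the leftmost length-$k$ node not refuted within $k$ steps. Because refutations are permanent, the leftmost unrefuted node of each fixed length eventually stabilizes, so $\phi(\cdot,k)$ converges pointwise to the leftmost infinite path; that limit satisfies every reflectivity constraint and is therefore a reflective oracle, giving limit-computability.

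The hard part is twofold. First is the fixed-point character of \cref{def:refl_oracle}: $\lambda_T^O$ depends on the very oracle being built, so one cannot simulate $T$ against ``the oracle''---routing all self-reference through branch-restricted simulation is what keeps every estimate a valid lower bound across all extensions. Second is the contrapositive driving convergence, namely that \emph{any} non-reflective oracle is eventually refuted: this requires checking that each violated constraint is witnessed by the finite bounds (using $\lambda_T^O=\sup_s \underline\lambda^{O,s}_T$) and that the thresholds $q_\alpha$, possibly irrational, are adequately pinned down by rational-query answers alone. The grid rounding must be fine enough that some genuine reflective oracle survives at every level, which is exactly what keeps the tree nonempty and the leftmost path reflective.
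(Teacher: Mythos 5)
Your overall architecture---finite partial oracles on a rational grid, a conservative finitely-checkable refutation test, an effective search, and a pointwise limit---is the same general strategy as the paper's construction in \cref{sec:limit_computability}, but two of your specific choices break the argument, and one of them is fatal. First, your refutation test is only sound for oracles that \emph{exactly} extend a node's recorded values: the branch-restricted simulation multiplies together the grid values stored in $\hat O$, so when the true reflective oracle's value at a query differs from the recorded grid value, some branches have their probabilities \emph{over}-estimated (e.g.\ a branch following answer $0$ when the recorded value lies below the true one), and then $\underline\lambda^{\hat O,s}_T$ can exceed $\lambda^O_T$. Consequently the rounding of a genuine reflective oracle onto your grid can be falsely refuted and your tree can die out; your justification ``every omitted branch contributes nonnegatively'' addresses only omitted branches, not perturbed ones. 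The paper's construction avoids this by shaving probability mass: $T^{\tilde O}$ answers $1$ with probability $\tilde O_\alpha(\cQ_i)-2^{-k}$ and $0$ with probability $1-\tilde O_\alpha(\cQ_i)-2^{-k}$ (halting otherwise), so every branch probability is under-estimated for \emph{every} oracle within $2^{-k-1}$ of the node's values; this is exactly what makes \cref{lambda_bounds} and \cref{lemma:approx_refl_implies_partial_refl} go through.

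Second, and fatally: in your tree a value assigned to a query is never revised, so every infinite path is an oracle all of whose values are grid rationals; but there are machines for which \emph{every} reflective oracle must take an irrational value at a rational query, so no path of your tree is reflective and the leftmost-path limit cannot be one either. Concretely, let $T^*$ (using Kleene's recursion theorem to know its own code) make two independent calls to $O_\alpha(T^*,\epstr,1/2)$ and output $\beta$ iff both return $1$, else output $\alpha$. Writing $v=O_\alpha(T^*,\epstr,1/2)$, the machine halts almost surely, so reflectivity forces $q_\alpha=\lambda^O_{T^*}(\alpha|\epstr)=1-v^2$. If $q_\alpha>1/2$ then $v=1$, giving $q_\alpha=0$, a contradiction; if $q_\alpha<1/2$ then $v=0$, giving $q_\alpha=1$, a contradiction; hence $1-v^2=1/2$, i.e.\ $v=1/\sqrt{2}$ in every reflective oracle. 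This is precisely why the paper's notion of extension (\cref{def:extending_partial_O}) lets every previously assigned value drift by $2^{-k-1}$ per level---so that limits can be arbitrary reals---and why its search runs over a DAG with backtracking, with convergence obtained from finiteness of backtracking (and the limit shown reflective in \cref{lemma:limit_is_refl_oracle}), rather than over a tree via K\"onig's lemma and a leftmost path. Repairing your plan requires importing both devices, at which point it essentially becomes the paper's proof.
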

We show in \cref{thm:non_binary_limit_comp_refl_oracle} that there are also limit-computable non-binary alphabet reflective oracles.

%%%%%%%%%%%%%%%%%%%%%%%%%%%%%%%%%%%%%%%%%%%%%%%%%%%%%%%%%%%%%%%
\section{Multi-Player Games}\label{sec:multiagents}
%%%%%%%%%%%%%%%%%%%%%%%%%%%%%%%%%%%%%%%%%%%%%%%%%%%%%%%%%%%%%%%
Now we are ready to formally define multi-player games and strategies. We will show how multi-player games give rise to a subjective environment for each player. We refer to Bayes-optimal strategies in a subjective environment as Bayesian strategies in the associated multi-player game. Next we use the computability results established in \cref{sec:refl_oracles} to introduce reflective-oracle computable strategies and show they are effectively enumerable, which prepares us to describe players' beliefs with Bayesian mixture strategies. Together these results allow us to describe Bayesian players who believe that strategies are \rOc. 
%%%%%%%%%%%%%%%%%%%%%%%%%%%%%%%%%%%%%%%%%%%%%%%%%%%%%%%%%%%%%%%
\subsection{Definitions}\label{multiagent_defs}
%%%%%%%%%%%%%%%%%%%%%%%%%%%%%%%%%%%%%%%%%%%%%%%%%%%%%%%%%%%%%%%

We define multi-player games following \cite[Sec.7.3]{leike_formal_2016}:
In a multi-player game, $n$ players take sequential actions from $\cA$ independently and in parallel. In step $t$, the game receives a vector of actions $a_t\in\cA^n$ where action $a_t^i\in\cA$ corresponds to player $i$. The history of actions including $a_t$ determines a stochastic ``move by nature'' containing an $n$ percept vector $e_t\in\cE^n$ where player $i$ only sees $e_t^i\in\cE$. Players can only see their own actions (though of course the percept might include the other players' actions in some games). As before, $e_t^i = o_t^ir_t^i$ where $r_t^i \in [0,1]$ is a reward.

%-------------------------------%
\begin{figure}[t]
%-------------------------------%
\begin{center}
\begin{tikzpicture}[scale=0.25,line width=1pt] % scaling everything by 1/4
% agent 1
\draw (0,16) -- (10,16) -- (10,20) -- (0,20) -- (0,16);
\node at (5,18) {agent $\pi_1$};

% agent 2
\draw (0,10) -- (10,10) -- (10,14) -- (0,14) -- (0,10);
\node at (5,12) {agent $\pi_2$};

% dots
\node at (5,7) {\vdots};

% agent n
\draw (0,0) -- (10,0) -- (10,4) -- (0,4) -- (0,0);
\node at (5,2) {agent $\pi_n$};

% environment
\draw (18,0) -- (30,0) -- (30,20) -- (18,20) -- (18,0);
\node at (24,10) {\begin{minipage}{22mm}
multi-player \\ game $\sigma$
\end{minipage}};

% Interaction
\draw[->] (10,18.5) to node[above] {$a_t^1$} (18,18.5);
\draw[<-] (10,17.5) to node[below] {$e_t^1$} (18,17.5);
\draw[->] (10,12.5) to node[above] {$a_t^2$} (18,12.5);
\draw[<-] (10,11.5) to node[below] {$e_t^2$} (18,11.5);
\draw[->] (10,2.5) to node[above] {$a_t^n$} (18,2.5);
\draw[<-] (10,1.5) to node[below] {$e_t^n$} (18,1.5);
\end{tikzpicture}
\end{center}
\caption[The multi-agent model]{%
Agents $\pi_1, \ldots, \pi_n$ interacting in a multi-player game.
}
\label{fig:multi-agent-model}
\end{figure}
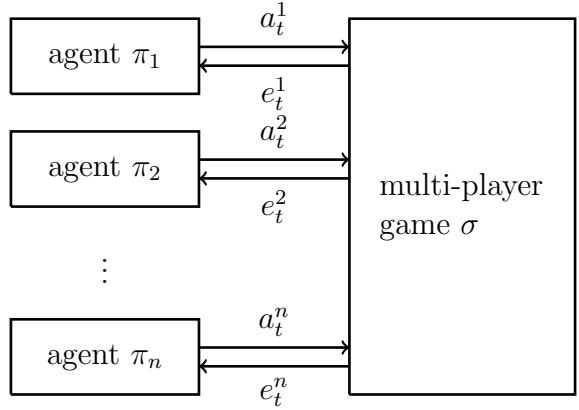

Formally,

%-------------------------------%
\begin{definition}[multi-player game]\label{def:mp_game}
%-------------------------------%
    A multi-player game is a function
    \[
    \sigma : (\cA^n \times \cE^n)^* \times \cA^n\to\Delta(\cE^n)
    \]
\end{definition}

The interaction of the player strategies $\pi = (\pi_1, \pi_2, ..., \pi_n)$ with the multi-player game $\sigma$ induces a history distribution $\sigma^\pi = \sigma^{\pi_{1:n}}$ where
\[
\sigma^\pi(\eps) ~:=~ 1
\]
\[
\sigma^\pi(\h_{1:t}) ~:=~ \sigma^\pi(\h_{<t}a_t) \sigma(e_t | \h_{<t}a_t)
\]
\[
\sigma^\pi(\h_{<t}a_t) ~:=~ \sigma^\pi(\h_{<t}) \prod_{i=1}^n \pi_i(a_t^i | \h_{<t}^i)
\]

Because players choose their actions simultaneously, the action distributions at time $t$ are independent conditional on the action observations history $\text{\ae}^i_{<t} := a^i_1e^i_1a^i_2e^i_2...a^i_{t-1}e^i_{t-1}$, so we take a product.
The history distribution for player $i$ is the history distribution $\sigma^\pi$ marginalized over the actions and observations of the other players:
\[
\sigma_i^{\pi}(\h_{<t}^i) ~:=~ \sum_{\h_{<t}^j, j \neq i} \sigma^{\pi_{1:n}}(\h_{<t})
\]
 
The subjective environment $\sigma_i(e_t^i | \h_{<t}^ia_t^i) = \sigma_i^{\pi}(e_t^i | \h_{<t}^ia_t^i)$ for single player/agent $i$ is actually independent of $\pi_i$ (see \cref{app:subjective_environment}), though it does depend on $\pi_j$ for $j \neq i$. Therefore we will sometimes use $\sigma^\pi_i$ to refer to the subjective environment. 

In the single agent-environment setting \cite{Hutter:04uaibook} 
$\sigma_i$ corresponds to the true environment which we write as $\mu\in\cM$ and no superscripts $i$.

%-------------------------------%
\begin{definition}[environment]\label{def:environment}
%-------------------------------%
    An environment $\mu$ is a chronological action-contextual measure. Equivalently, $\mu$ can be specified by its conditional probabilities $\mu(\cdot|h_{<t}a_t) \in \Delta \cE$ given every history $h_{<t} = a_1e_1...a_{t-1}e_{t-1}$ that it assigns a nonzero probability.
\end{definition}

%%%%%%%%%%%%%%%%%%%%%%%%%%%%%%%%%%%%%%%%%%%%%%%%%%%%%%%%%%%%%%%
\subsection{Strategies}\label{sec:strategies}
%%%%%%%%%%%%%%%%%%%%%%%%%%%%%%%%%%%%%%%%%%%%%%%%%%%%%%%%%%%%%%%

%-------------------------------%
\begin{definition}[reflective-oracle computable strategies]
%-------------------------------%
    Given a reflective oracle $O$ for action space $\cA$ and pTM's with alphabet $\Sigma = \cA \sqcup \cE$\footnote{It is acceptable for pTM's to output any symbol in $\Sigma$ as long as $O$ is indexed by $\cA$ so that conditionals are completed to $\Delta \cA$. Producing the wrong type of output is treated the same as failing to halt. Later we will explicitly introduce types for symbols, allowing $O$ to be indexed by any symbol of $\Sigma$ so that the conditionals for symbols in $\cE$ are also completed to $\Delta \cE$.}, so that $O's$ second argument is in $\Sigma^*$ and $O$ is indexed by $\cA$, we say that strategy $\pi$ is reflective-oracle computable (equivalently $O$-sampled or $O$-estimable) iff for some oracle pTM $T$, $\forall a_{<t}\in\cA^{t-1}$ and $e_{<t}\in\cE^{t-1}$ and $a\in\cA$ we have $\pi(a|\h_{<t}) = \lambda_T^O(a|\h_{<t})$. We will abbreviate reflective-oracle computable as ``\rOc''\ and refer to this class of strategies as $\cP\rOi$.
\end{definition}

%-------------------------------%
\paragraph{Enumerability of $\cP\rOi$.}
%-------------------------------%
Note that equivalence holds by \cref{thm:bar_lambda_est_properties} and \cref{thm:bar_lambda_O_sampled} because all strategies are assumed to be (chronological, observation contextual, proper) probability measures, because all \cref{sec:refl_oracles} theorems immediately generalize to allow input strings over $\Sigma$. The class of \rOc\ strategies is effectively enumerable as $\bar\lambda_T^O$ for $T \in (T_1, T_2, ...)$ an effective enumeration of pTMs. This enumeration contains all \rOc\ measures because oracle completion leaves probability measures unchanged. Conversely, all $\bar\lambda_T^O$ are $O$-sampled by \cref{thm:bar_lambda_O_sampled}, which means they are \rOc.

%%%%%%%%%%%%%%%%%%%%%%%%%%%%%%%%%%%%%%%%%%%%%%%%%%%%%%%%%%%%%%%
\section{Reflective-Oracle Computable Nash Equilibrium}\label{sec:refl_nash_equilibrium}
%%%%%%%%%%%%%%%%%%%%%%%%%%%%%%%%%%%%%%%%%%%%%%%%%%%%%%%%%%%%%%%

We want to construct a set of mutually expected reward maximizing strategies $\pi^*_{\sigma_1}, \pi^*_{\sigma_2}, ..., \pi^*_{\sigma_n}$ for $\sigma$. It is not obvious that this is possible because the optimal strategy for each player depends on every other players' strategy. Explicitly, $\pi^*_{\sigma_i}$ depends on $\sigma_i$ which depends on each other $\pi^*_{\sigma_j}$, which itself depends on $\sigma_j$, which depends (circularly) on $\pi^*_{\sigma_i}$. However, given an assignment of strategies to players it is certainly well-defined to discuss whether each of them is optimal given the others (i.e. a best response).
%
%-------------------------------%
\begin{theorem}[subjective environment estimable] \label{sigma_i_refl}
%-------------------------------%
    Given pTM's generating the multi-player game $\sigma$ and oracle pTM's generating the strategies $\pi_1, \pi_2, ..., \pi_n \in \cP\rOi$ there is an algorithm that constructs oracle pTM's estimating $\sigma^\pi$, $\sigma_i^{\pi}$, and $\sigma_i$. 
\end{theorem}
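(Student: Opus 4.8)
The plan is to prove that each of $\sigma^\pi$, $\sigma_i^\pi$, and $\sigma_i$ is $O$-estimable and that its estimator is computably constructible from the given pTMs, working outward from the component conditionals to the composite quantities. Throughout I use the typed reflective oracle of the footnote to the definition of $\cP\rOi$, so that the action conditionals of the strategies and the percept conditionals of the game are both completed to proper distributions (over $\cA$ and over $\cE^n$ respectively).

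I would start from the base case. Each $\pi_j \in \cP\rOi$ is $O$-estimable by definition. For the game, the raw pTM semimeasure is completed by $O$, and by \cref{thm:bar_lambda_est_properties} the completion $\bar\sigma$ is an $O$-estimable probability measure whose estimator $B_{T_\sigma}$ is computably constructed from the game's pTM. Hence every factor appearing in the definition of $\sigma^\pi$ --- namely $\sigma(e_s \mid \h_{<s}a_s)$ and each $\pi_i(a_s^i \mid \h_{<s}^i)$ --- is an $O$-estimable conditional, uniformly in its arguments.

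Next I would handle the two composite measures. Unfolding the recursion gives the finite product
\[
\sigma^\pi(\h_{1:t}) = \prod_{s=1}^{t}\Bigl(\sigma(e_s \mid \h_{<s}a_s)\prod_{i=1}^n \pi_i(a_s^i \mid \h_{<s}^i)\Bigr),
\]
so $\sigma^\pi$ is $O$-estimable by the elementary closure of $O$-estimable functions under finite products of quantities bounded in $[0,1]$ (the approximation errors accumulate only additively). For the marginal I would use its defining sum $\sigma_i^\pi(\h_{<t}^i) = \sum_{\h_{<t}^j,\, j\neq i}\sigma^\pi(\h_{<t})$; since $\cA$ and $\cE$ are finite and $t$ is fixed this ranges over finitely many full histories, so it is a finite sum of $O$-estimable terms and hence $O$-estimable. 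In both cases the estimator is assembled mechanically from the component estimators, which supplies the required algorithm.

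The last and hardest step is the subjective environment $\sigma_i$, which is a conditional and therefore the ratio
\[
\sigma_i(e_t^i \mid \h_{<t}^i a_t^i) = \frac{\sigma_i^\pi(\h_{<t}^i a_t^i e_t^i)}{\sigma_i^\pi(\h_{<t}^i a_t^i)}
\]
of the already-estimable $\sigma_i^\pi$. The obstacle is that a quotient of estimable functions is not estimable in general, because the denominator can be arbitrarily small. What rescues the argument is that after oracle completion $\sigma^\pi$, and hence its marginal $\sigma_i^\pi$, is a proper measure, so $\sigma_i^\pi(\h_{<t}^i a_t^i) = \sum_{e\in\cE}\sigma_i^\pi(\h_{<t}^i a_t^i e)$, and the subjective environment is only required on histories of positive probability (\cref{def:environment}). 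On such a history the denominator is a fixed positive real $D$; refining its estimate eventually certifies a strictly positive lower bound $D > \delta$, after which the bound $|\hat N/\hat D - N/D| \le (|\hat N - N| + |\hat D - D|)/\delta$ (using $N \le D$) drives the ratio error below any prescribed $1/k$. This produces an $O$-estimable conditional on the support and completes the construction.
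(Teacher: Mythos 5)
Your overall architecture matches the paper's: establish estimability of the component conditionals, then propagate it through the finite products, the finite marginalization sums, and the final ratio. The paper compresses all of the propagation into a single sentence (``defined by uniformly continuous operations''), so your explicit treatment of the quotient --- restricting to histories of positive probability as \cref{def:environment} permits, certifying a strictly positive lower bound on the denominator, and only then bounding the error of the ratio --- is a legitimate and in fact more careful rendering of the step the paper glosses over.

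However, your base case for the game uses a step that fails under the theorem's hypotheses. The theorem supplies a \emph{plain} pTM (no oracle access) generating $\sigma$, and at this point in the paper the oracle is indexed by $\cA$ only: the footnote you cite explicitly \emph{defers} typed oracles to later, and the proof of \cref{thm:O_gen_optimal_strategy} states outright that environments ``do not produce elements of $\cA$ so cannot be completed with $O$.'' Consequently you cannot invoke \cref{thm:bar_lambda_est_properties} to binary-search the game's percept conditionals: the oracle available here does not answer queries about $\cE^n$-valued outputs, so $\bar\lambda^O_{T_\sigma}$ is simply undefined for the game's machine. The repair is exactly the paper's opening move, and it needs no oracle at all: since $\sigma$ is sampled by an ordinary pTM, it is l.s.c.\ by \cref{thm:lscsm_vs_pTM}; and since $\sigma$ is by definition a proper probability measure, each conditional also admits an upper approximation via
\begin{equation*}
\sigma(e_t \mid \h_{<t} a_t) \;=\; 1 - \sum_{e' \neq e_t} \sigma(e' \mid \h_{<t} a_t),
\end{equation*}
so l.s.c.\ plus normalization gives estimability outright. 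With that substitution the rest of your argument goes through unchanged; as written, your typed-oracle variant proves the later generalization to $O$-estimable games (\cref{thm:M_includes_G_P}) rather than the theorem as stated.
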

\begin{proof} Because $\sigma$ is sampled by a pTM, it is l.s.c.\ by \cref{thm:lscsm_vs_pTM}. Because it is a l.s.c.\ probability measure it is estimable. Because $\sigma^\pi$, $\sigma_i^{\pi}$, and $\sigma_i$ are defined by uniformly continuous operations on $\sigma$ and $\pi_1, ...,\pi_n$, their conditionals are also $O$-estimable (by computably constructable oracle pTM's).
\end{proof}
By an optimal strategy, we mean one that maximizes the expected sum of a player's discounted future rewards. Formally, a discount factor $\gamma_i \geq 0$ scales the reward at step $i$ to ensure the sum is finite. We will assume w.l.o.g. that the rewards are bounded so that $\sum_{i=1}^\infty \gamma_k < \infty$ ensures that $\sum_{i=t}^\infty \gamma_i r_i$ always exists. We define the discount normalization factor $\Gamma_t = \sum_{i=t}^\infty \gamma_i$, and

%-------------------------------%
\begin{definition}[value function]\label{def:value_function}
%-------------------------------%
    The value function of strategy $\pi$ interacting with (subjective) environment $\nu$ is \vspace{-3ex}
    \[
        V_\nu^\pi(h_{<t}) ~=~ \frac{1}{\Gamma_t} \lim_{T\to\infty} \sum_{\h_{t:T}} \sum_{i=t}^T \gamma_i r_i \prod_{j=t}^T \pi(a_j|h_{<t}\h_{<j}) \nu(e_j | h_{<t} \h_{<j} a_j) 
    \]
    which satisfies the Bellman equation
    \[
    V_\nu^\pi(h_{<t}) ~=~ \frac{1}{\Gamma_t} \sum_{a_t, e_t} \pi(a_t|h_{<t})\nu(e_t|h_{<t}a_t)( \gamma_t r_t + V^\pi_\nu(h_{<t}\text{\ae}_t))
    \]
    In a multiplayer game $\sigma$, player $i$'s value function in his subjective environment $V^{\pi_i}_{\sigma_i}$ is in game theoretic terms \cite{kalai_subjective_1993} his expected utility $U_i(\pi)$.
\end{definition}

%-------------------------------%
\begin{definition}[optimal strategy $\pi^*_\nu$]\label{def:optimal_strategy}
%-------------------------------%
    An optimal strategy $\pi^*_\nu$ for environment $\nu$ is a strategy in $\argmax_\pi V^\pi_\nu$, which is nonempty by \cite{lattimore_general_2014}. Note that maximizing $V^\pi_\nu$ for different histories $h_{<t}$ is not in conflict, as can be seen from the Bellman equation. We define $V^*_\pi = V^{\pi^*_\nu}_\nu$ (which does not depend on the choice of $\pi^*_\nu$). Clearly $\pi^*_\nu(\cdot|h_{<t})$ must be supported on $\argmax_a V^*_\nu(h_{<t}a)$, where $V^*_\nu$ is extended naturally to histories ending with an action. Because $\Gamma_t$ is a positive scale factor, maximizing the value function is equivalent to maximizing the expected sum of discounted future rewards. 
\end{definition}
To prove the existence of a \rOc\ Nash equilibrium, we need one more result which is of independent interest.

%-------------------------------%
\begin{theorem}[oracle-computable optimal strategy] \label{thm:O_gen_optimal_strategy}
%-------------------------------%
For any environment $\nu$ whose conditionals are $O$-estimable, and any estimable discount normalization factor $\Gamma_t$, there is a \rOc\ optimal strategy $\pi^*_\nu$.
\end{theorem}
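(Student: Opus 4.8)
The plan is to reduce the theorem to constructing a single $O$-estimable (equivalently $O$-sampled, hence \rOc) action distribution that is supported, at every history, on $\argmax_a V^*_\nu(h_{<t}a)$; by \cref{def:optimal_strategy} any such strategy is optimal, and an optimal strategy exists by \cite{lattimore_general_2014}. Two obstacles stand in the way: (i) the optimal value $V^*_\nu$ appearing in the argmax must be shown $O$-estimable, and (ii) the argmax itself cannot be computed directly, since deciding equality of two estimable reals is undecidable, so ties must be resolved consistently. I would dispatch (i) first and then handle (ii) using the reflective completion machinery of \cref{sec:refl_oracles}.

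For (i), I would approximate $V^*_\nu$ by its finite-horizon truncations $V^*_{\nu,m}$, which optimize only the rewards up to step $m$ and are computed by backward induction: at each stage one takes a maximum over the finite action set $\cA$ of an expectation over the finite percept set $\cE$ against the conditionals $\nu(e\mid h a)$ and the estimable discounts. Since a finite max, a finite sum, and products of bounded estimable quantities are again estimable, each $V^*_{\nu,m}$ is $O$-estimable uniformly in $m$. The truncation error is bounded by the discount tail $\Gamma_{m+1}/\Gamma_t$, which is estimable and tends to $0$; searching for an $m$ that makes this tail provably smaller than $1/(2k)$ and then estimating $V^*_{\nu,m}$ to precision $1/(2k)$ yields an $O$-estimable approximation of $V^*_\nu$ to within $1/k$. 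Hence $V^*_\nu$, extended to histories ending in an action, is $O$-estimable.

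For (ii), I would first build a comparison machine $T_{\mathrm{comp}}$ that, on input $(h_{<t},a,a')$, estimates $V^*_\nu(h_{<t}a)$ and $V^*_\nu(h_{<t}a')$ and outputs $1$ once it confirms a strict inequality $V^*_\nu(h_{<t}a)>V^*_\nu(h_{<t}a')$ (comparing a lower bound of one estimate to an upper bound of the other), outputs $0$ once it confirms the reverse strict inequality, and otherwise runs forever. Then $\lambda^O_{T_{\mathrm{comp}}}(1\mid h_{<t},a,a')=[\![V^*_\nu(h_{<t}a)>V^*_\nu(h_{<t}a')]\!]$, and on an exact tie the semimeasure is defective; by \cref{thm:bar_lambda_est_properties} and \cref{thm:bar_lambda_O_sampled} the completion $\bar\lambda^O_{T_{\mathrm{comp}}}$ is an $O$-estimable, $O$-sampled \emph{probability} measure that deterministically favours the strictly better action and resolves ties by the oracle's consistent randomization. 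I would then assemble the policy from a machine $M$ that scans the actions $1,\dots,d$, maintaining a current candidate $c$ and, for each new action $a$, drawing a sample from $\bar\lambda^O_{T_{\mathrm{comp}}}(\cdot\mid h_{<t},a,c)$ to decide whether to replace $c$. A straightforward induction establishes the invariant that after processing action $a$ the candidate's value equals $\max_{a'\le a}V^*_\nu(h_{<t}a')$: strict comparisons move toward the better action, and tied comparisons keep the candidate among equally optimal actions. Since each comparison sampler halts with probability $1$, $M$ halts with probability $1$ and its output lies in $\argmax_a V^*_\nu(h_{<t}a)$. Thus $\pi^*_\nu:=\lambda^O_M$ is a proper, $O$-sampled distribution supported on the optimal actions, i.e.\ a \rOc\ optimal strategy.

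The main obstacle is the tie-breaking in step (ii): because value equality is not decidable, one cannot compute the argmax outright, and a naive attempt to ``fill in'' the missing probability risks leaking mass onto strictly suboptimal actions. The resolution is that the reflective completion only ever redistributes mass among actions genuinely tied for the maximum, so the oracle's randomization stays inside the optimal set; verifying that the sequential-elimination invariant is preserved under this randomization is the crux of the argument.
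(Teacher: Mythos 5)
Your proof is correct, and its first half (establishing $O$-estimability of $V^*_\nu$ via finite-horizon truncation, with the truncation error controlled by the discount tail) coincides with the paper's argument, as does the overall reduction of the argmax to a sequential tournament of pairwise comparisons against the best candidate so far. Where you genuinely diverge is the comparison gadget. The paper builds, for each pair of actions $\alpha,\beta$, a machine $T_{\alpha\beta}$ whose output probability \emph{is} the affinely rescaled value difference $\frac{1}{2}[V^*_\nu(\h_{<t}\alpha)-V^*_\nu(\h_{<t}\beta)+1]$ (a proper two-outcome measure obtained via \cref{thm:bar_lambda_O_sampled}), and the policy then issues a single raw oracle query $O(T_{\alpha\beta},\h_{<t},1/2)$: a tie manifests as the completed probability sitting exactly at the query threshold, where the reflective oracle is permitted to randomize. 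You instead build a partial decision procedure $T_{\mathrm{comp}}$ that halts precisely on strict inequalities and diverges on ties, so a tie manifests as a \emph{defective} semimeasure, and you invoke the completion $\bar\lambda^O_{T_{\mathrm{comp}}}$ (via \cref{thm:bar_lambda_est_properties} and \cref{thm:bar_lambda_O_sampled}) to redistribute the missing mass, sampling from the completion rather than querying the oracle at a threshold. These exploit the two distinct sources of slack in \cref{def:refl_oracle} --- randomization at exact query thresholds versus completion of non-halting mass --- and both are sound here for the same reason: the oracle's arbitrary choice only matters on exact ties, where either action is optimal, so your sequential-elimination invariant survives. Your route costs a little more machinery per comparison (a binary search inside each sampler instead of one threshold query) but dispenses with the explicit affine encoding. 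One technical point to tidy up: if the oracle is typed/indexed by the full action alphabet $\cA$ rather than a binary alphabet, the completion of your defective comparison semimeasure may, on a tie, place mass on symbols other than $0$ and $1$; your machine $M$ should therefore interpret any output other than $1$ as ``keep the current candidate,'' which is harmless since on ties either decision preserves optimality.
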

\begin{proof} 
%\paragraph{Optimal value function.}
Note that environments do not produce elements of $\cA$ so cannot be completed with $O$; this means that $O$-estimability is stronger than $O$-sampled conditionals (even though they are probability measures). Later we will introduce typed reflective oracles and define reflective-oracle computability for environments; adopting such an oracle closes this gap.

The optimal value function $V_\nu^*$ for $\nu$ with discount factor $\gamma_k$ and discount normalization factor $\Gamma_t = \sum_{i=t}^\infty \gamma_i$, is
\begin{equation}\label{eq:optimal_value}
    V_\nu^*(h_{<t}a_t) ~=~ \frac{1}{\Gamma_t} \lim_{T\to\infty} \sum_{e_t} \max_{a_{t+1}} \sum_{e_{t+1}} ... \max_{a_T} \sum_{e_T} \sum_{i=t}^T \gamma_i r_i \prod_{j=t}^T \nu(e_j | h_{<t} \h_{<j} a_j)  
\end{equation}
We assume that both $\gamma_t$ and $\Gamma_t$ are estimable. This is true in the most common cases that $\gamma_t = \gamma^t$ for a constant rational $\gamma \in (0,1)$ or $\gamma = 1$ until some finite horizon after which it is 0. Our assumption is stronger than estimability of $\gamma_t$ which would only make $\Gamma_t$ l.s.c.\ but estimability of $\Gamma_t$ (for all $t$) immediately implies estimability of $\gamma_t$.

%\paragraph{$O$-estimability of optimal value function.}
Now all quantities in the limit of equation (\ref{eq:optimal_value}) are estimable. The limit can be approximated from below by iteratively increasing $T$. Recalling that the rewards are bounded to [0,1], this approximation is also within $\Gamma_{T+1}$ of an upper bound because this is the maximum possible return for the rounds after $T$. This means that the limit can be approximated both from above and below hence is estimable. The factor $1/\Gamma_t$ is also estimable when $\Gamma_t > 0$, but when $\Gamma_t = 0$ the unnormalized value function is also 0 and we will not need to estimate it (any action is equally good). 

%\paragraph{$O$-esimtability of optimal action.}
It would be natural to guess that since the values for each action are $O$-estimable, one can simply compute them to sufficient precision and choose the best. However, this does not deal with ties between action values. Instead we need to take advantage of $O$ access again.  Noting that the value function is in $[0,1]$ we can use \cref{thm:bar_lambda_O_sampled} to construct a TM $T_{\alpha\beta}$ such that
\begin{align*}
& \lambda^O_{T_{\alpha\beta}}(\alpha| \h_{<t}) 
    ~&=~ \fr12 [V^*_\nu(\h_{<t}\alpha) - V^*_\nu(\h_{<t}\beta) + 1] ~\in~ [0;1] \\
    & \lambda^O_{T_{\alpha\beta}}(\beta| \h_{<t}) 
    ~=~ 1 - \lambda^O_{T_{\alpha\beta}}(\alpha| \h_{<t}) ~&=~ \fr12 [V^*_\nu(\h_{<t}\beta) - V^*_\nu(\h_{<t}\alpha) + 1] ~\in~ [0;1]
\end{align*}
where $\alpha$ and $\beta$ are actions.  
Then in the two action case we define
\[
\pi(a |\h_{<t}) ~=~ 
\begin{cases}
    1 \text{ if $a = \alpha$ and $O(T_{\alpha\beta}, \h_{<t}, 1/2) \rightarrow 1 $,} \\
    1 \text{ if $a = \beta$ and $O(T_{\alpha\beta}, \h_{<t}, 1/2) \rightarrow 0 $,} \\
    0 \text{ otherwise.}
\end{cases}
\]
The procedure described above simply calls $O$ once and chooses an action based on the response. Recall the notation $O(T,x,p) \rightarrow 0$ or $O(T,x,p) \rightarrow 1$ indicates that an oracle call with query $(T,x,p)$ yields 0 or 1 (respectively). Since the oracle's behavior is stochastic this does not necessarily mean that $O(T,x,p)$ is valued at 0 or 1.  

When $V^*_\nu(\h_{<t}\alpha) > V^*_\nu(\h_{<t}\beta)$, $\pi$ takes action $\alpha$, and when $V^*_\nu(\h_{<t}\alpha) < V^*_\nu(\h_{<t}\beta)$, $\pi$ takes action $\beta$. When the action values are exactly equal, then $\pi$ randomizes in a fashion depending on $O$, but in this case any action choice is equally good. Because $\pi$ optimizes the optimal value function it is an optimal strategy.

% Extensions to more than two actions
If the action set is larger than 2, as suggested by \cite{fallenstein_reflective_2015}, we can construct a version of $T_{\alpha\beta}$ for each pair of actions, then use $O$ to iteratively compare each action not yet tested against the best so far to find one with the maximum action value.
\end{proof}

%-------------------------------%
\begin{theorem}[Nash equilibrium]\label{thm:refl_oracle_comp_nash}
%-------------------------------%
    For any multi-player game $\sigma$ with l.s.c.\ conditionals, mutually optimal response strategies $\pi_1^*, ..., \pi_n^*$ exist and are reflective-oracle computable.
\end{theorem}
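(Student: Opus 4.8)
The plan is to break the circular dependence---$\pi_i^*$ depends on the subjective environment $\sigma_i$, which depends on the strategies $\pi_j^*$ for $j \neq i$, which in turn depend on $\sigma_j$ and hence circularly on $\pi_i^*$---by combining two ingredients. The first is a reflective oracle $O$, which forces the mutually recursive behaviors of the strategy machines to share one common, consistent fixed point. The second is Kleene's recursion theorem (quining), which lets each strategy machine refer to the actual source code of every machine, including itself. I would begin by fixing a reflective oracle $O$ for the action alphabet $\cA$; existence is guaranteed by \cref{sec:non_binary_refl_oracle_existence}, and a limit-computable choice exists by \cref{thm:non_binary_limit_comp_refl_oracle}. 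Conceptually, $O$ is exactly the device that tames the diagonalization of the example above, allowing each machine to ``predict'' the others without paradox.

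First I would describe a single uniform reduction at the level of pTM source codes. Given source codes $\langle T_1 \rangle, \dots, \langle T_n \rangle$ of candidate strategy machines together with the pTM generating $\sigma$, perform the following for each $i$: (i) apply the algorithm of \cref{sigma_i_refl} to $\sigma$ and $\{T_j\}_{j \neq i}$ to obtain an oracle pTM $B_i$ whose conditionals $O$-estimate the subjective environment $\sigma_i$ determined by the completed opponent strategies $\bar\lambda_{T_j}^O$ for $j \neq i$; (ii) feed $B_i$ together with the fixed estimable discount normalization $\Gamma_t$ into the construction of \cref{thm:O_gen_optimal_strategy} to obtain an oracle pTM $G_i$ computing an optimal strategy for $\sigma_i$. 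Since \cref{sigma_i_refl} guarantees $O$-estimable conditionals for $\sigma_i$ and \cref{thm:O_gen_optimal_strategy} requires exactly that plus estimability of $\Gamma_t$, the composition is well defined, and $(\langle T_1 \rangle, \dots, \langle T_n \rangle) \mapsto (\langle G_1 \rangle, \dots, \langle G_n \rangle)$ is a total computable map on tuples of source codes.

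Next I would tie the knot. By the (simultaneous) recursion theorem there exist source codes $\langle T_1^* \rangle, \dots, \langle T_n^* \rangle$ fixed under this map, i.e.\ each $T_i^*$ computes the same conditional distribution as $G_i(\langle T_1^* \rangle, \dots, \langle T_n^* \rangle)$. Set $\pi_i^* := \bar\lambda_{T_i^*}^O$, which is \rOc\ by definition and lies in $\cP\rOi$. The role of reflectivity is that every query $O$ receives about a machine $T_j^*$ is answered consistently with that machine's own completed output distribution $\bar\lambda_{T_j^*}^O = \pi_j^*$; consequently the environment $O$-estimated by $B_i$ is genuinely the subjective environment $\sigma_i$ induced by the actual profile $\pi_{\neq i}^*$. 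By the fixed-point property and \cref{thm:O_gen_optimal_strategy}, $\pi_i^*$ is therefore an optimal strategy for $\sigma_i$, hence a best response to $\pi_{\neq i}^*$. As this holds for every $i$ simultaneously, $(\pi_1^*, \dots, \pi_n^*)$ is a profile of mutual optimal responses, i.e.\ a reflective-oracle-computable Nash equilibrium.

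The main obstacle is not any single estimation step---those are discharged by \cref{sigma_i_refl} and \cref{thm:O_gen_optimal_strategy}---but establishing that the self-referential definition is coherent. Concretely, one must verify that the semantic fixed point supplied by $O$ (consistency of all completed behaviors $\bar\lambda_{T_j^*}^O$) and the syntactic fixed point supplied by the recursion theorem (each machine holding the correct source codes) combine so that $B_i$ really tracks $\sigma_i$ rather than some perturbed environment. This is where reflectivity earns its keep: without it, a machine computing a best response while simulating opponents who are themselves computing best responses would run into the diagonal contradiction of the matching-pennies/Vizzini scenario, and the oracle's freedom to randomize exactly at the boundary $p = q_\alpha$ in \cref{def:refl_oracle} is precisely what removes that contradiction. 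A secondary point worth checking is that $\sigma_i$ remains a proper measure with $O$-estimable (not merely $O$-sampled) conditionals, as required on input to \cref{thm:O_gen_optimal_strategy}; this follows because $\sigma$ has l.s.c.\ conditionals and marginalization and mixing with the $\bar\lambda_{T_j^*}^O$ are uniformly continuous operations, exactly as in the proof of \cref{sigma_i_refl}.
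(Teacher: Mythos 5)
Your proof is correct and follows essentially the same route as the paper: both compose the constructions of \cref{sigma_i_refl} and \cref{thm:O_gen_optimal_strategy} into a single computable map on pTM source codes and then close the self-referential loop with Kleene's second recursion theorem, relying on the reflective oracle to make the mutual predictions consistent. The only cosmetic difference is that you invoke a simultaneous recursion theorem for the $n$ machines, whereas the paper realizes the same fixed point with one self-referential machine $T'(x,i)$ parameterized by the player index via the currying machine $A$.
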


\begin{proof} By \cref{sigma_i_refl}, there is an algorithm to construct $\sigma_i$ from oracle pTM's for $\sigma, \pi_1, ..., \pi_n$. There is also an algorithm to construct $\pi^*_{\sigma_i}$ from $\sigma_i$ following the proof of \cref{thm:O_gen_optimal_strategy}. Combining these two algorithms we obtain an algorithm $T_i$ that constructs $\pi_{\sigma_i}^*$ from $(T_\sigma, T_{\pi_1}, ..., T_{\pi_n})$, following once more the convention that $T_\mu$ is an oracle pTM that samples $\mu$. We now have to show that there are $\pi_i$ such that the constructed optimal responses $\pi_{\sigma_i}^*$ w.r.t.\ environments $\sigma_i^{\pi}$ give back $\pi_i$, i.e.\ that $\exists\pi_i:\pi_{\sigma_i}^*=\pi_i$.
Define $T_i'$ to run the oracle pTM returned by $T_i(T_\sigma, T_1', ..., T_n')$. This relies on the second recursion theorem implicitly: Let pTM $A$ accept a two input pTM $T$ and an input $y$ and construct a new TM $T_y(x) = T(x,y)$. There is a machine $T'(x,i)$ that obtains its own description and runs $N(x)$ where $N = T_i(T_\sigma, A(T',1), ..., A(T',n))$. Formally $T_i' = A(T', i)$. Every step in the process of running $T_i'$ has already been shown to halt, so it samples from the optimal strategy $\pi^*_{\sigma_i}$ (meaning that $N = T_{\pi^*_{\sigma_i}}$). \end{proof}   
Each strategy is optimal given the knowledge of all other players' strategies. Players even act optimally on the histories that they play with probability zero, so this is a subgame perfect Nash equilibrium.
%
%%%%%%%%%%%%%%%%%%%%%%%%%%%%%%%%%%%%%%%%%%%%%%%%%%%%%%%%%%%%%%%
\section{Convergence for Bayesian Players}\label{sec:convergence_for_Bayesian}
%%%%%%%%%%%%%%%%%%%%%%%%%%%%%%%%%%%%%%%%%%%%%%%%%%%%%%%%%%%%%%%
%
We have shown the existence of a reflective-oracle computable Nash equilibrium, which concerns the case that all players know each other's strategies. It is more interesting to consider Bayesian players that do not know each other's strategies, but only have some belief distribution over possible strategies they may face. It is typically difficult (or impossible) to show convergence for Bayesian players in general environment classes or games; see for example \cite{leike2015bad}. The main obstacle is that players may believe exploration is too dangerous. Kalai and Lehrer \cite{kalai_rational_1993} showed that in an infinitely repeated game with perfect monitoring Bayesian players can learn to play an approximate Nash equilibrium, supporting the centrality of Nash equilibria to game theory\footnote{Or depending on one's perspective, justifying the Bayesian approach to game theory.}. This is a particularly impressive result because it shows convergence for purely rational players (without requiring artificial exploration as in e.g.~Thompson sampling) to a randomized strategy, despite the fact that Bayes optimal strategies can always be made deterministic\footnote{See \cite{foster_impossibility_2001} for an explanation of further difficulties}. Any solution to the grain of truth problem gets around this apparent contradiction because the deterministic Bayes optimal strategies may not appear in $\cP$. The catch is that, informally, each player must assign a small positive probability (a ``grain of truth") to the strategies actually chosen by his opponents. Well known impossibility theorems (\cite{nachbar_prediction_1997,foster_impossibility_2001}) have suggested that this condition is hard or impossible to meet and limited the applicability of Kalai and Lehrer's results. 
Indeed, it took 22 years for the first non-trivial such class to be found \cite{aixi_reflective_2015,fallenstein_reflective_2015,leike_formal_2016}. We will show how reflective oracles can be used to construct a grain of truth by taking advantage of the effective enumeration of $\cP\rOi$ to find a dominant ``mixture'' strategy $\zeta$. It is then straightforward to construct Bayesian players whose beliefs are consistent with any strategy in the rich class $\cP\rOi$ that satisfy the conditions of Kalai and Lehrer's result. Our novel\footnote{Though \cite{leike_formal_2016} suggests that Kalai and Lehrer's conditions can be satisfied with reflective oracles, they do not provide a proof or even explicitly construct the appropriate strategy class $\cP\rOi$. Also, they claim that all players must know the others are Bayesian, which is not required.} result shows that Nash equilibria arise very naturally in infinitely repeated stage games, at least insofar as it is natural to supply players with a common reflective oracle.

%-------------------------------%
\paragraph{Infinitely repeated games of Kalai and Lehrer.}
%-------------------------------%
Kalai and Lehrer require that each player $i$ maintains independent belief distributions over the strategy of all players. Player $i$'s uncertainty about which strategy in $\cP$ player $j$ has chosen can be expressed as mixture of behavior strategies in $\cP$, and is itself a behavior strategy by Kuhn's theorem \cite{aumann_mixed_1964} (though in general it may not be in $\cP$). Therefore we can write it as $\pi^i_j$, with superscript representing the player who's state of knowledge we are considering and the subscript representing the player he is reasoning about, so that player $i$'s full beliefs about the strategies of all players is given by a vector $\pi^i = (\pi^i_1, \pi^i_2, ..., \pi^i_n)$. This also allows us to represent more general beliefs that might not be constructed as a Bayesian mixture over a strategy class. Every player at least knows his own strategy so $\pi^i_i = \pi_i$.
The true strategy vector is given by $\pi_{1:n} = (\pi_1, \pi_2, ..., \pi_n)$ as in \cref{sec:refl_nash_equilibrium} (we will sometimes suppress the subscripts $1:n$ in $\pi_{1:n}$). The lack of a superscript indicates that this is not subjective. We assume that the reward for player $i$ is specified by a fixed payoff function $u_i$ depending only on the actions of all players in the current round. Each player knows his own payoff function (since the action sets are finite, $u_i$ is sometimes called a payoff matrix, and for our purposes may be assumed computable without any significant loss in generality). Though player $i$ does not know any other player's payoff function, that information would not be useful anyway because he does not assume other players' policies to be optimal. Perfect monitoring means that each player observes the other players' actions; there are no further observations. Therefore $\sigma$ is a multi-player environment as defined above but with additional restrictions; in particular there is no longer a meaningful difference between $\sigma^\pi$ and $\sigma^\pi_i$, because $e^i_t = a^{\neq i}_t = a^1_t...a^{i-1}_ta^{i+1}_t...a^n_t$. 
This means that the history distribution $\sigma^\pi$ contains multiple copies of the same action history as distributed to each player through $\sigma^\pi_i$.
It is now the case that a player's beliefs about his subjective environment depend on both his index in $\sigma$ and his beliefs about the strategies of other players, so that player $i$ models his environment as $\smash{\sigma^{\pi^i}_i}$ which is not in general the same as his subjective environment $\sigma^{\pi}_i$. 

%-------------------------------%
\paragraph{Conditions for convergence.}
%-------------------------------%
Kalai and Lehrer's result requires that $\pi_i$ acts rationally with respect to player $i$'s beliefs, or in our terminology that $\pi_i = \pi_{\smash{\sigma^{\pi^i}_i}}^*$. This is not a circular definition because $\smash{\sigma^{\pi^i}_i}$ does not depend on $\pi^i_i$ (when viewed as an environment), see Appendix~\ref{app:subjective_environment}. 
Additionally, they require that $\sigma^\pi \ll \sigma^{\pi^i}$, which follows from the grain of truth property.

%-------------------------------%
\paragraph{Constructing a mixture policy.}
%-------------------------------%
We want to satisfy the convergence conditions of Kalai and Lehrer, but this could be done without learning by setting each player's beliefs $\pi^i$ to the true optimal strategies $\pi^* = (\pi^*_{\sigma_1}, ..., \pi^*_{\sigma_n})$ as in \cref{sec:refl_nash_equilibrium}; for our purposes the players must also hold (independent) priors distributed over all of $\cP^O_\text{refl}$ to model their ignorance of each opponent's strategy. We will actually satisfy a slightly different condition by finding a \emph{dominant} strategy $\zeta \in \cP\rOi$ such that $\forall \pi \in \cP\rOi,$ $\exists c \in \mathbb{R}^+$ such that $\zeta(\cdot) \geq c\pi(\cdot)$. Our usage of the term ``dominant strategy" is not related to the usual game-theoretic meaning; it is a measure-theoretic property not an optimality property. A Bayesian mixture $\sum_{\pi\in\cP\rOi} w_\pi \pi(\cdot)$ satisfies the latter with $c=w_\pi>0$. Bayesian mixtures over $\cP$ are not always in $\cP$ but we show below that this holds if $\sum_\pi w_\pi = 1$, so that we could simply define $\zeta$ this way and it would be a probability measure and therefore a strategy (so the following algorithm is unnecessary). However, the simplicity based priors often used in algorithmic information theory \cite{livitanyi}, including to define Solomonoff induction and AIXI \cite{Hutter:24uaibook2} are only l.s.c.\ semimeasures. The following construction for $\zeta$ encompasses the general case that the weights may be only l.s.c., which only happens when they are defective ($\sum w_\pi < 1$) because l.s.c.\ probability measures are estimable. A player with prior $\zeta$ still learns any opponent's strategy in $\cP^O_\text{refl}$ in the sense of strong merging \cite{kalai_rational_1993}, so $\zeta$ models an unknown strategy, and it can also be used to satisfy Kalai and Lehrer's conditions when all players are Bayesian.
Fix l.s.c.\ weights $w_\pi>0$ for each $\pi \in \cP\rOi$. For any pTM $T$ let $\pi_T$ be the strategy corresponding to the measure $\bar\lambda_T^O$, and consider TM $Q$ implementing Algorithm \ref{alg:Q}.

%-------------------------------%
\paragraph{Algorithm idea.}
%-------------------------------%
We would like to sample from $\zeta' = \sum_\pi w_\pi \pi$, but because we want loose requirements on the computability of $w_\pi$ we cannot assume they sum to 1. This means we would like to complete $\zeta'$. Unfortunately we cannot do this either because though $\zeta'$ is $O$-l.s.c., its conditionals $\zeta'(a_{\leq t} || e_{\leq t})/\zeta'(a_{<t}||e_{<t})$ involve division by an  $O$-l.s.c.\ quantity so may not be $O$-l.s.c.\ themselves; in particular we do not even know if $\zeta'$ is $O$-sampled\footnote{When $w_\pi$ is estimable, the conditionals are estimable and therefore $O$-sampled; in \cite{aixi_reflective_2015} this is elegantly demonstrated by rejection sampling, though it is obvious in light of our \cref{thm:bar_lambda_O_sampled}. Their proof does not extend to l.s.c.\ $w_\pi$.}
\ so we cannot produce an oracle pTM $T$ sampling it and there is no $\lambda^O_T$ to complete. Instead, we define a new pTM $Q$ to lower semicompute the numerator of the conditional, $\zeta'(a_{\leq t} || e_{\leq t})$, and use the completed $\bar{\lambda}^O_Q$ to estimate the denominator $\pi_Q(a_{<t}||e_{<t})$. This makes the fraction $\zeta'(a_{\leq t} || e_{\leq t})/\pi_Q(a_{<t}||e_{<t})$ $O$-l.s.c., which means we can sample from it. Then completing $\lambda^O_Q$ we obtain our measure $\pi_Q$ which may not literally complete $\zeta'$, but does dominate it. The core of this construction relies (again) on Kleene's second recursion theorem, in this case allowing $Q$ access to its own description, which it needs to estimate the denominator, intuitively ``pretending that it has already been completed."\footnote{This iterative completion is reminiscent of Solomonoff normalization \cite[Sec.2.8.2]{Hutter:24uaibook2}, but may not preserve the ratios of conditionals.}
Note that this is the only recursion within $Q$; we estimate $\bar{\lambda}^O_Q$ by running the binary search pTM $B_Q$, which makes oracle calls about $Q$ but never actually simulates $Q$. 

%-------------------------------%
\begin{algorithm} 
%-------------------------------%
	\caption{pTM $Q$}
    \label{alg:Q}
	\begin{algorithmic}[1]
        \Input History $\h_{<t}$ 
        \Require Random sequence $\omega$
        \Output $a_t\sim \lambda_Q^O(a_t | \h_{<t})$
        \State Obtain $\langle Q \rangle$ \label{alg:Q_line1}
        \State Let $\phi_\alpha(\h_{<t}, \cdot)$ approximate $\sum_{\pi \in \cP\rOi} w_\pi \frac{\pi(a_{<t} || e_{<t})}{\pi_Q(a_{<t}||e_{<t})}\pi(\alpha|\h_{<t})$ from below, where $\pi_Q\equiv\bar\lambda_Q^O$ \label{alg:Q_line2}
        \State Run sample($\phi_\alpha$, $\text{\ae}_{<t}$) with access to $\omega$ (\Cref{alg:sample}). 
	\end{algorithmic} 
\end{algorithm}

%-------------------------------%
\paragraph{Algorithm correctness.}
%-------------------------------%
\cref{alg:Q_line1} is possible by Kleene's second recursion theorem. \Cref{alg:Q_line1} is doing most of the work; we need to show that the right hand side is $O$-l.s.c.\ Because $w_\pi$ is assumed l.s.c.\ and $\pi$ and $\pi_Q$ are $O$-estimable by \cref{thm:bar_lambda_est_properties}, every term of the sum is $O$-l.s.c.\ This means that the sum is $O$-l.s.c.\ (computing the $k^\text{th}$ partial sum for $\phi_\alpha(\cdot,k)$). We have to show inductively that the denominator is never 0, but this is easy because there is a computable measure assigning any finite string nonzero probability. 
Therefore $\zeta\in\cP\rOi$. 

By the correctness of the sampling algorithm,
\begin{equation*}
    \lambda_Q^O(a_t | \h_{<t}) ~= \sum_{\pi \in \cP\rOi} w_\pi \frac{\pi(a_{<t} || e_{<t})}{\pi_Q(a_{<t}||e_{<t})}\pi(a_t|\h_{<t})
\end{equation*}
Now we can choose our dominant policy as $\pi_Q$:

\begin{equation*} % \label{eq:def_zeta}
    \zeta ~:=~ \pi_Q ~=~ \bar\lambda_Q^O ~\geq~ \lambda_Q^O
\end{equation*}

By definition $\zeta \in \cP\rOi$. It only remains to show that $\zeta$ dominates the class. 
\begin{align*}
    \zeta(a_{\leq t} || e_{\leq t}) 
    ~&=~ \zeta(a_{<t}||e_{<t}) \zeta(a_t | \h_{<t}) 
    ~\geq~ \zeta(a_{<t}||e_{<t}) \lambda_Q^O(a_t | \h_{<t}) \\
    ~&=~ \zeta(a_{<t}||e_{<t}) \sum_{\pi \in \cP\rOi} w_\pi \frac{\pi(a_{<t} || e_{<t})}{\pi_Q(a_{<t}||e_{<t})}\pi(a_t|\h_{<t}) \\
    ~&= \sum_{\pi \in \cP\rOi} w_\pi \pi(a_{\leq t} || e_{\leq t})
    ~\geq~ w_\pi \pi(a_{\leq t} || e_{\leq t}) ~~~\forall \pi \in \cP\rOi
\end{align*}
noting for the inequality that the $O$-completed measure for any machine is lower bounded by its semi-measure. 
As desired, $\zeta \in \cP\rOi$, and $\zeta \geq \cP\rOi$.
This observation is sometimes written as
%
%-------------------------------%
\begin{theorem}[$\cP\rOi$ contains a dominant element] \label{thm:dominant_zeta}
%-------------------------------%
    There exists $\zeta \in \cP^O_\text{refl}$ s.t. $\forall \pi \in \cP^O_\text{refl}$,\ $\exists c>0$ such that $\zeta(\cdot) \geq c\pi(\cdot)$ ($\zeta$ multiplicatively dominates $\pi$). The first condition is that $\zeta$ is in the class and the second that $\zeta$ dominates the class ($\zeta \geq \cP^O_\text{refl}$). Because both are satisfied we say that $P^O_\text{refl}$ has a dominant element, written $\cP\rOi \geq \cP\rOi$.
\end{theorem}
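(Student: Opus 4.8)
The plan is to establish the two halves of the claim separately: first that $\zeta \in \cP\rOi$, and then that $\zeta$ multiplicatively dominates every $\pi \in \cP\rOi$. The construction has already been laid out via the pTM $Q$ of \cref{alg:Q}, so the proof is largely a verification that the definitions go through and that the two recursive dependencies (the self-reference of $Q$ and the effective enumeration of $\cP\rOi$) are legitimate.

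For membership, the key step is to confirm that the right-hand side of \cref{alg:Q_line2} is genuinely $O$-l.s.c., so that the sampling algorithm applies and $\lambda_Q^O$ is a well-defined $O$-sampled semimeasure whose completion lies in $\cP\rOi$. First I would invoke Kleene's second recursion theorem to justify that $Q$ may obtain its own description $\langle Q\rangle$ and hence reference $\pi_Q \equiv \bar\lambda_Q^O$ inside its own definition. Next, since the weights $w_\pi$ are assumed merely l.s.c., each $\pi \in \cP\rOi$ is $O$-estimable by \cref{thm:bar_lambda_est_properties}, and $\pi_Q$ is likewise $O$-estimable, I would argue that every summand $w_\pi\,\pi(a_{<t}\|e_{<t})\,\pi(\alpha|\h_{<t})/\pi_Q(a_{<t}\|e_{<t})$ is $O$-l.s.c.\ (a product of an l.s.c.\ weight with $O$-estimable factors, divided by a positive $O$-estimable denominator). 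Summing over the effective enumeration of $\cP\rOi$ preserves $O$-l.s.c.-ness by taking successive partial sums. One subtlety I must discharge here is that the denominator $\pi_Q(a_{<t}\|e_{<t})$ is never zero: this follows by induction on $t$ because some computable measure assigns every finite string positive probability, and such a measure contributes positively to the mixture. Once the expression is $O$-l.s.c., \cref{thm:bar_lambda_O_sampled} (via the sampling algorithm) gives that $\bar\lambda_Q^O \in \cP\rOi$, so $\zeta := \pi_Q \in \cP\rOi$.

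For dominance, the computation is a short telescoping chain. Starting from $\zeta(a_{\leq t}\|e_{\leq t}) = \zeta(a_{<t}\|e_{<t})\,\zeta(a_t|\h_{<t})$, I would use the completion inequality $\bar\lambda_Q^O \geq \lambda_Q^O$ (guaranteed by reflectivity of $O$, i.e.\ $q_\alpha \geq \lambda_T^O(\alpha|x)$) to replace $\zeta(a_t|\h_{<t})$ by $\lambda_Q^O(a_t|\h_{<t})$, then substitute the sampled conditional and cancel the $\pi_Q(a_{<t}\|e_{<t})$ factor against $\zeta(a_{<t}\|e_{<t}) = \pi_Q(a_{<t}\|e_{<t})$. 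This collapses the whole expression to $\sum_{\pi \in \cP\rOi} w_\pi\,\pi(a_{\leq t}\|e_{\leq t})$, which is bounded below by any single term $w_\pi\,\pi(a_{\leq t}\|e_{\leq t})$ since all summands are nonnegative. Taking $c = w_\pi > 0$ yields $\zeta(\cdot) \geq c\,\pi(\cdot)$ for the fixed $\pi$, uniformly over all histories, establishing $\zeta \geq \cP\rOi$.

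The main obstacle I anticipate is the self-referential well-definedness of $Q$ rather than any of the inequalities. The definition of $Q$ appeals to $\pi_Q = \bar\lambda_Q^O$ inside the very procedure computing $\lambda_Q^O$, which is only coherent because the denominator is estimated by running the binary-search machine $B_Q$ (which makes oracle queries about $Q$ without ever simulating $Q$), so there is no infinite regress of execution. I would emphasize that the second recursion theorem supplies a genuine fixed-point index and that the oracle-query mechanism of \cref{thm:bar_lambda_est_properties} is what breaks the apparent circularity — the oracle returns information about $Q$'s limiting behavior without requiring $Q$ to halt on a simulation of itself. Verifying that this fixed point exists and that $\bar\lambda_Q^O$ is a genuine measure (so that $\zeta$ is a legitimate strategy) is the delicate conceptual point; the domination estimate and the $O$-l.s.c.\ bookkeeping are then routine.
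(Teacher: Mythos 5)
Your proposal is correct and follows essentially the same route as the paper: the same pTM $Q$ with Kleene's second recursion theorem, the same $O$-l.s.c.\ bookkeeping for \cref{alg:Q_line2} (including the inductive non-vanishing of the denominator via a computable measure positive on all finite strings), and the same dominance computation in which the completion inequality $\bar\lambda_Q^O \geq \lambda_Q^O$ plus cancellation of $\zeta(a_{<t}\|e_{<t})$ against $\pi_Q(a_{<t}\|e_{<t})$ collapses everything to $\sum_\pi w_\pi\,\pi(a_{\leq t}\|e_{\leq t}) \geq w_\pi\,\pi(a_{\leq t}\|e_{\leq t})$ with $c = w_\pi$. Your emphasis on $B_Q$ making oracle calls about $Q$ without simulating it is exactly the paper's resolution of the apparent circularity, so no gaps remain.
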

When the weights $w$ sum to 1, $\zeta\in\cP\rOi$ is a Bayesian mixture over $\cP\rOi$, but we don't actually need this.
We only need the dominance property, and $\zeta$ dominates the defective Bayesian mixture $\sum_{\pi \in \cP^O_\text{refl}} w_\pi \pi$ and consequently any $\pi \in \cP^O_\text{refl}$.  
%
%-------------------------------%
\paragraph{A grain of truth.}
%-------------------------------%
Choosing $\pi^i_j = \zeta$ for $i \neq j$, $\sigma_i^{\pi^i}$ is defined as a product over $O$-sampled policies $\zeta$ (with deterministic computable rewards)\footnote{It does not depend on $\pi_i^i$, making its definition a slight abuse of notation.}. By \cref{sigma_i_refl}, $\sigma_i^{\pi^i}$ has $O$-estimable conditionals, and by \cref{thm:O_gen_optimal_strategy} there is a reflective-oracle computable optimal strategy $\pi^*_{\smash{\sigma^{\pi^i}_i}} \in \cP\rOi$. Therefore, the policy class $\cP\rOi$ and any multi-player game with l.s.c.\ conditionals form a solution to the grain of truth problem. The setting of Kalai and Lehrer (infinitely repeated games with computable rewards) is a special case: Because $\forall j ~\pi^*_{\smash{\sigma^{\pi^j}_j}} \ll \zeta$, it is easy to see that when $\pi = (\pi^*_{\smash{\sigma^{\pi^1}_1}}, \pi^*_{\smash{\sigma^{\pi^2}_2}}, ..., \pi^*_{\smash{\sigma^{\pi^n}_n}})$, and $\pi^i = (\zeta, ..., \pi^*_{\smash{\sigma^{\pi^i}_i}}, ..., \zeta)$, $\sigma^\pi \ll \smash{\sigma^{\pi^i}}$. Finally, we can appeal to \cite[Thm. 2]{kalai_rational_1993} to conclude the following: 
%
%-------------------------------%
\begin{theorem}[close to $\eps$-Nash equilibrium]\label{thm:kalai_lehrer_convergence}
%-------------------------------%
    In a computable infinitely repeated game $\sigma$, if $\pi^i_j = \zeta$ for $i \neq j$ and $\pi^i_i = \pi_i = \pi^*_{\smash{\sigma^{\pi^i}_i}}$  (so all players are Bayesian), then for every $\eps > 0$, $\sigma^\pi$-a.s.\ there is a time $t_\eps$ such that for all $t \geq t_\eps$, $\sigma^\pi$ plays $\eps$-like the history distribution of a $\eps$-Nash equilibrium.
\end{theorem}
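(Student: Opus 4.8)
The plan is to reduce the statement to a direct application of Kalai and Lehrer's convergence theorem \cite[Thm.~2]{kalai_rational_1993}, so the real work is to verify that their two hypotheses—subjective rationality and a grain of truth (absolute continuity of the true play with respect to each player's beliefs)—hold in the present setup, and then to translate their conclusion into the history-distribution language used here. Crucially, essentially all of the substantive construction is already in hand: the dominant element $\zeta$ of \cref{thm:dominant_zeta} and the absolute continuity $\sigma^\pi \ll \sigma^{\pi^i}$ noted just before the statement.

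First I would check rationality. By hypothesis each player $i$ plays $\pi_i = \pi^*_{\smash{\sigma^{\pi^i}_i}}$, a Bayes-optimal strategy for the belief environment $\sigma^{\pi^i}_i$ in which every opponent $j\neq i$ is modeled as $\zeta$. Since by \cref{def:value_function} the value function $V^{\pi_i}_{\sigma^{\pi^i}_i}$ equals player $i$'s subjectively expected utility $U_i$, a maximizer of the value function is exactly a best response in the game-theoretic sense, so the rationality requirement of \cite{kalai_rational_1993} is met. I would also recall, citing the subjective-environment analysis in Appendix~\ref{app:subjective_environment}, that $\sigma^{\pi^i}_i$ viewed as an environment does not depend on $\pi^i_i$, so the defining equation $\pi_i = \pi^*_{\smash{\sigma^{\pi^i}_i}}$ is not circular and $\pi_i$ is well defined.

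Next I would establish the grain of truth. The true opponent strategies are $\pi_j = \pi^*_{\smash{\sigma^{\pi^j}_j}}$, and by \cref{thm:O_gen_optimal_strategy} each such optimal strategy lies in $\cP\rOi$. By \cref{thm:dominant_zeta} the element $\zeta$ multiplicatively dominates every member of $\cP\rOi$, so for each $j\neq i$ there is $c_j>0$ with $\zeta \geq c_j\pi_j$, whence $\pi_j \ll \zeta = \pi^i_j$. Because the game dynamics are common to $\sigma^\pi$ and $\sigma^{\pi^i}$—under perfect monitoring $e^i_t = a^{\neq i}_t$ is a deterministic function of the actions, which is precisely Kalai and Lehrer's perfect-monitoring repeated-game setting—and the players act independently conditional on their histories, these per-opponent dominations multiply to yield $\sigma^\pi \ll \sigma^{\pi^i}$ for every $i$, the absolute-continuity hypothesis. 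The beliefs are independent across opponents by construction (each factor $\pi^i_j = \zeta$ enters the product over $j\neq i$ separately, and by Kuhn's theorem each such $\zeta$ is a legitimate behavior-strategy marginal), matching their independence requirement; note that no player needs to know the others are Bayesian, only that the true $\pi_j$ are absolutely continuous with respect to $\zeta$.

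With both hypotheses verified, I would invoke \cite[Thm.~2]{kalai_rational_1993} to conclude that for every $\eps>0$ the real play is, $\sigma^\pi$-almost surely and from some finite time $t_\eps$ onward, $\eps$-close to the history distribution of an $\eps$-Nash equilibrium. I do not expect a deep obstacle in the argument itself; the main care required is the faithful translation between the measure-theoretic reinforcement-learning framework here and the game-theoretic framework of \cite{kalai_rational_1993}—in particular matching their precise notion of ``$\eps$-close play'' and of $\eps$-equilibrium to our history-distribution language, and confirming that every structural assumption they require (finite stage game, perfect monitoring, independent beliefs, fixed computable payoffs $u_i$) is genuinely satisfied by the present repeated-game specialization.
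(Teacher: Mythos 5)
Your proposal is correct and takes essentially the same route as the paper: verify Kalai--Lehrer's two hypotheses (subjective rationality, holding by assumption together with the non-circularity argument of Appendix~\ref{app:subjective_environment}, and absolute continuity $\sigma^\pi \ll \sigma^{\pi^i}$, obtained because each true strategy $\pi^*_{\smash{\sigma^{\pi^j}_j}}$ lies in $\cP\rOi$ by \cref{thm:O_gen_optimal_strategy} and is therefore dominated by $\zeta$ via \cref{thm:dominant_zeta}), and then invoke \cite[Thm.~2]{kalai_rational_1993}. The only cosmetic omission is that you apply \cref{thm:O_gen_optimal_strategy} without first citing \cref{sigma_i_refl} to confirm its hypothesis, namely that the belief environment $\sigma^{\pi^i}_i$ (a product of the computable game with the $O$-sampled policies $\zeta$) has $O$-estimable conditionals.
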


The term $\eps$-Nash equilibrium means that every players' expected utility is within $\eps$ of the best achievable given knowledge of the other players' strategies. The term ``plays $\eps$-like'' is defined in \cite[Def. 2]{kalai_rational_1993}, relying on \cite[Def. 1]{kalai_rational_1993} of ``$\eps$-close" measures. It means that with high probability the conditionals of the history distributions are close; Kalai and Lehrer point out this is a kind of Provably Approximately Correct (PAC) guarantee. Statements holding after time $t_\eps$ refer to measures conditioned on the history up to time $t_\eps$. 

%-------------------------------%
\paragraph{Matching pennies example.}
%-------------------------------%
In the game of \emph{matching pennies} there are two agents ($n = 2$),
and two actions $\cA = \{ \alpha, \beta \}$
representing the two sides of a penny.
In each time step
agent $1$ wins if the two actions are identical and
agent $2$ wins if the two actions are different.
The payoff matrix is as follows.
\begin{center}
\begin{tabular}{l|cc}
         & $\alpha$ & $\beta$ \\
\hline
$\alpha$ & 1,0      & 0,1 \\
$\beta$  & 0,1      & 1,0
\end{tabular}
\end{center}
We use $\cE = \{ 0, 1 \}$ to be the set of rewards
(observations are vacuous) and define the multi-agent environment $\sigma$
to give reward $1$ to agent $1$ iff $a_t^1 = a_t^2$ ($0$ otherwise) and
reward $1$ to agent $2$ iff $a_t^1 \neq a_t^2$ ($0$ otherwise).

According to our result, when the game is known Bayesian players with prior $\cP^O_\text{refl}$ eventually converge to a Nash equilibrium of the repeated game. When discounting is geometric with discount factor $\gamma$ close to 0, this means they will approximately play the (only) Nash equilibrium of the stage game, randomizing uniformly between actions $\alpha$ and $\beta$.

See \cite{brand_imp_2016} for an early discussion of this game in the context of computability theory. 

%
%-------------------------------%
\paragraph{Differing action sets.}
%-------------------------------%
When each player has a different action set, in order to use a consistent reflective oracle for every player, it is necessary to consider the encodings of actions. Leike et al.'s (implicit) approach \cite{leike_formal_2016} was to choose complete, prefix-free, binary codes for each type of symbol, which introduces no serious difficulties but means that their algorithms should properly be specified on the bit level. We would like to use non-binary reflective oracles to take a more elegant approach. The naive idea of combining the action sets for each player $\cA = \bigsqcup_{1\leq i \leq n} \cA_i$ does not immediately work because player $i$'s strategies must be have their conditionals completed to probability measures in $\Delta\cA_i\neq\Delta\cA$. The solution is slightly harder than using $n$ separate reflective oracles, because the $n$ reflective oracles would have to consistently answer queries about each other. Fortunately it is possible to use a \emph{typed} reflective oracle as described in \cref{sec:non_binary_refl_oracle_existence} to map actions from each action set to their own simplex.

%%%%%%%%%%%%%%%%%%%%%%%%%%%%%%%%%%%%%%%%%%%%%%%%%%%%%%%%%%%%%%%
\section{Impossibility Results}
\label{sec:impossibility-results}
%%%%%%%%%%%%%%%%%%%%%%%%%%%%%%%%%%%%%%%%%%%%%%%%%%%%%%%%%%%%%%%

Why does our solution to Kalai and Lehrer's grain of truth problem
not violate the impossibility results from the literature?
Assume we are playing an infinitely repeated game where
in the stage game
no agent has a weakly dominant action and
the pure action maxmin reward is strictly less then the minmax reward.
The impossibility result of \cite{nachbar_prediction_1997, nachbar_beliefs_2005}
state that
there is no class of policies $\cP$ such that
the following are simultaneously satisfied.
\begin{itemize}
\item \emph{Learnability.}
	Each agent learns to predict the other agent's actions.
\item \emph{Caution and Symmetry.}
	The set $\cP$ is closed under simple policy modifications
	such as renaming actions.
\item \emph{Purity.}
	There is an $\varepsilon > 0$ such that
	for any stochastic policy $\pi \in \cP$
	there is a deterministic policy $\pi' \in \cP$ such that
	if $\pi'(\h_{<t}) = a$, then $\pi(a | \h_{<t}) > \varepsilon$.
\item \emph{Consistency.}
	Each agent always has an $\varepsilon$-best response available in $\cP$.%
\end{itemize}
In order to converge to an $\varepsilon$-Nash equilibrium,
each agent has to have an $\varepsilon$-best response available to them,
so consistency is our target.
Learnability is immediately satisfied for any environment in our class
if we have a dominant prior \cite{kalai_rational_1993}.
For $\cP^O_\text{refl}$ caution and symmetry are also satisfied
since this set is closed under any computable modifications to policies.
However, our class $\cP^O_\text{refl}$ avoids this impossibility result because
it violates the purity condition:
Let $T_1, T_2, \ldots$ be an enumeration of $\mathcal{T}$.
With action space $\cA = \{0, 1\}$, consider the policy $\pi$
that maps history $\h_{<t}^i$ to the action $1 - \flip(O_1(T_t, \h^i_{<t}, 1/2))$.
If $T_t$ is deterministic,
then $\pi$ will take a different action than $T_t$
for any history of length $t - 1$.
Therefore no deterministic reflective-oracle-computable policy
can take an action that
$\pi$ assigns positive probability to in every time step.

\cite{foster_impossibility_2001} present a condition that
makes convergence to a Nash equilibrium impossible:
if the player's rewards are perturbed by a small real number
drawn from some continuous density $\nu$,
then for $\nu$-almost all realizations the players do not learn to
predict each other and do not converge to a Nash equilibrium.
For example, in a matching pennies game,
rational agents randomize only if the (subjective) values of both actions
are exactly equal.
But this happens only with $\nu$-probability zero, since $\nu$ is a density.
Thus with $\nu$-probability one the agents do not randomize.
If the agents do not randomize,
they either fail to learn to predict each other,
or they are not acting rationally according to their beliefs:
otherwise they would seize the opportunity to
exploit the other player's deterministic action.

But this does not contradict our convergence result:
the class $\cP^O_\text{refl}$ is countable and each $\nu \in \cP^O_\text{refl}$
has positive prior probability.
Perturbation of rewards with arbitrary real numbers is not possible.
Even more, the argument given by \cite{foster_impossibility_2001}
cannot work in our setting:
the Bayesian mixture $\pi_Q$ mixes over $\lambda_T$
for all probabilistic Turing machines $T$.
For Turing machines $T$ that sometimes do not halt,
the oracle decides how to complete
$\lambda_T$ into a measure $\bar{\lambda}^O_T$.
Thus the oracle has enough influence on the exact values in the Bayesian mixture
that the values of two actions in matching pennies can be made exactly equal.

%%%%%%%%%%%%%%%%%%%%%%%%%%%%%%%%%%%%%%%%%%%%%%%%%%%%%%%%%%%%%%%
\section{Asymptotic Optimality in Unknown Games} \label{sec:unknown_games}
%%%%%%%%%%%%%%%%%%%%%%%%%%%%%%%%%%%%%%%%%%%%%%%%%%%%%%%%%%%%%%%

We now go further and show convergence to equilibrium even when the game is unknown and is not-repeated but one infinitely long game, as long as the players use \emph{asymptotically optimal} strategies instead of Bayes-optimal strategies.

%-------------------------------%
\begin{definition}[asymptotic optimality]\label{def:asymptotic_optimality}
%-------------------------------%
    A policy $\pi$ is asymptotically optimal in mean in environment class $\cM$ iff $\forall\mu\in\cM$,\ $\mathbb{E}^\pi_\mu [V^*_\mu(h_{<t}) - V^\pi_\mu(h_{<t})] \rightarrow 0$ as $t \rightarrow \infty$. 
\end{definition}

In fact, we can show convergence even when the players are not initially aware of each others' existence. To do this, we must extend the environment class to all \rOc\ environments (which we define below similarly to $\cP^O_\text{refl}$). Because we would still like players to be included in the environment we need the oracle to be usable for computing either strategies or multi-player games. Then the entire arrangement of a multi-player game with the other players embedded is \rOc. The situation is similar to differing action sets; it would be possible to simply give environments access to a reflective oracle that provides completed action probabilities, but unfortunately this could not be used to complete the action-conditional semimeasures generated by pTM's producing perceptions, which ultimately means that the environment class would not be effectively enumerable. Instead we use a typed reflective oracle capable of completing both percept and action distributions. After defining our environment class, we show that Thompson sampling policies converge to a Nash equilibrium.

%-------------------------------%
\begin{definition}[$\cM^O_\text{refl}$]\label{def:_M_refl}
%-------------------------------%
    Fix alphabet $\Sigma = \cA \sqcup \cE$. Let $O$ be a typed reflective oracle with input and output alphabet $\Sigma$. Let $\cM\rOi$ be the set of  environments $\nu_T = \bar{\lambda}^O_T$, called \rOc. 
\end{definition}
Unlike the class of environments or games with estimable\ conditionals, $\cM\rOi$ is effectively enumerable because halting issues are resolved by oracle completion. 
%
%-------------------------------%
\begin{theorem}[convergence to equilibrium] \label{thm:asymptotically_opt_convergence}
%-------------------------------%
    Let $\sigma$ be a reflective-oracle computable multi-agent game and let $\pi_1, ..., \pi_n$ be \rOc\ policies that are asymptotically optimal in mean in the class $\cM\rOi$. Then for all $\eps > 0$ and all $i \in \{ 1, ..., n \}$ the $\sigma^{\pi_{1:n}}$-probability that the policy $\pi_i$ is an $\eps$-best response converges to 1 as $t\to\infty$. 
\end{theorem}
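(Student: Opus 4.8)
The plan is to reduce the multi-agent convergence statement to the single-agent asymptotic optimality of each $\pi_i$ inside its subjective environment. The key conceptual point is that, by \cref{def:_M_refl} and the discussion preceding it, the typed reflective oracle lets us treat the \emph{entire} arrangement that player $i$ faces---the game $\sigma$ together with the other players $\pi_{\neq i}$---as a single \rOc\ environment $\sigma^\pi_i \in \cM\rOi$. First I would verify this membership explicitly: since $\sigma$ is \rOc\ and each $\pi_j$ for $j\neq i$ is \rOc, the composition that produces $\sigma^\pi_i$ is built from $O$-estimable pieces by the uniformly continuous marginalization and product operations of \cref{sigma_i_refl}, so $\sigma^\pi_i = \bar\lambda^O_{T}$ for some computably constructable oracle pTM $T$, placing it in the class $\cM\rOi$ over which $\pi_i$ is assumed asymptotically optimal in mean.

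With $\sigma^\pi_i \in \cM\rOi$ identified as the true environment $\mu$ faced by player $i$, I would invoke the asymptotic optimality hypothesis directly: by \cref{def:asymptotic_optimality}, $\mathbb{E}^{\pi_i}_{\sigma^\pi_i}[V^*_{\sigma^\pi_i}(h^i_{<t}) - V^{\pi_i}_{\sigma^\pi_i}(h^i_{<t})] \to 0$ as $t\to\infty$. Here $V^*_{\sigma^\pi_i}$ is the value of the best response against the \emph{actual} strategies of the other players, so the value gap being driven to zero in mean is exactly the statement that $\pi_i$ is asymptotically achieving the best-response value. The next step is translating this expected-value convergence into the desired probabilistic $\eps$-best-response statement. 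Since the per-step rewards lie in $[0,1]$ and $\Gamma_t$ normalizes the discounted sum, the value gap is a nonnegative bounded random variable whose expectation tends to $0$; by Markov's inequality the $\sigma^\pi$-probability that the gap exceeds any fixed threshold $\delta$ tends to $0$. Choosing $\delta$ in terms of $\eps$ and unwinding the definition of $\eps$-best response (player $i$'s utility within $\eps$ of the optimum given the others' strategies, matching the utility $U_i(\pi)=V^{\pi_i}_{\sigma_i}$ identification in \cref{def:value_function}) then gives that the probability $\pi_i$ is an $\eps$-best response converges to $1$.

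The main obstacle I anticipate is the bookkeeping connecting the single-agent value gap to the game-theoretic notion of $\eps$-best response, and in particular making sure the two notions of ``optimal value'' coincide: the optimality in $\cM\rOi$ compares $\pi_i$ against the optimal policy for the \emph{fixed} environment $\sigma^\pi_i$ induced by the opponents' true play, whereas an $\eps$-best response is defined relative to those same opponents, so I must confirm these are literally the same comparison and that no subtle dependence of $\sigma^\pi_i$ on $\pi_i$ sneaks in (it does not, by \cref{app:subjective_environment}). A secondary subtlety is that asymptotic optimality is stated in mean over $h^i_{<t}\sim\pi_i$ played against $\sigma^\pi_i$, which is exactly the marginal of $\sigma^\pi$ on player $i$'s history, so the convergence-in-mean and the Markov step are taken with respect to the correct measure $\sigma^\pi$ claimed in the theorem. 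Once these identifications are pinned down, the argument is essentially a direct application of the asymptotic optimality hypothesis plus a standard Markov-inequality passage from mean convergence to convergence in probability, applied simultaneously for each $i\in\{1,\dots,n\}$.
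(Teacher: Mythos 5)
Your proposal is correct and follows essentially the same route as the paper's proof: establish $\sigma^\pi_i \in \cM\rOi$ via the argument of Theorem~\ref{sigma_i_refl}, apply the asymptotic-optimality-in-mean hypothesis to the subjective environment, and pass from convergence in mean to convergence in probability (your explicit Markov-inequality step is exactly what the paper's phrase ``convergence in mean implies convergence in probability for bounded random variables'' abbreviates). Your additional bookkeeping---the non-dependence of $\sigma^\pi_i$ on $\pi_i$ and the identification of the value gap with the $\eps$-best-response condition---is sound and simply makes explicit what the paper leaves implicit.
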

\begin{proof}
    This is \cite[Thm. 28]{leike_formal_2016}. Following the argument of \cref{sigma_i_refl}, subjective environments are in $\cM\rOi$. Because each policy is asymptotically optimal in mean in its subjective environment, \cref{thm:asymptotically_opt_convergence} follows from the observation that convergence in mean implies convergence in probability for bounded random variables. Therefore,
    \[
    \sigma^\pi_i[V^*_{\sigma_i}(\text{\ae}^i_{<t}) - V^{\pi_i}_{\sigma_i}(\text{\ae}^i_{<t}) \geq \eps] \rightarrow 0 \text{\ as \ } t \rightarrow \infty
    \]
    so the probability that $\pi_i$ plays an $\eps$-best response converges to 1 as $t \rightarrow \infty$.
\end{proof}

%-------------------------------%
\paragraph{Thompson sampling.}
%-------------------------------%
Now we only need to find a set of asymptotically optimal \rOc\ strategies. 
It is normally assumed that Bayesian agents solve the exploration/exploitation problem in a principled way,
so it is somewhat surprising that they are not (even weakly) asymptotically optimal in too general environment classes \cite{Orseau:10}.
Thompson sampling \cite{thompson1933likelihood,Hutter:16thompgrl} is an asymptotically optimal variation of the Bayesian rational strategy modified to increase exploration. Let the effective horizon $H_t(\eps)$ be the minimum number of steps in the future such that the discount normalization factor is less than a $\eps$ fraction of the current discount normalization factor; in the case $\eps = 1/2$ this is the ``half-life" of $\Gamma_t$. Formally $H_t(\eps) = \min_{k}\{k|\Gamma_{t+k}/\Gamma_t \leq \eps\}$. Then Thompson sampling is described by \cref{alg:pi_TS} and denoted by $\pi_\TS$. Naturally it is parameterized by a class of environments and a p(oste)rior weight function $w$. We will choose $\cM\rOi$ for the class of environments, which means players initially are not even aware that their opponents exist (or of their number). Then the first condition of the strong grain of truth property is satisfied:
%-------------------------------%
\begin{theorem}[Strong grain of truth property, first condition]\label{thm:M_includes_G_P}
%-------------------------------%
    If $\sigma$ is an $O$-estimable game and $\pi_{\neq i} \in (\cP^O_\text{refl})^{n-1}$ then $\sigma^\pi_i \in \cM^O_\text{refl}$.
\end{theorem}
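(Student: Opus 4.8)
The plan is to show that the subjective environment $\sigma^\pi_i$ is itself an $O$-sampled proper probability measure, so that it equals $\bar\lambda^O_T$ for some pTM $T$ and therefore lies in $\cM^O_\text{refl}$ by \cref{def:_M_refl}. The argument splits into two pieces: first establishing that the conditionals of $\sigma^\pi_i$ are $O$-estimable, and then promoting this to the required $\bar\lambda^O_T$ form by exploiting that the oracle here is \emph{typed} over all of $\Sigma = \cA \sqcup \cE$.

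First I would argue, following the argument of \cref{sigma_i_refl}, that $\sigma^\pi_i$ has $O$-estimable conditionals. The only difference from \cref{sigma_i_refl} is that there $\sigma$ is generated by a pTM (hence has l.s.c., and being proper, estimable conditionals), whereas here $\sigma$ is assumed merely $O$-estimable. But the construction of $\sigma^\pi$, $\sigma^\pi_i$, and $\sigma_i$ from $\sigma$ and $\pi_{\neq i}$ uses only finite products of conditionals together with finite marginalizing sums over the other players' length-$(t{-}1)$ action–percept histories (the alphabets $\cA,\cE$ being finite). Each such operation preserves $O$-estimability, and since every $\pi_j \in \cP^O_\text{refl}$ is $O$-estimable by \cref{thm:bar_lambda_est_properties}, the composite conditionals of $\sigma^\pi_i$ are $O$-estimable, computed by an oracle pTM constructible from the machines for $\sigma$ and $\pi_{\neq i}$.

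Next I would convert this to membership in $\cM^O_\text{refl}$. Because we now work with the typed reflective oracle $O$ of \cref{def:_M_refl}, which completes percept conditionals to $\Delta\cE$ (not merely action conditionals to $\Delta\cA$), the obstruction flagged in the proof of \cref{thm:O_gen_optimal_strategy} — that environments output percepts and so could not be completed — no longer applies. Since an $O$-estimable function is $O$-l.s.c., the sampling construction underlying \cref{thm:bar_lambda_O_sampled} yields a pTM $T$ whose conditionals $\lambda^O_T(e^i \mid \h^i_{<t}a^i_t)$ equal $\sigma^\pi_i(e^i \mid \h^i_{<t}a^i_t)$ on every history of positive probability; on zero-probability histories $T$ may fail to halt and $O$ completes arbitrarily, exactly as in \cref{lemma:est_O_sampled}, which is harmless because the prefix already carries probability zero.

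Finally I would observe that $\sigma^\pi_i$ is a proper probability measure over percept sequences, being built from the proper game $\sigma$ and proper strategies $\pi_{\neq i}$ by marginalization, so the sampling procedure halts with probability $1$ and $\lambda^O_T$ is already normalized. The reflective-oracle constraints $q_e \geq \lambda^O_T(e \mid x)$ together with $\sum_{e\in\cE} q_e = 1$ then force $\bar\lambda^O_T = \lambda^O_T = \sigma^\pi_i$, giving $\sigma^\pi_i = \bar\lambda^O_T \in \cM^O_\text{refl}$. I expect the main obstacle to be the typing step: one must check that the typed oracle genuinely completes the $\cE$-valued conditionals, so that the sampling-based construction transfers from action outputs to percept outputs, and confirm that on measure-zero histories the arbitrary completion leaves $\bar\lambda^O_T$ equal to $\sigma^\pi_i$ rather than perturbing it. Once the oracle is typed over all of $\Sigma$, the remaining estimability bookkeeping is routine.
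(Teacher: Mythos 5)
Your proposal is correct and follows essentially the same route as the paper, whose proof is simply the one-line observation that Theorem~\ref{sigma_i_refl} generalizes to $O$-estimable games (replacing the pTM-sampled-hence-estimable $\sigma$ by an assumed $O$-estimable one, with the same closure of $O$-estimability under the finite sums, products, and ratios defining $\sigma^\pi_i$). Your second and third steps---passing from $O$-estimable conditionals to $\sigma^\pi_i = \bar\lambda^O_T \in \cM\rOi$ via the typed oracle, Theorem~\ref{thm:bar_lambda_O_sampled}, and the fact that oracle completion leaves proper measures unchanged (with zero-probability histories handled as in Lemma~\ref{lemma:est_O_sampled})---are exactly the details the paper leaves implicit in Definition~\ref{def:_M_refl} and the surrounding discussion, so they are a faithful elaboration rather than a different argument.
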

\begin{proof}
    This follows from a slight generalization of Theorem~\ref{sigma_i_refl} to include $O$-estimable games.
\end{proof}
If more prior knowledge is required we can instead enumerate $\cG\times\cP^{n-1}$. Note that even in this case $w$ is a \emph{joint} p(oste)rior over $\cM=\cG\times\cP^{n-1}$, i.e.\ can model any collusion between opponents (and even the game itself).

If we only assume that the weights are lower semicomputable semimeasures, there is a chance that sampling from them fails. This means that the infinite loop may get stuck after finitely many iterations. It seems that Thompson sampling is not \rOc\ (because it is not a contextual probability measure) without stronger assumptions on $w$, for instance $O$-estimability. We can rephrase the Thompson sampling algorithm as shown in \cref{alg:stepwise_pi_TS} to explicitly show how to sample actions on each step (reflective-oracle computably) without persistent memory instead of abstractly describing the behavior between resampling environments. Equivalence is similar to Kuhn's theorem \cite{aumann_mixed_1964}.
\begin{tabular}{cc}

\begin{minipage}[t]{0.36\textwidth}
%-------------------------------%
\begin{algorithm}[H]
%-------------------------------%
\caption{Thompson sampling strategy $\pi_\TS$} 
\label{alg:pi_TS}
\begin{algorithmic}[1]
    \Input Percept stream $e_{1:\infty}$
    \Output $a_{1:\infty} \sim \pi_\TS(\cdot||e_{1:\infty})$
    \While{true}
    \State sample $\rho \sim w( \cdot | \h_{<t})$
    \State follow $\pi_\rho^*$ for $H_t(\eps_t)$ steps
    \EndWhile
\end{algorithmic}
\end{algorithm}
\end{minipage}
&
\begin{minipage}[t]{0.6\textwidth}
%-------------------------------%
\begin{algorithm}[H]
%-------------------------------%
	\caption{Stepwise Thompson sampling strategy $\pi_\TS$} 
    \label{alg:stepwise_pi_TS}
	\begin{algorithmic}[1]
        \Input History $\h_{<t}$
        \Output $\pi_\TS(a_t|\h_{<t})~\forall a_t\in\cA$
        \State $t_0 \gets 0$; $i \gets 0$
        \While{$t_i \leq t$} \{ $t_{i+1} \gets t_i + H_{t_i}(\eps_{t_i})$; $i \gets i+1$ \}
        \EndWhile
        \State $t' \gets t_{i-1}$
        \State $\displaystyle\pi_\TS(a_t|\h_{<t}) := \sum_{\nq\rho\in\cM\rOi\nq} w( \rho | \h_{<t'}) \frac{\pi^*_\rho(\h_{t':t}|\h_{<t'})}{\pi_\TS(\h_{t':t}|\h_{<t'})} \pi_\rho^*(a_t|\h_{<t})$
    \end{algorithmic} 
\end{algorithm}
\end{minipage}
     
\end{tabular}

\hfill

It still remains to show that \cref{alg:stepwise_pi_TS} is \rOc. By definition,
\begin{equation*} %\label{eq:cond_w_rho}
    w(\rho|\h_{<t}) ~=~ w(\rho) \frac{\rho(\h_{<t})}{\xi(\h_{<t})}
    ~~~\text{where}~~~ \xi(\h_{<t}) ~:=~ \sum_{\rho\in\cM\rOi} w(\rho) \rho(\h_{<t})
\end{equation*}
When $w$ is $O$-estimable, also every factor above (assuming that the environment class is general enough that all finite history prefixes are possible) and therefore the posterior weights are $O$-estimable. 
For any $\rho \in \cM\rOi$, \cref{thm:O_gen_optimal_strategy} shows that the optimal policies $\pi_\rho^*$ are all $O$-estimable. This makes Line 4 of \cref{alg:stepwise_pi_TS} possible with $O$ access.
%
%-------------------------------%
\begin{theorem}[Thompson sampling computability] \label{thm:est_pi_TS_comp}
%-------------------------------%
    For estimable $\Gamma_t$ and normalized estimable prior $w$ over over $\cM\rOi$, $\pi_\TS$ over $\cM\rOi$ is \rOc.
\end{theorem}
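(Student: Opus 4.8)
The plan is to verify that the conditional $\pi_\TS(a_t \mid \h_{<t})$ produced by Line~4 of \cref{alg:stepwise_pi_TS} is $O$-estimable, and then to invoke \cref{thm:bar_lambda_O_sampled} to upgrade $O$-estimability to $O$-sampling. Since the quantity computed in Line~4 is by construction a proper conditional distribution over $\cA$ (the summands are convex combinations of the probability measures $\pi^*_\rho(\cdot \mid \h_{<t})$ and the posterior weights $w(\rho \mid \h_{<t'})$ sum to one), once its conditionals are shown $O$-estimable the definition of \rOc\ strategies together with \cref{thm:bar_lambda_O_sampled} immediately gives that $\pi_\TS$ is \rOc.

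First I would dispatch the bookkeeping in Lines~1--3: the resampling times $t_0 < t_1 < \cdots$ are generated by the while loop, which only requires the effective horizon $H_{t_i}(\eps_{t_i})$ to be computable. Given estimable $\Gamma_t$ and a computable (rational) exploration schedule $\eps_t$, each comparison $\Gamma_{t+k}/\Gamma_t \le \eps_t$ can be decided by approximating $\Gamma$ closely enough, so the loop terminates and returns a computable $t' = t_{i-1}$; the only delicate point is an exact tie $\Gamma_{t+k}/\Gamma_t = \eps_t$ at the critical $k$, which does not occur for the standard geometric or finite-horizon discounts and can otherwise be avoided by perturbing $\eps_t$. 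Note that $t' = t_{i-1} \le t < t_i$, so the segment $[t',t]$ lies within a single Thompson block, in which no resampling occurs.

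Next I would assemble the three ingredients of Line~4 as $O$-estimable functions. The posterior weights factor as $w(\rho \mid \h_{<t'}) = w(\rho)\,\rho(\h_{<t'})/\xi(\h_{<t'})$ with $\xi(\h_{<t'}) = \sum_\rho w(\rho)\,\rho(\h_{<t'})$; each $\rho = \bar\lambda^O_T$ is $O$-estimable by \cref{thm:bar_lambda_est_properties}, and because $w$ is normalized the tail $\sum_{\text{index}(\rho) > N} w(\rho)\,\rho(\h_{<t'}) \le \sum_{\text{index}(\rho) > N} w(\rho) \to 0$, so $\xi$ is $O$-estimable by truncation, whence so is each posterior weight (the denominator $\xi$ is positive since some computable measure assigns the finite prefix $\h_{<t'}$ positive probability and receives positive prior). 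The optimal policies $\pi^*_\rho$ are $O$-estimable by \cref{thm:O_gen_optimal_strategy}, whose hypotheses hold because each $\rho$ has $O$-estimable conditionals and $\Gamma_t$ is estimable. Crucially, the apparently self-referential denominator $\pi_\TS(\h_{t':t} \mid \h_{<t'})$ need not be computed by recursion: since there is no resampling inside the block, Thompson sampling draws a single $\rho \sim w(\cdot \mid \h_{<t'})$ and then follows $\pi^*_\rho$, yielding the mixture identity $\pi_\TS(\h_{t':t} \mid \h_{<t'}) = \sum_\rho w(\rho \mid \h_{<t'})\,\pi^*_\rho(\h_{t':t} \mid \h_{<t'})$, which is again an $O$-estimable sum controlled by the same uniform tail bound. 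This is the key simplification over \cref{thm:dominant_zeta}: because $w$ is estimable rather than merely l.s.c., Line~4 is an honest ratio of $O$-estimable mixtures and no appeal to Kleene's recursion theorem is required.

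Finally I would combine these: the numerator $\sum_\rho w(\rho \mid \h_{<t'})\,\pi^*_\rho(\h_{t':t} \mid \h_{<t'})\,\pi^*_\rho(a_t \mid \h_{<t})$ is $O$-estimable, and the common denominator $\pi_\TS(\h_{t':t} \mid \h_{<t'})$ is $O$-estimable and bounded away from $0$ (inductively, exactly as in the $\zeta$ argument), so their ratio $\pi_\TS(a_t \mid \h_{<t})$ is $O$-estimable. Applying \cref{thm:bar_lambda_O_sampled} then yields that $\pi_\TS$ is $O$-sampled, i.e.\ \rOc. I expect the main obstacle to be the uniform convergence of the infinite mixtures $\xi$ and $\pi_\TS(\h_{t':t} \mid \h_{<t'})$: one must use normalization of $w$ to bound the tails uniformly and keep both denominators strictly positive, so that the estimates of the ratios converge at a controlled rate. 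The self-reference, by contrast, dissolves once the within-block segment marginal is expanded as a posterior mixture.
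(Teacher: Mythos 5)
Your proposal is correct and follows essentially the same route as the paper: establish $O$-estimability of the posterior weights $w(\rho|\h_{<t'}) = w(\rho)\rho(\h_{<t'})/\xi(\h_{<t'})$ and of the optimal policies $\pi^*_\rho$ (via \cref{thm:O_gen_optimal_strategy}), conclude that Line~4 of \cref{alg:stepwise_pi_TS} is computable with $O$ access, and pass from $O$-estimable to $O$-sampled (hence \rOc) via \cref{thm:bar_lambda_O_sampled}. Your explicit expansion of the self-referential denominator $\pi_\TS(\h_{t':t}|\h_{<t'})$ as the posterior mixture $\sum_\rho w(\rho|\h_{<t'})\,\pi^*_\rho(\h_{t':t}|\h_{<t'})$, and the tail-truncation argument showing $\xi$ is $O$-estimable from the normalization of $w$, simply make rigorous two details the paper's proof leaves implicit.
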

Together, \cref{thm:M_includes_G_P} and \cref{thm:est_pi_TS_comp} show that $\cP^O_\text{refl}$ and the class of $O$-estimable games satisfy the Thompson sampling version of the strong grain of truth property.

Technically, computing the horizon $H_t(\epsilon_t)$ exactly would require finitely computable $\Gamma_t$ and $\epsilon_t$, but the convergence of Thompson sampling only depends on $\epsilon_t > 0$ and $\epsilon_t \rightarrow 0$. As long as $\Gamma_{t+k}/\Gamma_t$ is computed to sufficient precision to ensure the ratio between resampling steps decreases to 0 this is equivalent to Thompson sampling with an acceptable choice of $\epsilon_t$.  

\paragraph{(Un)normalized weights $w$.}
Recall that all normalized l.s.c.\ $w$ are also estimable. Given that we generally assume in this paper that the weights are at least l.s.c., \cref{thm:est_pi_TS_comp} only relies on the weights summing to one; without this requirement, Thompson sampling would sometimes fail to sample an environment and its behavior is under-specified!  
Generalizing to l.s.c.\ weights, it is natural to try to use the reflective oracle to somehow complete Thompson sampling. We could try to complete $\pi_\TS$'s environment mixture $\xi$. Unfortunately this would not explicitly complete the weights which Thompson sampling needs access to; $\pi_\TS$ requires not a dominant environment but explicit coefficients. The reflective oracle could be used to directly complete each weight from an oracle pTM generating it (in the sense of outputting 1 with probability $w(\rho | \h_{<t})$ and otherwise failing to halt) but it is unclear whether the individually completed weights would still sum to 1.\footnote{Really, what we would like to do is divide $\xi$ by $\sum_\rho w(\rho)$ to normalize directly, but this is not even l.s.c.}
\newline

Combined with \cref{thm:est_pi_TS_comp}, \cref{thm:O_limit_comp} tells us that we can choose $O$ to make Thompson sampling limit-computable, which lets us improve \cref{thm:asymptotically_opt_convergence}.
%
%-------------------------------%
\begin{theorem}[limit-computable convergence to equilibrium {\cite[Cor.20]{leike_formal_2016}}] \label{thm:limit_comp_nash_convergence}
%-------------------------------%
    There are limit-computable strategies $\pi_1, ..., \pi_n$ such that for any computable multi-agent game $\sigma$ and for all $\eps > 0$ and all $i \in \{ 1, ..., n \}$ the $\sigma^{\pi_{1:n}}$-probability that the policy $\pi_i$ is an $\eps$-best response converges to 1 as $t\to\infty$.
\end{theorem}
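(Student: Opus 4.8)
The plan is to instantiate \cref{thm:asymptotically_opt_convergence} with Thompson sampling strategies and then strip away the oracle using a limit-computable reflective oracle. First I would fix a limit-computable typed reflective oracle $O$ on the alphabet $\Sigma = \cA \sqcup \cE$, which exists by \cref{thm:O_limit_comp} together with its non-binary extension \cref{thm:non_binary_limit_comp_refl_oracle}. I would also fix an estimable discount normalization factor $\Gamma_t$ with $\eps_t \to 0$ and a normalized estimable prior $w$ that assigns positive weight to every environment in $\cM\rOi$ (for instance a simplicity-based prior renormalized to a measure, which is estimable since normalized l.s.c.\ measures are estimable). I would then set each player's strategy to be the Thompson sampling policy $\pi_i := \pi_\TS$ over $\cM\rOi$ with this prior and discount.

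Second, I would verify that these choices satisfy every hypothesis of \cref{thm:asymptotically_opt_convergence}. By \cref{thm:est_pi_TS_comp} each $\pi_\TS$ is \rOc. The given game $\sigma$ is computable, hence estimable, hence by \cref{lemma:est_O_sampled} it equals $\bar\lambda^O_T$ for some pTM $T$ regardless of the choice of $O$, so $\sigma$ is a \rOc\ multi-agent game. For asymptotic optimality in mean I would invoke the grain of truth: by \cref{thm:M_includes_G_P} each player's subjective environment $\sigma^\pi_i$ lies in $\cM\rOi$ and therefore carries positive prior weight, so the cited asymptotic optimality of Thompson sampling \cite{Hutter:16thompgrl} applies in each subjective environment. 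With all hypotheses in place, \cref{thm:asymptotically_opt_convergence} yields that for every $\eps>0$ and every $i$ the $\sigma^{\pi_{1:n}}$-probability that $\pi_i$ is an $\eps$-best response tends to $1$.

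Finally, I would upgrade reflective-oracle computability to genuine limit-computability. Each $\pi_\TS$ is $O$-estimable, meaning its conditional values are produced by an oracle procedure built on the binary-search machine of \cref{thm:bar_lambda_est_properties}, which makes only finitely many queries per target precision. Since $O$ is itself limit-computable, I would substitute the $n$-th approximation of $O$ for each oracle answer and dovetail over $n$; because the conditional $\pi_\TS(a_t\,|\,\h_{<t})$ depends estimably on the finitely many oracle values consulted, the substituted computation converges to it, producing a limit-computable strategy with no oracle access. I expect this last step to be the main obstacle: the reflective oracle may randomize exactly at the threshold $p=q_\alpha$, and approximating it from outside can flip Boolean answers transiently, so one must argue (as in \cref{thm:bar_lambda_est_properties}) that such threshold hits form a boundary event that does not affect the limit, and that dovetailing the oracle approximations preserves convergence of the real-valued conditionals. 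The price of eliminating the oracle is exactly the drop from estimable to merely limit-computable, since we no longer control when the oracle approximations are accurate, which is precisely what the statement asserts.
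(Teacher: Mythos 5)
Your proposal is correct and follows essentially the same route as the paper: Thompson sampling over $\cM\rOi$ with an estimable normalized prior (Theorem~\ref{thm:est_pi_TS_comp}), fed into Theorem~\ref{thm:asymptotically_opt_convergence} (with the computable game placed in $\cM\rOi$ via Lemma~\ref{lemma:est_O_sampled}), and then de-oracled by choosing a limit-computable reflective oracle (Theorems~\ref{thm:O_limit_comp} and~\ref{thm:non_binary_limit_comp_refl_oracle}); your dovetailing argument merely spells out what the paper delegates to those citations. One small nit: your example prior (a renormalized simplicity prior) is not obviously l.s.c.\ after renormalization, so the fact that normalized l.s.c.\ weights are estimable does not directly apply to it --- but a uniform-in-enumeration prior such as $w(\nu_{T_k}) = 2^{-k}$ serves the purpose trivially.
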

Since all $\pi_i$ converge to $\eps$-best responses, this implies that $\pi_{1:n}$ is asymptotically $\eps$-Nash. We can say more about the computability level of $\cM\rOi$:
%
%-------------------------------%
\begin{theorem}[$\Delta_1 \subset \cM\rOi \subset \Delta_2$]\label{thm:M_refl_computability}
%-------------------------------%
    The class $\mathcal{M}\rOi$ contains all (joint) estimable (normalized) environments (sometimes called $\cM^\text{msr}_\text{est}$) and is contained in the class of measures with limit-computable conditionals.
\end{theorem}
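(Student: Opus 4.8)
The statement is the conjunction of two inclusions, $\cM^\text{msr}_\text{est} \subseteq \cM\rOi$ and $\cM\rOi \subseteq \{\text{measures with limit-computable conditionals}\}$, and the plan is to derive each from a result already established, choosing the oracle $O$ in \cref{def:_M_refl} to be the limit-computable one. For the lower inclusion I would invoke \cref{lemma:est_O_sampled} directly: given any joint estimable normalized environment $\nu$, that lemma produces a \emph{single} pTM $T$ with $\nu = \bar\lambda^O_T$ for \emph{every} reflective oracle $O$, in particular for our fixed typed $O$, so $\nu \in \cM\rOi$ by definition. The only bookkeeping is to note that the lemma, stated for an oracle indexed by $\cA$, transfers verbatim to the typed oracle on $\Sigma = \cA \sqcup \cE$, since there $T$ samples each output symbol from an estimate of the ratio $\nu(x\alpha)/\nu(x)$ and completion only ever acts on the probability-zero prefixes where $T$ fails to halt, so correctness is insensitive to which symbols the oracle types.

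For the upper inclusion I would show every $\bar\lambda^O_T \in \cM\rOi$ has limit-computable conditionals by combining two ingredients. First, \cref{thm:bar_lambda_est_properties} supplies an oracle pTM $B_T$ that $O$-estimates $\bar\lambda^O_T(\alpha|x) = q^O_{\alpha,T,x}$ by a binary search that uses $O$ to locate the threshold $q^O_{\alpha,T,x}$ of the step function $p \mapsto O_\alpha(T,x,p)$. Second, by \cref{thm:O_limit_comp} and its non-binary counterpart \cref{thm:non_binary_limit_comp_refl_oracle} we may fix $O$ limit-computable, i.e. there is a computable $\phi$ with $\phi(T,x,p,k) \to O_\alpha(T,x,p)$ as $k\to\infty$. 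I would then limit-compute $q := q^O_{\alpha,T,x}$ \emph{without} oracle access by replacing each genuine oracle call in $B_T$ with a stage-$k$ evaluation of $\phi$ and dovetailing over $k$: because $O_\alpha(T,x,p)=1$ for every rational $p<q$ and $O_\alpha(T,x,p)=0$ for every rational $p>q$, the rationals bracketing $q$ are eventually classified correctly by $\phi(\cdot,k)$, so the search bracket converges to $q$.

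The delicate point, which I expect to be the main obstacle, is that limit-computability is \emph{not} preserved under black-box composition with a limit-computable oracle ($\Delta_2$ relative to a $\Delta_2$ oracle can land in $\Delta_3$), so the argument must exploit the specific structure rather than merely cite ``estimable relative to limit-computable is limit-computable.'' The usable structure is that $q$ is the threshold of a monotone step function that reads $1$ strictly below $q$ and $0$ strictly above $q$, with only the single rational $p=q$ (when $q$ is rational) giving an uncontrolled reading. For a target precision $\eps$ one fixes rationals $a<q<b$ with $b-a<\eps$ and $a,b\neq q$; since $\phi(\cdot,a,k)\to 1$ and $\phi(\cdot,b,k)\to 0$, for large $k$ these bracket endpoints are classified correctly, and the care needed is to define the stage-$k$ estimate only from rationals \emph{strictly} separated from the current bracket (letting the separation shrink with $k$) so that as-yet-unsettled grid points, and the possibly anomalous reading at $p=q$ itself, cannot drag the estimate outside $(a,b)$. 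This is precisely the robustness already encapsulated by the ``the limit is always correct'' clause of \cref{thm:bar_lambda_est_properties}.

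Finally, I would remark on the strict inclusions asserted by the theorem's name. Strictness of $\cM^\text{msr}_\text{est}\subsetneq\cM\rOi$ is witnessed by any machine whose completion genuinely depends on the oracle, such as the self-referential machine of \cref{ex:diagonalization-for-reflective-oracles}, whose conditional equals a value $q^O_{\alpha,T,x}$ pinned down only through $O$ and not estimable without oracle access. Strictness of $\cM\rOi\subsetneq\Delta_2$ follows from the fact that, for the fixed oracle, $\cM\rOi$ is a \emph{uniformly} (limit-computably) enumerable family via $B_T$ composed with the limit-computation of $O$, whereas the full class of measures with limit-computable conditionals admits no such uniform enumeration.
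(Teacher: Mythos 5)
Your proof of the two stated containments is correct and follows the paper's own route: the lower inclusion is exactly the paper's appeal to \cref{lemma:est_O_sampled}, and the upper inclusion is the paper's appeal to \cref{thm:O_limit_comp} (resp.\ \cref{thm:non_binary_limit_comp_refl_oracle} for the typed case), which the paper states in one line. Your elaboration of \emph{why} that one line works is a genuine value-add: you correctly observe that "estimable relative to a limit-computable oracle" is not automatically limit-computable, and that the argument is saved by the specific structure of the binary search of \cref{thm:bar_lambda_est_properties} --- every rational strictly below the threshold $q^O_{\alpha,T,x}$ eventually reads $1$ and every rational strictly above eventually reads $0$ under the stage-$k$ approximation of $O$, so the search brackets (being nested) eventually trap $q^O_{\alpha,T,x}$ within any prescribed width, and misclassified or anomalous queries at finer scales can only shrink, never displace, a bracket that already contains it.

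There is, however, a concrete error in your strictness remark for $\cM^\text{msr}_\text{est}\subsetneq\cM\rOi$. The self-referential machine from the diagonalization discussion in Section~\ref{sec:refl_oracles} is \emph{not} a witness: as the paper itself notes, every reflective oracle is forced to $O_\alpha(T,\epstr,1/2)=1/2$, whence $\lambda^O_T(\alpha|\epstr)=\lambda^O_T(\beta|\epstr)=1/2$; the semimeasure is already a proper measure, the completion does nothing, and $\bar\lambda^O_T$ is the uniform (hence computable, hence estimable) measure. Its value is "pinned down only through $O$," but it is pinned down to $1/2$, so this machine lies inside $\cM^\text{msr}_\text{est}$. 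A correct witness is a dominant element of $\cM\rOi$ (the environment analogue of the mixture in \cref{thm:dominant_zeta}): a measure multiplicatively dominating all estimable measures cannot itself be estimable, since one could otherwise estimate its conditionals and follow a deterministic estimable sequence whose conditional probability under it is bounded below $1$ at every step, contradicting domination; this is the diagonalization the paper cites from Leike et al. Your second strictness remark (no uniform limit-computable enumeration of all measures with limit-computable conditionals) is in a workable direction, but the diagonal must be built from $\emptyset'$-\emph{estimates} so that the constructed conditionals can be chosen far from the enumerated ones; a naive "pick the less likely symbol" diagonal founders on exactly the tie/oscillation problem that motivates reflective oracles. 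Since the theorem's body asserts only the two containments, these defects do not invalidate your proof of the statement itself, but the strictness claims as you argue them should be repaired or dropped.
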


\begin{proof}
    The claim $\cM^\text{msr}_\text{est} \subset \cM\rOi$ follows immediately from \cref{lemma:est_O_sampled}, and is in fact strict by a simple diagonalization argument in \cite{leike_formal_2016}. The claim that $\cM\rOi \subset \Delta_2$ follows from \cref{thm:O_limit_comp}. It is easy to see that $\Delta_1 \subset \mathcal{P}\rOi \subset \Delta_2$ also holds by the same argument.
\end{proof}
%
%%%%%%%%%%%%%%%%%%%%%%%%%%%%%%%%%%%%%%%%%%%%%%%%%%%%%%%%%%%%%%%
\section{An Application to Self-Prediction} \label{sec:self_prediction}
%%%%%%%%%%%%%%%%%%%%%%%%%%%%%%%%%%%%%%%%%%%%%%%%%%%%%%%%%%%%%%%

In \cref{sec:unknown_games} we derived a convergence result for players who are not initially aware of the multi-player game they are playing, or even that other players are involved. We can take this ignorance even further by allowing our player to be uncertain of even his own strategy as he selects each individual move. This is the setting of the Self-AIXI agent proposed by \cite{catt_self-predictive_2023}. Though the problem may appear esoteric at first, it is of interest to reinforcement learning (RL) researchers. Model-based RL algorithms often execute an expensive decision tree search to plan their future actions. This search can be narrowed by (iteratively) distilling the resulting policy into a model that guides action selection. The Self-AIXI agent can be viewed as an extreme case of this approach, entirely replacing planning with an interaction between self-model and environment-model akin to model- and value-based policy search methods but more general and principled. Arguably, the Self-AIXI framework also describes human planning; though we may make plans for our future actions, we do not know which strategy we will ultimately follow. 
Formally, a Self-AIXI policy is defined\footnote{We require strategies and environments to be (contextual, chronological) probability measures. This definition generalizes from explicit Bayesian mixtures to any dominant elements of each class. Also, our \cref{def:self_AIXI} does not maximize equation (3) of \cite{catt_self-predictive_2023}, which asserts linearity of the action value function $\smash{Q^\zeta_\xi}$ with the incorrect coefficients (failing to update on the latest action) and is probably not the intended definition as it is inconsistent with the rest of the paper. This definition of $\smash{Q^\zeta_\xi}$ would make Self-AIXI a kind of one-step causal decision theorist instead of an evidential decision theorist.} as
%
%-------------------------------%
\begin{definition}[Self-AIXI]\label{def:self_AIXI}
%-------------------------------%
Let $\zeta$ and $\xi$ be dominant elements of policy class $\cP$ and environment class $\cM$. 
  \begin{align*}
    \pi_S(h_{<t}) ~&\in~ \argmax_{a_t\in\cA} V_\xi^\zeta(h_{<t}a_t) \\
    V_\xi^\zeta(h_{<t}a_t) ~&:=~ \frac{1}{\Gamma_t} \lim_{m\to\infty} \sum_{a_{t+1:m}, e_{t:m}} \sum_{i=t}^m \gamma_i r_i \prod_{j=t}^m \xi(e_j|\h_{<j}a_j)\prod_{j=t+1}^m\zeta(a_j|\h_{<j})       
  \end{align*}
\end{definition}
The results of \cite{catt_self-predictive_2023} suggest convergence of Self-AIXI to $\smash{\pi^*_\xi}$ (when $\xi$ is a dominant element of the class $\cM^\text{semi}_\text{lsc}$ of environments given by contextual chronological l.s.c.\ semimeasures, $\pi^*_\xi$ is called AIXI), but there are gaps remaining\footnote{We do not address their requirement that $\pi_S$ is ``reasonable off-policy'' which is a technical and slightly unnatural condition that has not been shown for any combination of strategy and environment classes.}. One problem is that they rely on $\pi_S \in \cP$ without constructing any policy class (interesting or otherwise) with this property. Reflective oracles provide a natural example. 
%
%-------------------------------%
\begin{theorem}[Self-AIXI in computable environment] \label{thm:self_aixi_comp_env}
%-------------------------------%
    Let $O$ be a reflective oracle with input alphabet $\Sigma = \cA \sqcup \cE$ and output alphabet $\cA$. Let $\cP = \cP\rOi$ and $\cM$ be any environment class containing a dominant element $\xi$ with estimable\ conditionals. Then there exists a dominant $\zeta \in \cP\rOi$, and there is a stochastic $\pi_S \in \cP\rOi$ for $\zeta$, $\xi$.
\end{theorem}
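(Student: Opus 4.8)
The plan is to handle the two assertions separately. The existence of a dominant $\zeta \in \cP\rOi$ is exactly \cref{thm:dominant_zeta}, so the first claim needs nothing new; I fix such a $\zeta$ and keep the given dominant $\xi \in \cM$ whose conditionals are estimable (hence trivially $O$-estimable). The substance is the second claim, that a stochastic $\pi_S \in \cP\rOi$ realizing \cref{def:self_AIXI} exists. The key observation is that Self-AIXI is merely a \emph{one-step greedy} policy: it selects $a_t$ maximizing the fixed-policy value $V_\xi^\zeta(h_{<t}a_t)$, which plays the structural role here that the optimal value function $V^*_\nu$ plays in \cref{thm:O_gen_optimal_strategy}. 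The only difference is that the nested maximizations over future actions there are replaced by fixed $O$-estimable policy factors $\zeta(a_j \mid \h_{<j})$; consequently I can transcribe the construction of \cref{thm:O_gen_optimal_strategy} almost verbatim.

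First I would show that $V_\xi^\zeta(h_{<t}a_t)$ is an $O$-estimable quantity lying in $[0,1]$. Writing $V^{(m)}$ for the horizon-$m$ truncation of the limit in \cref{def:self_AIXI}, each $V^{(m)}$ is a finite sum over finitely many continuations of products of the factors $\xi(e_j \mid \h_{<j}a_j)$ and $\zeta(a_j \mid \h_{<j})$, times the estimable partial return $\sum_{i=t}^m \gamma_i r_i$ and the estimable scalar $1/\Gamma_t$. Since $\xi$ has estimable (hence $O$-estimable) conditionals, $\zeta$ is $O$-estimable by \cref{thm:bar_lambda_est_properties}, and finite sums and products of $O$-estimable functions are $O$-estimable, each $V^{(m)}$ is $O$-estimable. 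Because $\xi$ and $\zeta$ are proper measures, summing out the tail variables collapses $V^{(m)}$ to $\Gamma_t^{-1}\sum_{i=t}^m \gamma_i\, \bar r_i$ with the expected step-reward $\bar r_i \in [0,1]$, so $V^{(m)}$ is monotone increasing in $m$ and the tail obeys $0 \le V_\xi^\zeta - V^{(m)} \le \Gamma_{m+1}/\Gamma_t$ since rewards lie in $[0,1]$. This two-sided bound, exactly as in \cref{thm:O_gen_optimal_strategy}, makes the limit $V_\xi^\zeta(h_{<t}a_t)$ $O$-estimable, and as a $\gamma_i/\Gamma_t$-weighted average of rewards it lies in $[0,1]$. (When $\Gamma_t = 0$ the value is $0$ and every action is equally good, so this degenerate case needs no special treatment, as before.)

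Given $O$-estimable action values in $[0,1]$, I would construct $\pi_S$ by the tie-breaking device of \cref{thm:O_gen_optimal_strategy}. For a pair of actions $\alpha,\beta$, \cref{thm:bar_lambda_O_sampled} lets me build a pTM $T_{\alpha\beta}$ with $\lambda^O_{T_{\alpha\beta}}(\alpha \mid \h_{<t}) = \fr12[V_\xi^\zeta(h_{<t}\alpha) - V_\xi^\zeta(h_{<t}\beta) + 1] \in [0,1]$, so that a single query $O(T_{\alpha\beta}, \h_{<t}, 1/2)$ selects the action of larger value while randomizing when the two values coincide; for $|\cA| > 2$ I iterate this comparison to locate a maximizer. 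The resulting policy is $O$-sampled, hence lies in $\cP\rOi$, and by construction $\pi_S(h_{<t}) \in \argmax_{a_t} V_\xi^\zeta(h_{<t}a_t)$, meeting \cref{def:self_AIXI}. It is genuinely stochastic precisely because the oracle randomizes on ties, which is the ``stochastic'' qualifier in the statement.

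The hard part will be the first step: verifying the two-sided $O$-estimability of the infinite-horizon value $V_\xi^\zeta$. The lower approximation via truncation is routine, but I must confirm that both the monotonicity and the $\Gamma_{m+1}/\Gamma_t$ tail bound rely only on rewards in $[0,1]$ and on $\xi,\zeta$ being proper measures (so that marginalizing the later variables collapses the horizon-$m$ sum to the expected discounted partial return). Once estimability of the value is secured, the remainder is a direct reuse of the oracle machinery already established, with the simplification that no alternation of $\max$ and expectation is required.
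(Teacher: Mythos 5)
Your proposal is correct and follows essentially the same route as the paper: the paper's proof likewise obtains $\zeta$ from \cref{thm:dominant_zeta} and then says the membership $\pi_S \in \cP\rOi$ ``follows \cref{thm:O_gen_optimal_strategy},'' using that $\xi$'s conditionals are estimable and $\zeta$'s are $O$-estimable. Your write-up simply fills in the details the paper leaves implicit (two-sided estimability of $V_\xi^\zeta$ via truncation and the $\Gamma_{m+1}/\Gamma_t$ tail bound, then the $T_{\alpha\beta}$ oracle tie-breaking), all of which are faithful transcriptions of that earlier proof.
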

\begin{proof} The existence of such a $\zeta$ follows from \cref{thm:dominant_zeta}. The argument that $\pi_S \in \cP\rOi$ follows \cref{thm:O_gen_optimal_strategy}; the conditionals of $\xi$ are assumed estimable and the conditionals of $\zeta$ are $O$-estimable.
\end{proof} 
Note that since we do not have an effective enumeration of the class of environments with estimable conditionals, we cannot choose it as $\cM$ in \cref{thm:self_aixi_comp_env}, making the result somewhat less interesting. We can solve this problem by sharing the reflective oracle between the strategies and the environments as in \cref{sec:unknown_games}:
%
%-------------------------------%
\begin{theorem}[Self-AIXI in oracle computable environment]\label{thm:self_aixi_oracle_computable_env}
%-------------------------------%
    Fix some finite alphabet $\Sigma$ and encodings over $\Sigma$ for each element of $\cA$ and $\cE$. Let $O$ be a typed reflective oracle with input and output alphabet $\Sigma$. Let $\zeta$ and $\xi$ be dominant elements of $\cP\rOi$ and $\cM\rOi$. respectively. Then there is a stochastic $\pi_S \in \cP\rOi$ for $\zeta, \xi$.
\end{theorem}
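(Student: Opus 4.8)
The plan is to mirror the proof of \cref{thm:self_aixi_comp_env}, but now exploit the fact that the shared typed reflective oracle $O$ can complete both action-indexed and percept-indexed conditionals, so that $\xi$ is no longer required to be an externally-estimable measure but merely a dominant element of the effectively enumerable class $\cM\rOi$. First I would note that by \cref{def:_M_refl} every $\nu\in\cM\rOi$ is of the form $\bar\lambda^O_T$ for a pTM $T$, hence has $O$-estimable conditionals by (the typed generalization of) \cref{thm:bar_lambda_est_properties}; in particular the dominant $\xi$ has $O$-estimable conditionals. Likewise $\zeta\in\cP\rOi$ has $O$-estimable conditionals, and its existence as a dominant element of $\cP\rOi$ is guaranteed by \cref{thm:dominant_zeta} (the typed oracle does not disturb that construction, since it only adds the ability to also complete percept distributions). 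So the two ingredients feeding the value function are both $O$-estimable.

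The next step is to show that $V_\xi^\zeta(h_{<t}a_t)$ from \cref{def:self_AIXI} is $O$-estimable as a function of the history ending in an action. This follows the estimability argument inside the proof of \cref{thm:O_gen_optimal_strategy}: the finite sums $\sum_{a_{t+1:m},e_{t:m}}\sum_{i=t}^m \gamma_i r_i\prod_{j=t}^m\xi(e_j|\h_{<j}a_j)\prod_{j=t+1}^m\zeta(a_j|\h_{<j})$ are $O$-estimable since each factor is $O$-estimable and rewards are bounded in $[0,1]$, and the tail past horizon $m$ is bounded by $\Gamma_{m+1}$, so increasing $m$ gives matching lower and upper bounds. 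Here the structural difference from the optimal value function is a simplification: there is no interleaved $\max$ over actions, only the fixed mixture policy $\zeta$ supplying $\prod_{j=t+1}^m\zeta(a_j|\h_{<j})$, so the quantity is a genuine expectation and estimability is even more direct. The factor $1/\Gamma_t$ is estimable where $\Gamma_t>0$ and irrelevant where $\Gamma_t=0$ (any action is equally good), exactly as before.

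Finally I would select the action using the oracle rather than by naive comparison, to handle ties. For each ordered pair of actions $\alpha,\beta$, invoke \cref{thm:bar_lambda_O_sampled} to build a pTM $T_{\alpha\beta}$ with $\lambda^O_{T_{\alpha\beta}}(\alpha|\h_{<t})=\fr12[V_\xi^\zeta(\h_{<t}\alpha)-V_\xi^\zeta(\h_{<t}\beta)+1]\in[0,1]$, which is legitimate because the value function lies in $[0,1]$ and is $O$-estimable hence $O$-sampled. A single oracle call $O(T_{\alpha\beta},\h_{<t},1/2)$ then decides which of $\alpha,\beta$ has the larger value, breaking exact ties stochastically in a way that is harmless since tied actions are equally good; for $|\cA|>2$ one iterates these pairwise comparisons to find a maximizer, as in the final paragraph of the proof of \cref{thm:O_gen_optimal_strategy}. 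The resulting $\pi_S$ is supported on $\argmax_{a_t}V_\xi^\zeta(\h_{<t}a_t)$, is sampled by an oracle pTM, and is therefore in $\cP\rOi$.

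The main obstacle I expect is not the estimation of $V_\xi^\zeta$ but the bookkeeping around the typed oracle: one must confirm that a single $O$ with input alphabet $\Sigma=\cA\sqcup\cE$ simultaneously completes $\zeta$'s conditionals to $\Delta\cA$, completes each $\nu\in\cM\rOi$'s percept conditionals to $\Delta\cE$, and is also the oracle used to sample $\pi_S$ via the $T_{\alpha\beta}$ construction (whose outputs are actions, hence indexed by the $\cA$-type of $O$). Provided the existence of such a typed reflective oracle from \cref{sec:non_binary_refl_oracle_existence} is invoked, these roles are mutually consistent, and the argument goes through with essentially the same steps as the untyped case of \cref{thm:self_aixi_comp_env}.
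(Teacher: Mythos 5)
Your proposal is correct and follows essentially the same route as the paper: the paper's proof simply observes that $\xi\in\cM\rOi$ has $O$-estimable conditionals and then defers to the argument of \cref{thm:O_gen_optimal_strategy}, which is exactly the estimability-of-$V_\xi^\zeta$ plus oracle-based tie-breaking ($T_{\alpha\beta}$) argument you spell out. Your expansion, including the remark that the absence of interleaved maxima simplifies the estimability step and the check that the typed oracle consistently serves $\zeta$, $\xi$, and the sampling of $\pi_S$, is just a more detailed rendering of the paper's one-line proof.
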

\begin{proof} In this case the conditionals of $\xi$ are $O$-estimable because it is in $\cM\rOi$, and as before the argument follows \cref{thm:O_gen_optimal_strategy}.
\end{proof}
%
%%%%%%%%%%%%%%%%%%%%%%%%%%%%%%%%%%%%%%%%%%%%%%%%%%%%%%%%%%%%%%%
\section{Conclusion}
%%%%%%%%%%%%%%%%%%%%%%%%%%%%%%%%%%%%%%%%%%%%%%%%%%%%%%%%%%%%%%%

We have constructed a policy class meeting the requirements of \cite{kalai_rational_1993}. This can be seen as a justifying the importance of Nash equilibria from a Bayesian perspective. When a Bayesian player's strategy is private information, we see no convincing reason for him to play a strategy corresponding to any Nash equilibrium. For instance, a Bayesian who has observed that his opponent in a game of ``rock, paper, scissors" usually chooses ``rock" (in past otherwise similar games) might naturally choose ``paper," instead of randomizing uniformly as anticipated by classical game theory. However, when the game and policy classes satisfy the grain of truth property, players eventually converge to a set of strategies close to a $\eps$-Nash equilibrium.

%-------------------------------%
\paragraph{Reflective oracles in the real world.}
%-------------------------------%
It is interesting to consider whether the grain of truth property is a reasonable assumption about the beliefs and strategies of humans. The limit-computability of (some) reflective oracles makes this at least distantly plausible. Certainly humans should model others as demonstrating roughly comparable computational power to ourselves, and our computational boundedness means any mutual recursion must eventually terminate. Reflective oracles are one possible model of this termination, but it is hard to see how the (critical) assumption that all players use the same reflective oracle can be justified. At least in the self-predictive case it is sensible for an agent to use the same reflective oracle to model their beliefs about themselves and their environment; convergence results should still hold when the true environment does not actually use oracle access, as long as it is in $\cM\rOi$. More speculatively, perhaps a shared reflective oracle is a reasonable assumption for (semi-)cooperative multi-agent systems such as members of a common culture or subagents within a cognitive architecture.  

%-------------------------------%
\paragraph{Future work.}
%-------------------------------%
It would be interesting to determine the degree of centrality and uniqueness of reflective oracles (and the corresponding $\cP\rOi$) among the solutions to the grain of truth problem, and to study in general the connection between the grain of truth problem and self-prediction. Another fascinating research direction is to clarify the computability properties of reflective oracles; we know that there exist limit-computable reflective oracles, but not all reflective oracles are limit-computable because any measure is computable with respect to some reflective oracle (see \cref{sec:non_binary_refl_oracle_existence}). Therefore, it is easy to understand $\bigcup_O \cP\rOi$. We showed in \cref{lemma:est_O_sampled} that regardless of the chosen reflective oracle $O$, all estimable measures $\nu$\footnote{Here we mean that $\nu(x)$ should be estimable, not necessarily the conditional $\nu(\alpha|x)$.} are sampled by some probabilistic Turing machine with access to $O$ (by binary search), but are any other measures in the intersection $\bigcap_O \cP\rOi$?   

%%%%%%%%%%%%%%%%%%%%%%%%%%%%%%%%%%%%%%%%%%%%%%%%%%%%%%%%%%%%%%%
\section{Acknowledgements}
%%%%%%%%%%%%%%%%%%%%%%%%%%%%%%%%%%%%%%%%%%%%%%%%%%%%%%%%%%%%%%%
This work was supported in part by a grant from the Long-Term Future Fund (EA Funds - Cole Wyeth - 9/26/2023). We would also like to thank Clemens Possnig and Benya Fallenstein for helpful feedback.

%\bibliographystyle{alpha} % alphamh
%\bibliography{mybib.bib}

\begin{thebibliography}{HQC24}
%%%%%%%%%%%%%%%%%%%%%%%%%%%%%%%%%%%%%%%%%%%%%%%%%%%%%%%%%%%%%%%

\bibitem[Ale23]{alexander_private_2023}
Samuel~A. Alexander.
\newblock Private {Memory} {Confers} {No} {Advantage}.
\newblock In {\em Software {Engineering} and {Formal} {Methods}. {SEFM} 2023
  {Collocated} {Workshops}: {CIFMA} 2023 and {OpenCERT} 2023, {Eindhoven},
  {The} {Netherlands}, {November} 6–10, 2023, {Revised} {Selected} {Papers}},
  pages 42--53, Berlin, Heidelberg, November 2023. Springer-Verlag.

\bibitem[Aum64]{aumann_mixed_1964}
Robert J~. Aumann.
\newblock {\em 28. Mixed and Behavior Strategies in Infinite Extensive Games},
  pages 627--650.
\newblock Princeton University Press, Princeton, 1964.

\bibitem[BD16]{brand_imp_2016}
Michael Brand and David~L. Dowe.
\newblock The {IMP} game: {Learnability}, approximability and adversarial learning beyond $\sigma^0_1$, February 2016.
\newblock arXiv:1602.02743 [cs].

\bibitem[CGMH{\etalchar{+}}23]{catt_self-predictive_2023}
Elliot Catt, Jordi Grau-Moya, Marcus Hutter, Matthew Aitchison, Tim Genewein, Grégoire Delétang, Kevin Li, and Joel Veness.
\newblock Self-{Predictive} {Universal} {AI}.
\newblock {\em Advances in Neural Information Processing Systems}, 36:27181--27198, December 2023.

\bibitem[Fan52]{fan1952fixed}
Ky~Fan.
\newblock Fixed-point and minimax theorems in locally convex topological linear spaces.
\newblock {\em Proceedings of the National Academy of Sciences}, 38(2):121--126, 1952.

\bibitem[FST15]{aixi_reflective_2015}
Benja Fallenstein, Nate Soares, and Jessica Taylor.
\newblock Reflective {Variants} of {Solomonoff} {Induction} and {AIXI}.
\newblock In Jordi Bieger, Ben Goertzel, and Alexey Potapov, editors, {\em Artificial {General} {Intelligence}}, pages 60--69, Cham, 2015. Springer International Publishing.

\bibitem[FTC15]{fallenstein_reflective_2015}
Benja Fallenstein, Jessica Taylor, and Paul~F. Christiano.
\newblock Reflective {Oracles}: {A} {Foundation} for {Classical} {Game} {Theory}, August 2015.
\newblock arXiv:1508.04145 [cs].

\bibitem[FY01]{foster_impossibility_2001}
Dean~P. Foster and H.~Peyton Young.
\newblock On the impossibility of predicting the behavior of rational agents.
\newblock {\em Proceedings of the National Academy of Sciences}, 98(22):12848--12853, October 2001.

\bibitem[HQC24]{Hutter:24uaibook2}
Marcus Hutter, David Quarel, and Elliot Catt.
\newblock {\em An Introduction to Universal Artificial Intelligence}.
\newblock Chapman \& Hall/CRC Artificial Intelligence and Robotics Series. Taylor and Francis, 2024.

\bibitem[Hut05]{Hutter:04uaibook}
Marcus Hutter.
\newblock {\em Universal Artificial Intelligence: Sequential Decisions based on Algorithmic Probability}.
\newblock Springer, Berlin, 2005.
\newblock 300 pages, http://www.hutter1.net/ai/uaibook.htm.

\bibitem[KL93a]{kalai_rational_1993}
Ehud Kalai and Ehud Lehrer.
\newblock Rational {Learning} {Leads} to {Nash} {Equilibrium}.
\newblock {\em Econometrica}, 61(5):1019--1045, 1993.

\bibitem[KL93b]{kalai_subjective_1993}
Ehud Kalai and Ehud Lehrer.
\newblock Subjective {Equilibrium} in {Repeated} {Games}.
\newblock {\em Econometrica}, 61(5):1231--1240, 1993.

\bibitem[Kle52]{Kleene1952-KLEITM}
Stephen~Cole Kleene.
\newblock {\em Introduction to Metamathematics}.
\newblock P. Noordhoff N.V., Groningen, 1952.

\bibitem[LH14]{lattimore_general_2014}
Tor Lattimore and Marcus Hutter.
\newblock General time consistent discounting.
\newblock {\em Theoretical Computer Science}, 519:140--154, January 2014.

\bibitem[LH15]{leike2015bad}
Jan Leike and Marcus Hutter.
\newblock Bad {Universal} {Priors} and {Notions} of {Optimality}.
\newblock In {\em Proceedings of {The} 28th {Conference} on {Learning} {Theory}}, pages 1244--1259. PMLR, June 2015.

\bibitem[LLOH16]{Hutter:16thompgrl}
Jan Leike, Tor Lattimore, Laurent Orseau, and Marcus Hutter.
\newblock Thompson sampling is asymptotically optimal in general environments.
\newblock In {\em Proc. 32nd International Conf. on Uncertainty in Artificial Intelligence ({UAI'16})}, pages 417--426, New Jersey, USA, 2016. AUAI Press.

\bibitem[LTF16]{leike_formal_2016}
Jan Leike, Jessica Taylor, and Benya Fallenstein.
\newblock A formal solution to the grain of truth problem.
\newblock In {\em Proceedings of the Thirty-Second Conference on Uncertainty in Artificial Intelligence}, {UAI}'16, pages 427--436. {AUAI} Press, 2016.

\bibitem[LV{\etalchar{+}}08]{livitanyi}
Ming Li, Paul Vit{\'a}nyi, et~al.
\newblock {\em An introduction to Kolmogorov complexity and its applications}, volume~3.
\newblock Springer, 2008.

\bibitem[Nac97]{nachbar_prediction_1997}
John~H. Nachbar.
\newblock Prediction, {Optimization}, and {Learning} in {Repeated} {Games}.
\newblock {\em Econometrica}, 65(2):275--309, 1997.

\bibitem[Nac05]{nachbar_beliefs_2005}
John~H. Nachbar.
\newblock Beliefs in {Repeated} {Games}.
\newblock {\em Econometrica}, 73(2):459--480, 2005.

\bibitem[Ors10]{Orseau:10}
L.~Orseau.
\newblock Optimality issues of universal greedy agents with static priors.
\newblock In {\em Proc. 21st International Conf. on Algorithmic Learning Theory ({ALT'10})}, volume 6331 of {\em LNAI}, pages 345--359, Canberra, Australia, 2010. Springer.

\bibitem[Rud91]{rudin1991functional}
W.~Rudin.
\newblock {\em Functional Analysis}.
\newblock International series in pure and applied mathematics. McGraw-Hill, 1991.

\bibitem[Tho33]{thompson1933likelihood}
William~R Thompson.
\newblock On the likelihood that one unknown probability exceeds another in view of the evidence of two samples.
\newblock {\em Biometrika}, 25(3-4):285--294, 1933.

\bibitem[vNMR44]{von_neumann_theory_1944}
John von Neumann, Oskar Morgenstern, and Ariel Rubinstein.
\newblock {\em Theory of {Games} and {Economic} {Behavior} (60th {Anniversary} {Commemorative} {Edition})}.
\newblock Princeton University Press, 1944.

\bibitem[WH25]{wyeth_hutter_semimeasures_2025}
Cole Wyeth and Marcus Hutter.
\newblock Value under ignorance in universal artificial intelligence.
\newblock {\em Artificial General Intelligence}, 2025.

\end{thebibliography}

\newcommand{\etalchar}[1]{$^{#1}$}
%%%%%%%%%%%%%%%%%%%%%%%%%%%%%%%%%%%%%%%%%%%%%%%%%%%%%%%%%%%%%%%

%%%%%%%%%%%%%%%%%%%%%%%%%%%%%%%%%%%%%%%%%%%%%%%%%%%%%%%%%%%%%%%
\appendix\section{Appendix}\label{app}
%%%%%%%%%%%%%%%%%%%%%%%%%%%%%%%%%%%%%%%%%%%%%%%%%%%%%%%%%%%%%%%

%%%%%%%%%%%%%%%%%%%%%%%%%%%%%%%%%%%%%%%%%%%%%%%%%%%%%%%%%%%%%%%
\section{List of Notation}\label{app:Notation}
%%%%%%%%%%%%%%%%%%%%%%%%%%%%%%%%%%%%%%%%%%%%%%%%%%%%%%%%%%%%%%%
Our notation follows \cite{leike_formal_2016} as closely as possible.
\begin{tabbing}
  \hspace{0.13\textwidth} \= \hspace{0.73\textwidth} \= \kill
  $[\![R]\!]$            \> = 1 if $R$=true and =0 if $R$=false (Iverson bracket)                \\[0.5ex]
  $:=$ \> defined to be equal \\[0.5ex]
  $\mathbb{N}$ \> the natural numbers \\[0.5ex]
  $\mathbb{Q}$ \> the rational numbers \\[0.5ex]
  $\mathbb{R}$ \> the real numbers \\[0.5ex]
  $t$ \> (current) interaction (time) step, $t\in\mathbb{N}$ \\[0.5ex]
  $i$ \> player=agent index $\in\{1,...,n\}$ \\[0.5ex]
  $j$ \> the other players=agents $\neq i$ part of the environment \\[0.5ex]
  $k$ \> natural number indices \\[0.5ex]
  $n$ \> number of agents=players \\[0.5ex]
  $p,q$ \> a real value with interpretation as a probability. \\[0.5ex]
  $\mathcal{X}^*$ \> the set of all finite strings over the alphabet $\mathcal{X}$ \\[0.5ex]
  $\mathcal{X}^\infty$ \> the set of all infinite sequences over the alphabet $\mathcal{X}$ \\[0.5ex]
  $\Delta \mathcal{X}$ \> a probability distribution over $\mathcal{X}$ \\[0.5ex]
  $O$ \> a reflective oracle \\[0.5ex]
  $\tilde{O}$ \> a partial oracle \\[0.5ex]
  $\flip$ \> the function that on input $p \in [0,1]$ returns 1 with probability $p$, else 0. \\[0.5ex]
  $\cQ_i$ \> a query to an oracle in an enumeration $i \in \mathcal{N}$ \\[0.5ex]
  $\cT$ \> the set of oracle probabilistic Turing machines, extended to include tape contents \\[0.5ex]
  $T$ \> an oracle probabilistic Turing machine \\[0.5ex]
  $\lambda_T$ \> the semimeasure corresponding to the probabilistic Turing machine $T$ 
  \\[0.5ex]
  $\lambda^O_T$ \> the semimeasure corresponding to the probabilistic Turing machine $T$ \\ \> with access to the reflective oracle $O$
  \\[0.5ex]
  $\bar{\lambda}^O_T$ \> the oracle-completed semimeasure corresponding to the probabilistic Turing machine $T$ \\ \> with access to the reflective oracle $O$
  \\[0.5ex]
  \rO \> abbreviation for ``reflective oracle'' \\[0.5ex]
  $O$-sampled \> a semimeasures $\nu$ is $O$-sampled if there exists a pTM $T$ such that $\nu = \lambda^O_T$ \\[0.5ex]
  $O$-estimable \> estimable with access to the reflective oracle $O$ \\[0.5ex]
  $\nu$ \> a semimeasure sometimes representing an environment \\[0.5ex]
  $\mu$ \> the true environment \\[0.5ex]
  $\mu \ll \nu$ \> $\mu$ is absolutely continuous w.r.t. $\nu$ \\[0.5ex]
  $\mathcal{A}$ \> a finite alphabet often identified with the set of possible actions \\[0.5ex]
  $\mathcal{E}$ \> a finite alphabet containing the possible percepts; rewards should be \\ \> computable from percepts \\[0.5ex]
  $\alpha,\beta$ \> alphabet symbols usually in $\mathcal{A}$ \\[0.5ex]
  $a_t$ \> the action(s) in time step $t$ \\[0.5ex]
  $e_t$ \> the percept(s) in time step $t$ \\[0.5ex]
  $\h_{<t}$ \> the first $t-1$ interactions, $a_1e_1...a_{t-1}e_{t-1}$ \\[0.5ex]
  $\epstr$ \> the empty string \\[0.5ex]
  $\eps$ \> a small positive real number \\[0.5ex]
  $\gamma$ \> the discount function $\gamma : \mathbb{N} \to R_{\geq 0}$ \\[0.5ex]
  $\Gamma_t$ \> a discount normalization factor $\Gamma_t = \sum^\infty_{k=t} \gamma_k$ \\[0.5ex]
  $\nu,\mu$ \> environments/semimeasures \\[0.5ex]
  $\sigma$ \> a multi-player game \\[0.5ex]
  $\sigma^{\pi_{1:n}}$ \> history distribution induced by $\pi_1, ..., \pi_n$ interacting in the multi-player game $\sigma$ \\[0.5ex]
  $\sigma_i$ \> the subjective environment of player $i$ in multi-player game $\sigma$ \\[0.5ex]
  $\pi$ \> a policy (strategy), $\pi: (\mathcal{A} \times \mathcal{E})^*\to\mathcal{A}$ \\[0.5ex]
  $V^\pi_\nu$ \> the $\nu$-expected value of policy $\pi$ \\[0.5ex]
  $V^*_\nu$ \> the optimal value in environment $\nu$ \\[0.5ex]
  $\cG$ \> a countable class of multi-player games \\[0.5ex]
  $\cM$ \> a countable class of environments \\[0.5ex]
  $\mathcal{M}\rOi$ \> the class of environments computed by probabilistic Turing machines \\ \> with access to reflective oracle $O$ \\[0.5ex]
  $\mathcal{P}$ \> a countable class of policies (strategies) \\[0.5ex]
  $\mathcal{P}\rOi$ \> the class of policies (strategies) computed by probabilistic Turing machines \\ \> with access to reflective oracle $O$ \\[0.5ex]
  $w$ \> a real valued weight distribution, usually over $\mathcal{M}\rOi$ or $\mathcal{P}\rOi$ \\[0.5ex]
  $\xi$ \> a mixture environment, usually over $\mathcal{M}\rOi$ \\[0.5ex]
  $\zeta$ \> a mixture policy over $\mathcal{P}\rOi$ \\[0.5ex]
\end{tabbing}

%%%%%%%%%%%%%%%%%%%%%%%%%%%%%%%%%%%%%%%%%%%%%%%%%%%%%%%%%%%%%%%
\section{Non-binary Alphabet Reflective Oracles}
\label{sec:non_binary_refl_oracle_existence}
%%%%%%%%%%%%%%%%%%%%%%%%%%%%%%%%%%%%%%%%%%%%%%%%%%%%%%%%%%%%%%%

We provide a detailed existence proof for non-binary reflective oracles. First, we extend the definition of reflective oracles to the non-binary case, which requires slightly stronger conditions than the binary case to ensure that the oracle completed conditionals lie on a probability simplex. Then we construct a point-to-set mapping that expresses an oracle's self-consistency and show that a fixed point would imply the existence of a reflective oracle. Finally, we demonstrate the existence of a fixed point using the infinite dimensional version of the Kakutani fixed point theorem \cite{fan1952fixed}. 

%-------------------------------%
\paragraph{Definition.}\label{par:defs}
%-------------------------------%
The most important feature of a reflective oracle is that it can be used to complete a semi-measure to a measure (which is then reflective-oracle sampled and even estimable). Extending the definition to non-binary alphabets, we must take care to preserve this property. It is easiest to do this by extending the strict definition of Fallenstein et.\ al.\ \cite{fallenstein_reflective_2015}, which we refer to as a ``step reflective oracle.'' Slightly generalizing their notation to explicitly include the input $x \in \{ 0, 1 \}^*$, they require that for every machine $T$ and input $x$, there is some cutoff $P[T(x) = 1] \leq q \leq 1 - P[T(x) \neq 1]$ such that
\[
p < q \Rightarrow O(T,x,p) = 1
\]
\[
p > q \Rightarrow O(T,x,p) = 0
\]
We will generalize to finite output alphabet $\cA$ and input alphabet $\Sigma$. For our purposes $\cA \subset \Sigma$; for instance $\cA$ is a set of actions and $\Sigma$ includes actions and observations.

%-------------------------------%
\begin{definition}[reflective oracle]\label{def:reflective_oracle}
%-------------------------------%
The oracle $O$ valued in [0,1] is reflective iff for each probabilistic TM $T$ and string $x \in \Sigma^*$, $\exists \{ q_\alpha \}_{\alpha \in \cA}$ satisfying the following properties:
\[
\sum_{\alpha \in \cA} q_\alpha ~=~ 1
\]
And for all $\alpha \in \cA$,
\[
P[T^O(x) = \alpha] ~\leq~ q_\alpha ~\leq~ 1 - P[T^O(x) \neq \alpha]
\]
\[
p < q_\alpha ~\Rightarrow~ O_\alpha(T,x,p) = 1
\]
\[
p > q_\alpha ~\Rightarrow~ O_\alpha(T,x,p)= 0 
\]
\end{definition}

Notice that oracles are now indexed by $\alpha$, so can be thought of as a family satisfying certain relations or as accepting a new argument of type $\cA$. An oracle responds to queries with 0 or 1, and the value of $O_\alpha$ on a given query is interpreted as the probability that its answer is 1. Given this definition, it is not difficult to see how to conduct a binary search for each $q_\alpha$ and recover a measure.

%-------------------------------%
\paragraph{Proof.}
%-------------------------------%
Now we must prove that non-binary reflective oracles actually exist. The argument follows the original existence proof. The key is to restrict certain functions to lie on the simplex (which is closed, compact, and convex) which makes it possible to ensure that the point-set map only accepts and produces functions satisfying the first condition. 

The argument relies on the infinite dimensional version of the Kakutani fixed point theorem (sometimes more appropriately but less helpfully referred to as the Kakutani-Ky Fan theorem), which we will reproduce from Ky Fan's paper \cite{fan1952fixed}:

%-------------------------------%
\begin{theorem}[infinite dimensional Kakutani fixed point theorem]\label{kakutani_fixed_point}
%-------------------------------%
    Let L be a locally convex topological linear space and $K$ a compact convex set in L. Let $R(K)$ be the family of all closed convex (non-empty) subsets of K. Then for any upper semicontinous point-to-set transformation $f$ from $K$ into $R(K)$, there exists a point $x_0 \in K$ s.t. $x_0 \in f(x_0)$. 
\end{theorem}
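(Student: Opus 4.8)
The plan is to deduce this infinite-dimensional statement from the classical finite-dimensional Kakutani fixed point theorem (itself equivalent to Brouwer's theorem), via the same approximation-and-compactness scheme that upgrades Brouwer's theorem to the Schauder--Tychonoff theorem in locally convex spaces. The whole argument rests on two ideas: a genuine fixed point will be extracted as a limit of \emph{approximate} fixed points, and each approximate fixed point will be produced on a finite-dimensional convex subset of $K$ where Kakutani's theorem applies verbatim.

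First I would record the topological consequence of the hypotheses that drives the limiting argument: because $f$ is upper semicontinuous with nonempty closed (hence compact, as subsets of the compact $K$) convex values, its graph $G := \{(x,y) \in K \times K : y \in f(x)\}$ is closed in $K \times K$. I would establish this at the outset and use it at the very end.

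Next comes the finite-dimensional reduction, indexed by a base $\mathcal{V}$ of convex balanced neighborhoods of $0$ in $L$, directed by reverse inclusion. Fix $V \in \mathcal{V}$. Compactness of $K$ gives a finite $V$-net $x_1,\dots,x_m$ with $K \subseteq \bigcup_i (x_i + V)$; set $K_V := \mathrm{conv}\{x_1,\dots,x_m\}$, a finite-dimensional compact convex subset of $K$. Choosing a continuous partition of unity $\{\beta_i\}$ subordinate to $\{x_i + V\}$, define the Schauder projection $P_V : K \to K_V$ by $P_V(y) = \sum_i \beta_i(y)\, x_i$. Since $\beta_i(y) > 0$ forces $y \in x_i + V$ and $V$ is convex, one gets $P_V(y) - y \in V$ for every $y$. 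The composite correspondence $x \mapsto \mathrm{co}\big(P_V(f(x))\big)$ is then upper semicontinuous with nonempty compact convex values and maps $K_V$ into itself, so the finite-dimensional Kakutani theorem supplies $z_V \in \mathrm{co}\big(P_V(f(z_V))\big)$. Unwinding this membership and using convexity of both $f(z_V)$ and $V$, I would write $z_V = w_V + v_V$ with $w_V \in f(z_V)$ and $v_V \in V$; that is, $(z_V, w_V) \in G$ and $z_V - w_V \in V$, an approximate fixed point.

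Finally I would pass to the limit. The net $(z_V)$ lies in the compact set $K$, so along a subnet $z_V \to x_0 \in K$ and, refining again, $w_V \to y_0 \in K$. Closedness of $G$ forces $(x_0,y_0) \in G$, i.e. $y_0 \in f(x_0)$, while $z_V - w_V \in V$ along the directed set forces $x_0 - y_0 = 0$; hence $x_0 \in f(x_0)$. I expect the main obstacle to be the middle paragraph: verifying that $\mathrm{co}(P_V \circ f)$ genuinely satisfies Kakutani's hypotheses (upper semicontinuity of the convex hull of a compact-valued u.s.c.\ map into a finite-dimensional space, together with the self-map property) and, above all, controlling the Schauder projection so that the approximate fixed points stay within $V$ of the graph. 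That approximation bound is exactly what legitimizes the closed-graph limit, so any slack there would break the argument.
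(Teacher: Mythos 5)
The paper does not actually prove this theorem: it is reproduced verbatim from Ky Fan's paper \cite{fan1952fixed} and used as a black box in the appendix existence proof, so there is no internal proof to compare your argument against. Your proposal is a correct outline of the classical argument for the Kakutani--Fan--Glicksberg theorem (closest in form to Glicksberg's proof): Schauder projections onto the convex hull of a finite $V$-net reduce the problem to the finite-dimensional Kakutani theorem on $K_V$; convexity of $f(z_V)$ and of the balanced, convex $V$ convert the fixed point of $x \mapsto \mathrm{co}\bigl(P_V(f(x))\bigr)$ into a pair $(z_V,w_V)$ in the graph with $z_V - w_V \in V$; and compactness of $K$ plus closedness of the graph finish the limit. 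The two details you flag are genuine but standard: upper semicontinuity of $x \mapsto \mathrm{co}\bigl(P_V(f(x))\bigr)$ holds because composing a compact-valued u.s.c.\ map with the continuous $P_V$ preserves u.s.c., and in finite dimensions taking convex hulls preserves both compactness (Carath\'eodory) and u.s.c.\ (enclose $\mathrm{co}(P_V(f(x)))$ in a convex $\varepsilon$-neighborhood and apply u.s.c.\ to that neighborhood); the bound $P_V(y) - y \in V$ is exactly your partition-of-unity computation and is where balancedness and convexity of $V$ are consumed. One point you should make explicit rather than implicit: your limiting step---the closed-graph lemma (which separates $f(x_0)$ from a point by disjoint open sets) and the conclusion $x_0 = y_0$ from $z_V - w_V \in V$ over a neighborhood base---requires the space to be Hausdorff; without it you only obtain $x_0 - y_0 \in \overline{\{0\}}$. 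This is Fan's standing assumption, and it costs nothing in the paper's application since the ambient space there is shown to be metrizable, but a self-contained write-up should state it.
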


The more modern term for locally convex topogological linear space is locally convex topological vector space (LCTVS).

Let $\cT$ be the space of pTM's generalized to include the tape configuration. We will consider points $$(\query, \eval)$$ where $\eval \in (\Delta \cA)^\cT$, which we interpret as (for a fixed point) returning the completed chance of a given pTM outputting $\alpha$, and $\query \in [0,1]^{\cT \times (\mathbb{Q} \cap [0,1]) \times \cA}$ which is a function on queries returning the oracle's chance of producing a 1. We will later introduce constraints on $\query$ in terms of $\eval$ (through the point-to-set map $f$) so that a fixed point is a reflective oracle, which is why the definition of the space for $\query$ does not need to enforce that the oracle's answers describe a probability distribution. The space of $(\query, \eval)$ pairs is 
\[
K = [0,1]^{\cT \times (\mathbb{Q} \cap [0,1]) \times \cA} \times (\Delta \cA)^\cT 
\]
This is a subset (under a trivial ``currying'' homeomorphism on the space of $\eval$) of
\[
S = \mathbb{R}^{\cT \times (\mathbb{Q} \cap [0,1]) \times \cA} \times \mathbb{R}^{(\cT \times \cA)}
\]
which is an LCTVS under the product topology. This follows from the fact that $\mathbb{R}$ is a (very simple) LCTVS, and all LCTVS properties are closed under the product operation. 

%-------------------------------%
\paragraph{Topological properties.}
%-------------------------------%
We will later need that S is metrizable to use a notion of sequential convergence instead of ``upper semicontinuity". Because $\mathbb{R}$ is Hausdorff and products of Hausdorff spaces are also Hausdorff, S is a LCTVS in the stricter sense of Rudin's Functional Analysis \cite{rudin1991functional} (which requires $T_1$, a property weaker than Hausdorff). Theorem 9 tells us that TVS is metrizable if it has a countable local base. Our countable local base for $S$ is a sequence of balls with radius $1/n$ at the first $n$ points in enumerations of both the exponents ($\cT \times \cA$ and $\cT \times (\mathbb{Q} \cap [0,1]) \times \cA$), with $\mathbb{R}$ at the (infinitely many) remaining points. So $S$ is metrizable. Now we need to know that $K$ is compact and convex. Tychonoff's theorem implies it is compact. Convexity is immediate from convexity of simplices. 

%-------------------------------%
\paragraph{The point-to-set map $f$.}
%-------------------------------%
The original proof demonstrates that it is possible to construct an oracle reflective on some subset of queries and equal to a different fixed oracle elsewhere. This is more than we need and clutters the proof, but is easy to understand once the proof is digested, so we focus on the case that we want a reflective oracle on all queries. We are ready to define our point-to-set map $f : K \to 2^K$. We will then demonstrate that the range of $f$ is $R(K)$. Let $(\query, \eval) \in f((\query, \eval))$ iff the following conditions hold:
\[
p < \eval_\alpha(T) \Rightarrow \query'_\alpha(T,p) = 1
\]
\[
p > \eval_\alpha(T) \Rightarrow \query'_\alpha(T,p) = 0
\]
This leaves $\query'_\alpha(T, \eval_\alpha(T))$ unconstrained.

The recursive rules on $\eval'$ are slightly more complicated. The ``base case'' is that if $T$ returns $T() = \beta$ on its next computation step,
\[
\eval'_\alpha(T) = \delta_\alpha(T()) = [\![\alpha = \beta]\!]
\]
$T$ can halt returning nothing or something other than a single symbol. In that case, $\eval'_\alpha(T)$ is arbitrary.  

The other ``inductive'' rules are copied from \cite{fallenstein_reflective_2015}. If $T$ performs a deterministic computation step producing a new machine/configuration $N$, $\eval'_\alpha(T) = \eval(N)$. If $T$ performs a coin flip yielding a state $N$ with rational probability $p$ and $N'$ with rational probability $1-p$, $\eval'_\alpha(T) = p \eval_\alpha(N) + (1-p) \eval_\alpha(N')$. We observe that typically a transition of a pTM is defined to use exactly one random bit so we should have $p = \frac{1}{2}$. Also, if $T$ has a chance of halting with some output immediately after reading the random bit (in the same computation step), $\eval_\alpha(N)$ or $\eval_\alpha(N')$ should be replaced according to the first rule; otherwise they are not really well-defined. Calls to the oracle should be treated the same as coin flips with probability determined by $\query$: if the oracle is invoked on $(T,p)$ yielding $N$ on 1 and $N'$ on 0, and $\query(T,p) = q$ then $\eval'_\alpha(T) = q \eval_\alpha(N) + (1-q) \eval_\alpha(N')$.

Assuming for a moment that a fixed point exists, we will show that this gives us a reflective oracle, defined by $P[O_\alpha(T, p)=1] = \query_\alpha(T, p)$. We normally want our reflective oracles to accept a machine description and input string, but these can be combined into an extended machine/configuration as in the definition of $O$.

By induction on the number of computation steps,
\[
P[T^O() = \alpha] \leq \eval_\alpha(T) \leq 1- P[T^O() \neq \alpha]
\]
Together with the conditions of $\query_\alpha(T,p)$, this shows that $\eval_\alpha(T)$ satisfies the last three conditions on $q_\alpha$ in \cref{par:defs}. The first condition is automatically satisfied because $\eval(T)$ is restricted to the simplex; this pushes the burden of proof to the non-emptiness of $f((\query, \eval))$. Therefore, $O$ is a reflective oracle.

%-------------------------------%
\paragraph{Existence of a fixed point.}
%-------------------------------%
Now we only need to prove the existence of a fixed point by establishing the conditions of the infinite dimensional version of the Kakutani fixed-point theorem. First, we will show that $f((\query, \eval))$ is closed, convex, and non-empty. In a metrizable space, closed sets can be characterized in the ordinary way by sequential convergence. Note that at every point, $\query'$ and $\eval'$ are either restricted to some fixed value depending on $(\query, \eval)$ or are unrestricted (except for $\eval'$ to lie on the simplex, which is closed). Taking limits (of any such sequence) shows that $f((\query, \eval))$ is closed. Convexity is also easy to verify pointwise (and from convexity of the simplex). Finally, we must show non-emptiness. It is obvious that the conditions on $\query'$ can always be satisfied. It remains only to show that the conditions on $\eval'$ can be satisfied by points lying on the simplex. But this is only another application of convexity of the simplex, noting that $q$ and $p$ lie in [0,1], and $\eval(T) \in \Delta \cA$, and verifying that the ``base case'' condition produces a point on the simplex (in fact an extreme point of the simplex).  

The last property to verify is ``upper semicontinuity.'' But as Fan points out, this is equivalent to the definition in terms of convergent sequences, which is usually called the ``closed graph property.'' We must show that if $(\query'_n, \eval'_n) \to (\query', \eval')$, $(\query_n, \eval_n) \to (\query, \eval)$, and $(\query'_n, \eval'_n) \in f((\query_n, \eval_n))$, then $(\query', \eval') \in f((\query, \eval))$. 

The argument is exactly the same as in the binary case, but applies ``pointwise'' to each $\alpha$. Taking limits on both sides of the rules for $\eval'$ immediately gives the desired result for this part. For $\query'_\alpha(T,p)$, the argument depends on the relationship between $\eval_\alpha(T)$ and $p$. If they are equal, the condition is automatically satisfied. The two remaining cases are symmetric, so we consider (w.l.o.g.) the case that $\eval_\alpha(T) > p$. Since $\eval_n\to\eval$ (in the topology of pointwise convergence), $(\eval_{n})_\alpha(T)\to\eval_\alpha(T)$, and for sufficiently large $n$, $(\eval_n)_\alpha(T) > p$. This means that $(\query'_n)_\alpha(T, p) \to 1$, so $\query'_\alpha(T,p) = 1$. This proves the closed graph property, which implies a fixed point of $f$ by the infinite dimensional Kakutani fixed point theorem. Therefore, there is a non-binary alphabet reflective oracle.

%-------------------------------%
\paragraph{Types for symbols.}
%-------------------------------%
Sometimes the alphabet $\cA = \bigsqcup_{1\leq i \leq n} \cA_i$ and each machine has an intended \emph{type} in $\{1,..,n\}$. Then we want to interpret our machines as semimeasures over the corresponding $\cA^*_i$ and complete the conditionals to $\Delta \cA_i$. The primary examples are when each player has a different action set (so $n$ is the number of players) and when a player interacts with an environment (so $n=2$ for the action and percept spaces). The natural idea is to use a different reflective oracle with each output alphabet $\cA_i$, but unfortunately the oracles need to be answer questions about each other's behavior so this does not literally work. Instead we can change the requirement $\sum_{\alpha \in \cA} q_\alpha = 1$ to the $n$ requirements $\sum_{\alpha \in \cA_i} q_\alpha = 1$, and force the oracle to satisfy this by changing the space of $\eval$ from $(\Delta \cA)^\cT$ to $(\prod_{i=1}^n \Delta \cA_i)^\cT$. It is easy to see that this set is still closed, convex, and compact. The rest of the existence proof goes through essentially unchanged. The resulting (typed) non-binary oracle automatically knows which output type we expect for a machine based on the type of its symbol argument $\alpha \in \cA_i$ and completes the associated semimeasure appropriately, redistributing the probability of outputs outside of $\cA_i$ in the same way as non-halting probability mass.    

%-------------------------------%
\paragraph{Reflectivity on subsets.}
%-------------------------------%
It is easy to modify the proof above so that we construct an oracle $O$ that satisfies reflectivity on some subset of queries $R$ but behaves identically to any arbitrary oracle $O'$ outside of $R$. Let $A$ be a \emph{closed} set of pTM's that only make oracle calls about other pTM's in $A$. Let $R = \{(T,x,p)|T \notin A\}$. Then if $O'$ is reflective on queries about pTM's in $A$ (that is, $R^C$), we can construct an oracle $O$ reflective on $R$ that agrees with $O'$ on $R^C$. But because replacing $O'$ by $O$ does not change the results of any oracle calls for machines in $A$, $O$ is also reflective on $R^C$, which means $O$ is a reflective oracle (on all queries). This is analogous to the extending a linearly indepedent set of vectors to a basis. In some cases it is easy to find a variety of explicit oracles $O'$ reflective on $A$; for instance, if $A = \{ T \}$ where $T$ never halts, $O'$ can complete $\lambda^O_T$ to a measure in any arbitrary way, and there will exist a reflective oracle $O$ agreeing with $O'$ about $T$. This means that all measures are reflective-oracle computable with appropriate choice of $O$ (though by countability of $\cT$ there is no \emph{particular} $O$ that makes every measure \rOc).   

%%%%%%%%%%%%%%%%%%%%%%%%%%%%%%%%%%%%%%%%%%%%%%%%%%%%%%%%%%%%%%%
\section{Limit-Computable Step Reflective Oracles} \label{sec:limit_computability}
%%%%%%%%%%%%%%%%%%%%%%%%%%%%%%%%%%%%%%%%%%%%%%%%%%%%%%%%%%%%%%%

We will extend Leike's proof that there is a limit-computable reflective oracle \cite{leike_formal_2016} to show that there is a limit-computable non-binary reflective oracle. This requires a slightly different construction which yields a limit-computable step reflective oracle when restricted to the binary case, simplifying the oracle completion process by removing the need for an expectation. 

Following \cite{leike_formal_2016}, we will construct an infinite sequence of partial oracles converging to a reflective oracle in the limit. The set of queries to a reflective oracle is countable and computably enumerable, so we will fix a computable enumeration:
\[
\cT \times \Sigma^* \times \mathbb{Q} =: \{ \cQ_1, \cQ_2, ... \}
\]

where $\cT$ is the set of (generalized) pTM's as above and $\Sigma$ is the input alphabet. A reflective oracle is also indexable by symbols from the output alphabet. 

%-------------------------------%
\begin{definition}[k-partial oracle]
%-------------------------------%
    A k-partial oracle $\tilde{O}_\alpha$ is a function from the first k queries to the multiples of $2^{-k}$ in [0,1]:
    \[
    \tilde{O}_\alpha : \{\cQ_1, \cQ_2, ..., \cQ_k \}\to\{ n2^{-k} | 0 \leq n \leq 2^k \}
    \]
\end{definition}

%-------------------------------%
\begin{definition}[approximate an oracle]
%-------------------------------%
    A k-partial oracle $\tilde{O}$ approximates an oracle $O$ iff $\forall \alpha |O_\alpha(\cQ_i) - \tilde{O}_\alpha(\cQ_i)| \leq 2^{-k-1}$ for all $i \leq k$.
\end{definition}

Let $\tilde{O}$ be a $k$-partial oracle for $k \in \mathbb{N}$ and let $T \in \cT$ be an oracle machine. We define $T^{\tilde{O}}$ to be the following machine:

\begin{enumerate}
    \item Run $T$ for at most $k$ steps.
    \item If $T$ makes an oracle $\alpha$ call on $\cQ_i$ for $i \leq k$,
    \begin{enumerate}
        \item Return 1 with probability $\tilde{O}_\alpha(\cQ_i) - 2^{-k}$
        \item Return 0 with probability $1 - \tilde{O}_\alpha(\cQ_i) - 2^{-k}$
        \item halt otherwise
    \end{enumerate} 
    \item If $T$ calls the oracle on $\cQ_j$ for $j > k$, halt.
\end{enumerate}

Since $\tilde{O}$ is not a fully defined oracle, this is different than the usual meaning of $T^O$. In particular it implicitly depends on the value of $k$.

%-------------------------------%
\begin{lemma}[bound on $\lambda_T^{\tilde{O}}$] \label{lambda_bounds}
%-------------------------------%
    If a $k$-partial oracle $\tilde{O}$ approximates a reflective oracle $O$, then $\lambda_T^O(\alpha|x) \geq \lambda_T^{\tilde{O}}(\alpha|x)$ for all $\alpha \in \cA$, $x \in \Sigma^*$, and $T \in \cT$. 
\end{lemma}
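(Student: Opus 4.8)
The plan is to prove the inequality by a coupling (stochastic domination) argument: I will exhibit a joint coupling of the randomness driving $T^{\tilde O}$ and $T^O$ under which every computation path that makes $T^{\tilde O}$ output $\alpha$ also makes $T^O$ output $\alpha$. Consequently the event ``$T^{\tilde O}$ halts with output $\alpha$'' is contained in the event ``$T^O$ halts with output $\alpha$,'' and taking probabilities yields $\lambda_T^{\tilde O}(\alpha|x)\le\lambda_T^O(\alpha|x)$.

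First I would extract the two elementary bounds that make the domination work. Since $\tilde O$ approximates $O$ we have $|O_\alpha(\cQ_i)-\tilde O_\alpha(\cQ_i)|\le 2^{-k-1}$ for all $i\le k$. Hence on any oracle-$\alpha$ call to a query $\cQ_i$ with $i\le k$, the probability $\tilde O_\alpha(\cQ_i)-2^{-k}$ with which $T^{\tilde O}$ answers $1$ satisfies $\tilde O_\alpha(\cQ_i)-2^{-k}\le O_\alpha(\cQ_i)-2^{-k-1}\le O_\alpha(\cQ_i)$, and the probability $1-\tilde O_\alpha(\cQ_i)-2^{-k}$ with which it answers $0$ satisfies $1-\tilde O_\alpha(\cQ_i)-2^{-k}\le 1-O_\alpha(\cQ_i)-2^{-k-1}\le 1-O_\alpha(\cQ_i)$. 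Thus at each oracle call $T^{\tilde O}$ answers each of $0$ and $1$ with no more probability than $T^O$ does, the missing mass $2^{-k+1}$ being routed to an immediate no-output halt.

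Next I would build the coupling. The two machines share the random tape (so their coin flips agree), and for each oracle call I use one auxiliary uniform $U\in[0,1]$ for both: I nest the answer-$1$ interval of $T^{\tilde O}$ inside that of $T^O$, and likewise nest the answer-$0$ intervals, which is exactly what the two bounds above permit, leaving the residual mass as the region on which $T^{\tilde O}$ halts while $T^O$ continues. Under this coupling, so long as $T^{\tilde O}$ has not halted, its oracle answers coincide with $T^O$'s and the two computations evolve identically; therefore if $T^{\tilde O}$ reaches a halting configuration outputting $\alpha$ (necessarily within $k$ steps and using only queries $\cQ_i$ with $i\le k$), then $T^O$ has traced the identical computation and also outputs $\alpha$. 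The extra ways $T^{\tilde O}$ can terminate without output---running past $k$ steps, or querying some $\cQ_j$ with $j>k$---only remove paths from its output event, so they are harmless for the inequality we seek.

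The main subtlety to handle carefully is the coupling across the (almost surely finitely many, on the relevant paths) independent oracle calls: one must use that the oracle randomizes independently at each invocation, so that an independent uniform may be allotted to each call, and that ``halt'' in $T^{\tilde O}$ is genuinely absorbing. A clean alternative that avoids an explicit global coupling is a direct induction on the number of computation steps of $T$, showing that the probability of $T^{\tilde O}$ reaching any given configuration with eventual output $\alpha$ is dominated by the corresponding probability for $T^O$; the induction is immediate for deterministic steps and coin flips, and at an oracle-call step it reduces to exactly the two numerical bounds established above. Either route collapses the lemma to those elementary estimates.
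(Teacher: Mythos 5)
Your proposal is correct and takes essentially the same route as the paper: the paper's proof also reduces to the two numerical bounds $\tilde O_\alpha(\cQ_i)-2^{-k}\le O_\alpha(\cQ_i)$ and $1-\tilde O_\alpha(\cQ_i)-2^{-k}\le 1-O_\alpha(\cQ_i)$, observes that every sequence of oracle responses is thereby less likely under $\tilde O$, and notes that the extra halting cases (step limit $k$, queries of index $>k$) only discard probability mass. Your coupling (or the step-by-step induction you sketch as an alternative) is simply a more formal packaging of that same path-domination argument.
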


\begin{proof} This follows from the definition of $T^{\tilde{O}}$: when running $T$ with $\tilde{O}$ instead of $O$, every sequence of oracle responses is less likely because $\tilde{O}_\alpha - 2^{-k} < O_\alpha(\cQ_i)$ and $1 - \tilde{O}_\alpha - 2^{-k} < 1 - O(\cQ_i)$. The other differences can also only lose probability mass. If $T$ makes calls whose index is $> k$ or runs for more than $k$ steps the machine halts and no output is generated. 
\end{proof}

%-------------------------------%
\begin{definition}[$k$-partially reflective]
%-------------------------------%
    A $k$-partial oracle $\tilde{O}$ is $k$-partially (step) reflective iff for the first $k$ queries $(T,x,p)$ and for all $\alpha$
    \begin{enumerate}
        \item $p < \lambda_T^{\tilde{O}}(\alpha|x)$ implies $\tilde{O}_\alpha(T,x,p) = 1$, and
        \item $p > 1 - \sum_{\beta \neq \alpha} \lambda_T^{\tilde{O}}(\beta|x)$ implies $\tilde{O}_\alpha(T, x, p) = 0$.
    \end{enumerate}
    Also, we require that for all $\alpha, T, x$, $\tilde{O}_\alpha(T,x,\cdot)$ is non-increasing and there is at most one $p$ such that $(T,x,p)$ is in the first $k$ queries and $\tilde{O}_\alpha(T,x,p) \notin \{ 0, 1 \}$. 

    Finally, for each $T,x$ appearing in one of the first $k$ queries,   
    \begin{enumerate}
        \item $\sum_{\alpha \in \cA} \min \{p | (T,x,p) \in \{Q_i\}_{i \leq k} \text{\ and\ } \tilde{O}_\alpha(T,x,p) = 0\} \geq 1$
        \item $\sum_{\alpha \in \cA} \max \{p | (T,x,p) \in \{Q_i\}_{i \leq k} \text{\ and\ } \tilde{O}_\alpha(T,x,p) = 1 \} \leq 1$
    \end{enumerate}
    
\end{definition}
The minima default to 1 and the maxima default to 0. The first pair of conditions enforce the reflective oracle property. The remaining conditions enforce the step reflective oracle property which is always required for non-binary reflective oracles.

We can check whether a $k$-partial oracle is $k$-partially reflective in finite time. The first pair of conditions can be checked by running the machines from the first $k$ queries for $k$ steps each (on every combination of the $\leq 2^k$ random bits used) and calculating $\lambda_T^{\tilde{O}}(\alpha | x)$ exactly. The rest are clearly possible to verify with one pass over the first $k$ queries for each $\alpha$. 

%-------------------------------%
\begin{lemma}[partial approximations are partially reflective]\label{lemma:approx_refl_implies_partial_refl}
%-------------------------------%
    If $O$ is a reflective oracle and $\tilde{O}$ is a $k$-partial oracle that approximates $O$, then $\tilde{O}$ is $k$-partially reflective.
\end{lemma}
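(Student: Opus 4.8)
The plan is to exploit the fact that $O$ is a \emph{step} reflective oracle: for each $T$ and $x$ there are cutoffs $\{q_\alpha\}_{\alpha\in\cA}$ lying on the simplex ($\sum_\alpha q_\alpha=1$) with $\lambda_T^O(\alpha|x)\leq q_\alpha\leq 1-\sum_{\beta\neq\alpha}\lambda_T^O(\beta|x)$, and $O_\alpha(T,x,p)$ equal to $1$ for $p<q_\alpha$ and to $0$ for $p>q_\alpha$. The engine of the whole argument is a \emph{snapping} observation about the discretization: since $\tilde O_\alpha$ approximates $O_\alpha$ to within $2^{-k-1}$ yet takes values only in multiples of $2^{-k}$, whenever $O_\alpha(\cQ_i)=1$ we get $\tilde O_\alpha(\cQ_i)\in[1-2^{-k-1},1]$, whose only multiple of $2^{-k}$ is $1$, forcing $\tilde O_\alpha(\cQ_i)=1$; symmetrically $O_\alpha(\cQ_i)=0$ forces $\tilde O_\alpha(\cQ_i)=0$. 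Thus $\tilde O$ reproduces $O$ exactly on every query where $O$ has already committed to $0$ or $1$, and for fixed $(T,x)$ the only query $(T,x,p)$ at which $\tilde O_\alpha$ can be nontrivial is $p=q_\alpha$.

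First I would establish the two reflectivity conditions. By \cref{lambda_bounds}, $\lambda_T^{\tilde O}(\alpha|x)\leq\lambda_T^O(\alpha|x)\leq q_\alpha$, so $p<\lambda_T^{\tilde O}(\alpha|x)$ gives $p<q_\alpha$, hence $O_\alpha(T,x,p)=1$, and by snapping $\tilde O_\alpha(T,x,p)=1$. Dually, applying \cref{lambda_bounds} to each $\beta\neq\alpha$ yields $1-\sum_{\beta\neq\alpha}\lambda_T^{\tilde O}(\beta|x)\geq 1-\sum_{\beta\neq\alpha}\lambda_T^O(\beta|x)\geq q_\alpha$, so $p>1-\sum_{\beta\neq\alpha}\lambda_T^{\tilde O}(\beta|x)$ gives $p>q_\alpha$, hence $O_\alpha(T,x,p)=0$ and by snapping $\tilde O_\alpha(T,x,p)=0$.

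Next I would handle the structural conditions, all of which reduce to locating query points relative to $q_\alpha$. Snapping gives $\tilde O_\alpha(T,x,p)=1$ at every query point with $p<q_\alpha$ and $\tilde O_\alpha(T,x,p)=0$ at every query point with $p>q_\alpha$, leaving at most the single point $p=q_\alpha$ nontrivial; this simultaneously delivers both monotonicity of $\tilde O_\alpha(T,x,\cdot)$ and the ``at most one nontrivial $p$'' clause. For the step-reflective sums, any query point with $\tilde O_\alpha=0$ must have $p\geq q_\alpha$ (since $p<q_\alpha$ forces value $1$), so the minimum over such points (defaulting to $1\geq q_\alpha$ when there are none) is $\geq q_\alpha$, and summing over $\alpha$ gives $\geq\sum_\alpha q_\alpha=1$; symmetrically any query point with $\tilde O_\alpha=1$ has $p\leq q_\alpha$, so the maxima are each $\leq q_\alpha$ and sum to $\leq\sum_\alpha q_\alpha=1$.

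The main obstacle is purely the bookkeeping around the boundary point $p=q_\alpha$: one must verify that permitting $\tilde O_\alpha(T,x,q_\alpha)$ to be an arbitrary multiple of $2^{-k}$ breaks neither monotonicity nor the two sum bounds. It does not, because $q_\alpha$ is a valid witness for both the min-over-zeros inequality ($\geq q_\alpha$) and the max-over-ones inequality ($\leq q_\alpha$) regardless of the value $\tilde O_\alpha$ takes there, and inserting one interior value between a prefix of $1$'s and a suffix of $0$'s preserves the non-increasing ordering. Every other step is an immediate consequence of the snapping observation together with \cref{lambda_bounds}.
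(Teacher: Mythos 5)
Your proposal is correct and follows essentially the same route as the paper's proof: the grid-plus-precision ``snapping'' observation, Lemma~\ref{lambda_bounds} for the two reflectivity conditions, and the cutoffs $q_\alpha$ of the step reflective oracle $O$ for monotonicity, uniqueness of the nontrivial point, and the two sum bounds. The only difference is presentational — you spell out the snapping argument and the boundary-point bookkeeping more explicitly than the paper does.
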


\begin{proof} Note that since $\tilde{O}$ assigns values in a $2^{-k}$ grid and approximates $O$ up to $2^{-k-1}$, for the first $k$ queries $O_\alpha(T,x,p) = 0\to\tilde{O}_\alpha(T,x,p) = 0$ and $O_\alpha(T,x,p) = 1\to\tilde{O}_\alpha(T,x,p) = 1$. Assuming $\lambda^{\tilde{O}}_T(\alpha | x) > p$ we get from \cref{lambda_bounds} that $\lambda_T^O(\alpha|x) \geq \lambda_T^{\tilde{O}}(\alpha|x) \geq p$ so $1 = O_\alpha(T,x,p) = \tilde{O}_\alpha(T,x,p)$. The second condition is proved symmetrically. For the remaining conditions, recall that for each $T,x$, $\exists q_\alpha$ with $O(T,x,p) = 1$ for $p < q_\alpha$, $O(T,x,p) = 0$ for $p > q_\alpha$, and $\sum_\alpha q_\alpha = 1$. This means that $\tilde{O}_\alpha(T,x,p) = 1$ for $p < q_\alpha$ and $\tilde{O}_\alpha(T,x,p) = 0$ for $p > q_\alpha$, and since these are the maximum and minimum values of $\tilde{O}$ it is certainly non-increasing regardless of its value at $q_\alpha$ (which is in general not defined). Additionally, if $\tilde{O}_\alpha(T,x,p) = 0$ then certainly $p \geq q_\alpha$, which implies that the sum of minima is $\geq \sum_\alpha q_\alpha = 1$ (this is also automatically satisfied if any $\tilde{O}_\alpha(T,x,\cdot)$ does not take the value 0). The argument for the bound on the maxima is similar. 
\end{proof}

%-------------------------------%
\begin{definition} [extending partial oracles] \label{def:extending_partial_O}
%-------------------------------%
    A $k+1$ partial oracle $\tilde{O}'$ extends a $k$-partial oracle $\tilde{O}$ iff $|\tilde{O}_\alpha(\cQ_i) - \tilde{O}'_\alpha(\cQ_i)| \leq 2^{-k-1}$ for all $i \leq k$. 
\end{definition}

%-------------------------------%
\begin{lemma}[infinite sequence of extensions]\label{lemma:infinite_seq}
%-------------------------------%
    There is an infinite sequence of partial oracles $(\tilde{O}^k)_{k \in \mathbb{N}}$ such that for each $k$, $\tilde{O}^k$ is a $k$-partially reflective $k$-partial oracle and $\tilde{O}^{k+1}$ extends $\tilde{O}^k$.
\end{lemma}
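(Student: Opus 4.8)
The plan is to produce the sequence non-constructively, by rounding a single fixed reflective oracle at ever finer dyadic resolutions and then leaning on the preceding lemma to get partial reflectivity for free. By the existence result for (typed, non-binary) reflective oracles established via the Kakutani fixed-point theorem in \cref{sec:non_binary_refl_oracle_existence}, fix one reflective oracle $O$. For each $k$ and each output symbol $\alpha$, define $\tilde{O}^k_\alpha$ on the first $k$ queries $\cQ_i$ (for $i \leq k$) by rounding the real number $O_\alpha(\cQ_i)$ to the nearest multiple of $2^{-k}$, breaking ties by (say) rounding down. This is manifestly a $k$-partial oracle in the sense of the definition.

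Next I would check that $\tilde{O}^k$ \emph{approximates} $O$. The maximum distance from any real in $[0,1]$ to the nearest point of the $2^{-k}$-grid is half the grid spacing, namely $2^{-k-1}$; hence $|O_\alpha(\cQ_i) - \tilde{O}^k_\alpha(\cQ_i)| \leq 2^{-k-1}$ for every $\alpha$ and every $i \leq k$, which is exactly the approximation condition. By \cref{lemma:approx_refl_implies_partial_refl} this immediately yields that each $\tilde{O}^k$ is $k$-partially reflective. Crucially, this route also discharges the step, monotonicity, and simplex-sum conditions bundled into $k$-partial reflectivity: they are all inherited from the genuine reflective oracle $O$ through that lemma, so I never have to verify them by hand for the rounded oracles.

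The remaining task is the extension property, and this is the one place where care is needed. A crude triangle-inequality estimate bounds $|\tilde{O}^k_\alpha(\cQ_i) - \tilde{O}^{k+1}_\alpha(\cQ_i)|$ by $2^{-k-1} + 2^{-k-2} = 3\cdot 2^{-k-2}$, which overshoots the required threshold $2^{-k-1} = 2\cdot 2^{-k-2}$. To recover the sharp bound I would instead exploit that the $2^{-(k+1)}$-grid refines the $2^{-k}$-grid. Writing $c := \tilde{O}^k_\alpha(\cQ_i)$ and $x := O_\alpha(\cQ_i)$, the point $c$ is a multiple of $2^{-k}$ and hence also a point of the finer $2^{-(k+1)}$-grid, while $|x-c| \leq 2^{-k-1}$. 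Since the finer grid has spacing $2^{-(k+1)} = 2^{-k-1}$, the value $x$ lies within a single finer-grid cell of $c$, so the nearest finer-grid point $\tilde{O}^{k+1}_\alpha(\cQ_i)$ is either $c$ itself or an immediate neighbour of $c$; in either case $|\tilde{O}^k_\alpha(\cQ_i) - \tilde{O}^{k+1}_\alpha(\cQ_i)| \leq 2^{-k-1}$. As this holds for all $i \leq k$ and all $\alpha$, it is precisely the condition of \cref{def:extending_partial_O}, so $\tilde{O}^{k+1}$ extends $\tilde{O}^k$.

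With both properties established for every $k$, the sequence $(\tilde{O}^k)_{k\in\mathbb{N}}$ witnesses the claim. I expect the grid-refinement step to be the main (and essentially the only) obstacle, since it is the single point where the obvious estimate fails and one must use the dyadic nesting of the two grids to pin the gap down to exactly $2^{-k-1}$.
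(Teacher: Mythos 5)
Your proposal is correct and follows essentially the same route as the paper's proof: fix a reflective oracle $O$ (whose existence is established in \cref{sec:non_binary_refl_oracle_existence}), round it to the nearest point of the $2^{-k}$ grid on the first $k$ queries, and invoke \cref{lemma:approx_refl_implies_partial_refl} to get $k$-partial reflectivity. The only difference is that you explicitly verify the extension property via the nesting of the dyadic grids---a step the paper dismisses with ``by construction''---and your grid-refinement argument is correct, so you have in fact filled in a detail the paper leaves implicit.
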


\begin{proof} As shown in \cref{sec:non_binary_refl_oracle_existence}, there is a (step) reflective oracle $O$ for any finite alphabet.
For every $k$, there is a canonical $k$-partial oracle $\tilde{O}_k$ that approximates $O$: restrict $O$ to the first $k$ queries and for any such query $\cQ$ for each $\alpha \in \cA$ pick the value in the $2^{-k}$ grid which is closest to $O_\alpha(\cQ)$. By construction, each $\tilde{O}^{k+1}$ extends $\tilde{O}^k$ and by \cref{lemma:approx_refl_implies_partial_refl}, each $\tilde{O}^k$ is $k$-partially reflective.
\end{proof}

%-------------------------------%
\begin{lemma}[$\lambda_T^{\tilde{O}^{k}}$ increases]\label{lemma:lambdas_increase}
%-------------------------------%
    If the $k+1$ partial oracle $\tilde{O}^{k+1}$ extends the $k$-partial oracle $\tilde{O}_k$, then $\forall \alpha$ $\lambda_T^{\tilde{O}^{k+1}}(\alpha | x) \geq \lambda_T^{\tilde{O}^{k}}(\alpha | x)$ for each $T \in \cT$ and $x \in \cA^*$ 
\end{lemma}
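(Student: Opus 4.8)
The plan is to mimic the path-comparison argument of \cref{lambda_bounds}, replacing the comparison between a reflective oracle $O$ and its approximant $\tilde{O}$ by a comparison between the two successive partial oracles $\tilde{O}^k$ and $\tilde{O}^{k+1}$. Note that the statement never invokes reflectivity, so I would use only the definition of $T^{\tilde{O}}$ together with \cref{def:extending_partial_O}. The underlying idea is that passing from $\tilde{O}^k$ to $\tilde{O}^{k+1}$ relaxes the machine $T^{\tilde{O}}$ in three independent ways, each of which can only add probability mass to any fixed output $\alpha$.

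First I would isolate the three effects. (i) The step budget grows from $k$ to $k+1$, so every halting computation of $T^{\tilde{O}^k}$ producing $\alpha$ within $k$ steps is still a valid $\alpha$-producing computation of $T^{\tilde{O}^{k+1}}$; no path is truncated. (ii) The admissible query-index threshold grows from $k$ to $k+1$, so a call to $\cQ_{k+1}$ that forced an output-destroying halt under $\tilde{O}^k$ is now answered under $\tilde{O}^{k+1}$, which can only let more computations reach an output. (iii) On each answered query the per-response probabilities increase: writing the answer-$1$ probability as $\tilde{O}^k_\alpha(\cQ_i)-2^{-k}$ at level $k$ and $\tilde{O}^{k+1}_\alpha(\cQ_i)-2^{-k-1}$ at level $k+1$, the extension bound $|\tilde{O}^{k+1}_\alpha(\cQ_i)-\tilde{O}^k_\alpha(\cQ_i)|\le 2^{-k-1}$ gives
\[
\bigl(\tilde{O}^{k+1}_\alpha(\cQ_i)-2^{-k-1}\bigr)-\bigl(\tilde{O}^k_\alpha(\cQ_i)-2^{-k}\bigr) = \bigl(\tilde{O}^{k+1}_\alpha(\cQ_i)-\tilde{O}^k_\alpha(\cQ_i)\bigr)+2^{-k-1}\ge 0,
\]
and the symmetric computation shows the answer-$0$ probability $1-\tilde{O}_\alpha(\cQ_i)-2^{-k}$ also weakly increases, while the leftover ``halt-otherwise'' mass correspondingly shrinks from $2^{1-k}$ to $2^{-k}$.

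I would then combine these via an explicit monotone coupling of the randomness. Placing, for each oracle call, the answer-$1$ event on an initial subinterval of $[0,1]$ and the answer-$0$ event on a terminal subinterval, with the shrinking ``loss'' region in the middle, effect (iii) makes each answer region under $\tilde{O}^{k+1}$ contain the corresponding region under $\tilde{O}^k$; together with (i) and (ii), every realization of the coin-flip tape and the oracle-answer variables that makes $T^{\tilde{O}^k}$ output $\alpha$ on input $x$ also makes $T^{\tilde{O}^{k+1}}$ output $\alpha$. Summing over realizations then yields $\lambda_T^{\tilde{O}^{k+1}}(\alpha|x)\ge\lambda_T^{\tilde{O}^{k}}(\alpha|x)$, as desired.

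I expect the main obstacle to be the bookkeeping of the coupling rather than any deep idea: because a single run may issue the same query several times and the computation tree branches, one must arrange the coupling so that all three monotonicities act simultaneously along every branch. Once the answer regions are nested as above and the step and query-index budgets only widen, the inclusion of the $\alpha$-producing realizations is immediate, and the arithmetic of step (iii) is the only genuine computation.
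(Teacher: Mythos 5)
Your proposal is correct and takes essentially the same approach as the paper: the paper's (terser) proof rests on exactly your three observations --- a larger step budget, one more answerable query, and the arithmetic showing both the answer-$1$ and answer-$0$ probabilities weakly increase while the halting mass shrinks --- and concludes that no $\alpha$-producing computation loses probability. Your explicit monotone coupling simply makes rigorous what the paper leaves implicit in the phrase ``halting on oracle calls is less likely and the chances of returning 0 or 1 are both higher.''
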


\begin{proof} $T^{\tilde{O}^{k+1}}$ runs for one more step than $T^{\tilde{O}^k}$ and can answer one more query. Because $\tilde{O}^{k+1}$ extends $\tilde{O}^k$, $|\tilde{O}^{k+1}_\alpha(\cQ_i)-\tilde{O}^k_\alpha(\cQ_i)| \leq 2^{-k-1}$, which means $\tilde{O}^{k+1}_\alpha(\cQ_i) - 2^{-k-1} \geq \tilde{O}^k_\alpha(\cQ_i) - 2^{-k}$, so halting on oracle calls is less likely and the chances of returning 0 or 1 are both higher.  
\end{proof}

%-------------------------------%
\paragraph{Search algorithm.}
%-------------------------------%
Now we are prepared to state the algorithm that constructs a reflective oracle in the limit. The algorithm recursively traverses a directed acyclic graph (DAG) of partial oracles. The DAG's nodes are the partial oracles; level $k$ of the DAG contains all $k$-partial oracles. There is an edge in the DAG from the $k$-partial oracle $\tilde{O}^k$ to the $i$-partial oracle $\tilde{O}^i$ if and only if $i = k+1$ and $\tilde{O}^i$ extends $\tilde{O}^k$.

For every $k$, there are only finitely many $k$-partial oracles, since they are functions from finite sets to finite sets. In particular, there are exactly two 1-partial oracles (so the DAG has two \emph{sources}, nodes without parents where the search can begin). Pick one of them to start with, and proceed recursively as follows. Given a $k$-partial oracle $\tilde{O}^k$, there are finitely many $(k+1)$-partial oracles that extend $\tilde{O}^k$ (finite out-degree). Pick one that is $(k+1)$-partially reflective (which can be checked in finite time). If there is no $(k+1)$-partially reflective extension, backtrack.

By \cref{lemma:infinite_seq} our DAG is infinitely deep and thus the search does not terminate. Moreover, it can backtrack to each level only a finite number of times because there are only a finite number of paths from a source to each level and at each level there are only a finite number of possible extensions (in fact, though it is possible for two different paths from sources in the DAG to reach the same node, there is no need to return to any node once it has been visited and all of its children have been explored). Therefore, the algorithm will produce an infinite sequence of partial oracles, each extending the previous. Because of finite backtracking, the output eventually stabilizes on a sequence of partial oracles $\tilde{O}^1, \tilde{O}^2, ...$. By the following lemma, this sequence converges to a reflective oracle, proving \cref{thm:non_binary_limit_comp_refl_oracle}. 

%-------------------------------%
\begin{lemma}[limit is reflective]\label{lemma:limit_is_refl_oracle}
%-------------------------------%
    Let $\tilde{O}^1, \tilde{O}^2, ...$ be a sequence where $\tilde{O}^k$ is a $k$-partially reflective oracle and $\tilde{O}^{k+1}$ extends $\tilde{O}^k$ for all $k \in \mathbb{N}$. Let $O := \lim_{k\to\infty} \tilde{O}^k$ be the pointwise limit. Then
    \begin{enumerate}
        \item $\lambda_T^{\tilde{O}^k}(\alpha|x)\to\lambda_T^O(\alpha|x)$ as $k\to\infty$ for all $\alpha \in \cA$ and $x \in \Sigma^*$. 
        \item $O$ is a reflective oracle.
    \end{enumerate}
\end{lemma}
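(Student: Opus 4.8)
The plan is to first verify that the pointwise limit $O$ is well defined, and then prove the two claims in order, since claim~2 depends on claim~1. For the limit: because $\tilde O^{k+1}$ extends $\tilde O^k$, \cref{def:extending_partial_O} gives $|\tilde O^{k+1}_\alpha(\cQ_i) - \tilde O^k_\alpha(\cQ_i)| \le 2^{-k-1}$ for $i \le k$, so for each fixed query the values form a Cauchy sequence with $|\tilde O^k_\alpha(\cQ_i) - O_\alpha(\cQ_i)| \le \sum_{j\ge k} 2^{-j-1} = 2^{-k}$. Hence $O := \lim_k \tilde O^k$ exists pointwise, and each $\tilde O^k$ approximates $O$ (up to $2^{-k}$ rather than $2^{-k-1}$, which is harmless for the monotone bound).

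For claim~1, by \cref{lemma:lambdas_increase} the sequence $\lambda_T^{\tilde O^k}(\alpha|x)$ is non-decreasing in $k$, and by \cref{lambda_bounds} it is bounded above by $\lambda_T^O(\alpha|x)$, so it converges to some $L \le \lambda_T^O(\alpha|x)$. For the reverse inequality I would use a path-capturing argument: $\lambda_T^O(\alpha|x)$ is the total probability of the halting computations of $T^O$ on $x$ that output exactly $\alpha$, and for any $\epsilon>0$ finitely many such computations, each of bounded length $s$ and using oracle queries of bounded index $I$, account for all but $\epsilon$ of this mass. For $k \ge \max(s,I)$ the truncated machine $T^{\tilde O^k}$ can execute each of these computations, and since $\tilde O^k_\alpha(\cQ) \to O_\alpha(\cQ)$ while the $2^{-k}$ defect vanishes, their probability under $\tilde O^k$ converges to their probability under $O$. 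Thus $\liminf_k \lambda_T^{\tilde O^k}(\alpha|x) \ge \lambda_T^O(\alpha|x) - \epsilon$ for every $\epsilon$, forcing $L = \lambda_T^O(\alpha|x)$.

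For claim~2, fix $T,x$. Each $\tilde O^k_\alpha(T,x,\cdot)$ is non-increasing with at most one intermediate value, so the limit $O_\alpha(T,x,\cdot)$ is non-increasing on the rationals. Setting $q_\alpha := \sup\{p : O_\alpha(T,x,p)=1\}$, I would check, using monotonicity together with the single-intermediate-value restriction (which rules out a plateau of strictly interior values, since the enumerated queries are dense), that $O_\alpha(T,x,p)=1$ for $p<q_\alpha$ and $=0$ for $p>q_\alpha$, i.e. the limit is genuinely a step oracle with cutoff $q_\alpha$. The sandwich $\lambda_T^O(\alpha|x) \le q_\alpha \le 1-\sum_{\beta\ne\alpha}\lambda_T^O(\beta|x)$ then follows from the first two $k$-partial reflectivity conditions combined with claim~1: for rational $p<\lambda_T^O(\alpha|x)$, claim~1 gives $p<\lambda_T^{\tilde O^k}(\alpha|x)$ eventually, forcing $\tilde O^k_\alpha(T,x,p)=1$ and hence $O_\alpha(T,x,p)=1$, so $q_\alpha \ge \lambda_T^O(\alpha|x)$; the upper bound is symmetric, using the threshold $1-\sum_{\beta\ne\alpha}\lambda_T^{\tilde O^k}(\beta|x)$.

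The remaining and hardest step is the simplex condition $\sum_\alpha q_\alpha = 1$, which genuinely needs the two summation conditions of $k$-partial reflectivity rather than the sandwich alone (the sandwich yields $\sum_\alpha q_\alpha \le 1$ only when the semimeasure is non-defective). Writing $M_\alpha^k$ and $m_\alpha^k$ for the largest present $p$ with $\tilde O^k_\alpha(T,x,p)=1$ and the smallest present $p$ with value $0$, the two summation conditions read $\sum_\alpha M_\alpha^k \le 1 \le \sum_\alpha m_\alpha^k$, so it suffices to prove $M_\alpha^k \to q_\alpha$ and $m_\alpha^k \to q_\alpha$ and pass to the finite-sum limit. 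I would establish this by (a) a gap argument: every present $p \in (M_\alpha^k,m_\alpha^k)$ carries an intermediate value and at most one such is allowed, while the query set becomes arbitrarily dense, so $m_\alpha^k - M_\alpha^k \to 0$; and (b) the one-sided bounds $\limsup_k M_\alpha^k \le q_\alpha$ and $\liminf_k m_\alpha^k \ge q_\alpha$, obtained from monotonicity by testing a rational strictly on the correct side of $q_\alpha$ and taking limits. Together (a) and (b) pin $\lim_k M_\alpha^k = \lim_k m_\alpha^k = q_\alpha$, whence $\sum_\alpha q_\alpha = 1$. The main obstacle is exactly this convergence of the discrete cutoffs to $q_\alpha$: a grid value may approach $1$ or $0$ without ever equaling it, so the first two conditions alone leave an ambiguous zone, and only the interplay of monotonicity, the single-intermediate-value restriction, and the density of the enumerated queries forces the cutoffs together.
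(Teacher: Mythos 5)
Your proposal is correct and follows essentially the same route as the paper's own proof: claim 1 via monotonicity and boundedness (\cref{lemma:lambdas_increase}, \cref{lambda_bounds}) plus capturing all but $\eps$ of the halting mass of $T^O$ with finitely many bounded computation paths, and claim 2 by extracting a step cutoff $q_\alpha$, sandwiching it using claim 1 together with the first two partial-reflectivity conditions, and passing the two summation conditions to the limit to get $\sum_\alpha q_\alpha = 1$. The only difference is organizational: the paper identifies the cutoff as the unique \emph{persistent} intermediate query point $p_\alpha$ and uses the monotone inclusions $L^k_\alpha \subseteq L^{k+1}_\alpha$, $H^k_\alpha \subseteq H^{k+1}_\alpha$ whose unions cover $\mathbb{Q}\cap[0,1]$ except $p_\alpha$, whereas you define $q_\alpha$ as the supremum of the limit function and recover the same convergence $\max L^k_\alpha \to q_\alpha \leftarrow \min H^k_\alpha$ through your gap/density argument and one-sided bounds --- the same ingredients (per-level monotonicity, uniqueness of the intermediate value, enumeration of all queries) in a slightly different order.
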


\begin{proof} First note that the pointwise limit must exist because $|\tilde{O}^k_\alpha(\cQ_i) - \tilde{O}^{k+1}_\alpha(\cQ_i)| \leq 2^{-k-1}$ by \cref{def:extending_partial_O}. 

\begin{enumerate}
    \item Since $\tilde{O}^{k+1}$ extends $\tilde{O}^k$, each $\tilde{O}^k$ approximates $O$. Let $\alpha \in \cA, T \in \cT,$ and $x \in \Sigma^*$ and consider the sequence $a_k := \lambda_T^{\tilde{O}^k}(\alpha|x)$. By \cref{lemma:lambdas_increase}, $a_k \leq a_{k+1}$ so the sequence is monotonically increasing. It is also bounded above by $\lambda^O_T(\alpha|x)$ according to \cref{lambda_bounds}, so it converges. It only remains to show that the sequence does not converge to something less than $\lambda^O_T(\alpha|x)$. The probability that $T^O$ halts, which is bounded above by 1, is the sum of its probabilities of halting at each step. Therefore, the probability $T^O$ halts after running for more than $k$ steps is a tail sum that approaches 0 as $k\to\infty$. The distributions on the results of calls to the partial oracles converge to the distribution on the results of calls to $O$ by definition of $O$ and because $2^{-k} \to 0$ in the definition of $T^{\tilde{O}^k}$. The definition of $T^O$ therefore implies that $a_k\to\lambda_T^O(\alpha|x)$ as desired. \label{limit_claim}
    
    \item By definition, $O$ is an oracle. 
    It only remains to show that $O$ satisfies the step reflective conditions given in \cref{par:defs}. Consider a fixed $T \in M$ and $x \in \Sigma^*$. Let $P^k = \{ p \in \mathbb{Q} | (T,x,p) \in \{\cQ_1, \cQ_2, ... \cQ_k \} \}$. Let $L^k_\alpha$ be the subset of $P^k$ for which $\tilde{O}^k_\alpha(T,x,p) = 1$ and $H^k_\alpha$ be the subset of $P^k$ for which $\tilde{O}^k_\alpha(T,x,p) = 0$. Because $\tilde{O}^k$ is $k$-partially reflective, there may exist at most one point $p_\alpha$ which is in $P^k$ but is not in $L^k_\alpha$ or $H^k_\alpha$. Because $\tilde{O}^k$ takes values on a $2^{-k}$ grid and $\tilde{O}^{k+1}$ extends $\tilde{O}^k$, $p_\alpha$ is not in $L^t_\alpha$ or $H^t_\alpha$ for any $k' \geq k$; it is not possible to reach 0 or 1 from $\tilde{O}^k_\alpha(T,x,p_\alpha)$ by extension since
    \begin{equation}
        \begin{split}
            |\tilde{O}^{k'}_\alpha(T,x,p_\alpha) - \tilde{O}^k_\alpha(T,x,p_\alpha)| & \leq \sum_{i=k}^{k'-1} |\tilde{O}^{i+1}_\alpha(T,x,p_\alpha) - \tilde{O}^i_\alpha(T,x,p_\alpha)| \\
            & \leq \sum_{i=k}^{k'-1} 2^{-i-1} < \sum_{i=k}^\infty 2^{-i-1} = 2^{-k}
        \end{split}
    \end{equation}
    This means that if any such $p_\alpha$ exists it does not depend on $k$, and we will consider only $t$ sufficiently large so that $p_\alpha$ is in $P^k$ but not in $L^k_\alpha$ or $U^k_\alpha$ for all $k \geq t$. We will assume without loss of generality that $p_\alpha$ exists; otherwise the proof is routinely simplified. By uniqueness, no point other than $p_\alpha$ can lie in $P^k - (L^k_\alpha \cup H^k_\alpha)$ for any such $k$. Using the bounds on extensions again, $L^k_\alpha \subseteq L^{k+1}_\alpha$ and $H^k_\alpha \subseteq H^{k+1}_\alpha$. Noting that $k$-partial step reflectivity of each $\tilde{O}^k$ requires it is non-increasing, we must have $L^k_\alpha < H^k_\alpha$ element-wise, so $\sup \bigcup_{k \geq t} L^k_\alpha \leq \inf \bigcup_{k \geq t} H^k_\alpha$. Additionally, $\bigcup_{k \geq t} P^k = \mathbb{Q} \cap [0,1]$ since all queries are enumerated. This implies that $(\bigcup_{k \geq t}L^k_\alpha) \cup (\bigcup_{k \geq t} H^k_\alpha) = \mathbb{Q} \cap [0,1] - \{ p_\alpha \}$. Therefore,
    \[
    \lim_{k\to\infty} \max L^k_\alpha = \sup \bigcup_{k \geq t} L^k_\alpha = \inf \bigcup_{k \geq t} H^k_\alpha = \lim_{k\to\infty} \min H^k_\alpha
    \]
    The non-increasing property also requires that these limits are $p_\alpha$ (in the case that $p_\alpha$ does not exist we use this as its definition). Because $H^k_\alpha$ are increasing sets and eventually include all $p < p_\alpha$, for such $p$, $1 = \lim_{k\to\infty} \tilde{O}^k(T,x,p) = O(T,x,p)$. Similarly, $O(T,x,p) = 0$ for $p > p_\alpha$. It is also clear that $\lambda_T^O(\alpha|x) \leq p_\alpha \leq 1 - \sum_{\beta \neq \alpha} \lambda_T^O(\beta|x)$ because the bounds are the limits of $\lambda_T^{\tilde{O}^k}(\alpha|x)$ and $1 - \sum_{\beta \neq \alpha} \lambda_T^{\tilde{O}^k}(\beta|x)$ and $\tilde{O}^k$ is $k$-partially reflective. All that remains to show is that $\sum_\alpha p_\alpha = 1$. But the last pair of requirements for $k$-partial reflectivity are that $\sum_\alpha \min H^k_\alpha \geq 1$ and $\sum_\alpha \max L^k_\alpha \leq 1$. Taking the limits of both sides, $1 \leq \sum_\alpha p_\alpha \leq 1$, so $\sum_\alpha p_\alpha = 1$. Therefore, $O$ is a reflective oracle.
\end{enumerate}
\end{proof}
%
%-------------------------------%
\begin{theorem}[limit-computable non-binary reflective oracle]\label{thm:non_binary_limit_comp_refl_oracle}
%-------------------------------%
    There is a limit-computable reflective oracle over any finite output alphabet. 
\end{theorem}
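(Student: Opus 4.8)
The plan is to assemble the machinery of this section into the search algorithm described above and to verify its two required properties: that it produces a well-defined infinite extension chain, and that the resulting limit is both reflective and limit-computable. First I would fix the computable enumeration $\{\cQ_1, \cQ_2, \ldots\}$ of queries and run the depth-first search over the DAG whose level-$k$ nodes are the $k$-partial oracles and whose edges connect each $k$-partially reflective $\tilde{O}^k$ to each $(k+1)$-partially reflective $\tilde{O}^{k+1}$ extending it. \cref{lemma:infinite_seq} guarantees this DAG contains an infinite path (obtained by discretizing a genuine reflective oracle, which exists by \cref{sec:non_binary_refl_oracle_existence}), so the search can never exhaust all branches below a source: it is infinitely deep. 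Each node's $(k+1)$-partial reflectivity is checkable in finite time, so the traversal is genuinely computable.

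Second, I would argue stabilization. Each level contains only finitely many $k$-partial oracles (finitely many functions from a finite set into a finite grid), each node has finite out-degree, and no node need be revisited once its children are explored; hence there are only finitely many paths from a source to level $k$, and the search backtracks to or below level $k$ only finitely often. Combined with infinite depth, this means that for every fixed $k$ the $k$-th entry of the algorithm's output changes only finitely many times, so the output converges to a fixed infinite sequence $\tilde{O}^1, \tilde{O}^2, \ldots$ in which each $\tilde{O}^k$ is $k$-partially reflective and $\tilde{O}^{k+1}$ extends $\tilde{O}^k$. \cref{lemma:limit_is_refl_oracle} then applies verbatim: the pointwise limit $O := \lim_{k\to\infty} \tilde{O}^k$ is a reflective oracle over $\cA$.

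Third, and this is where I would be most careful, I would establish limit-computability. Define $\phi((\alpha, \cQ_i), m)$ by running the search for $m$ steps, taking the deepest partial oracle $\tilde{O}^k$ currently on the active path with $k \geq i$ (defaulting to, say, $1/2$ if the search has not yet reached depth $i$), and returning $\tilde{O}^k_\alpha(\cQ_i)$. This is a total computable function of $(\alpha,\cQ_i,m)$. Because the output stabilizes and the active depth tends to infinity, for large $m$ this reads off $\tilde{O}^k_\alpha(\cQ_i)$ for growing $k$; since each $\tilde{O}^k$ in the stabilized sequence approximates $O$ with $|\tilde{O}^k_\alpha(\cQ_i) - O_\alpha(\cQ_i)| \leq 2^{-k-1}$, we obtain $\phi((\alpha, \cQ_i), m) \to O_\alpha(\cQ_i)$ as $m\to\infty$, which is exactly limit-computability of $O$.

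The main obstacle is the second step: making the ``finite backtracking implies stabilization'' argument fully rigorous. The delicacy is that the search may revisit a given level many times before committing to a surviving prefix, so one must argue that no fixed entry can change infinitely often. This follows from the finite number of nodes per level together with the non-termination guaranteed by \cref{lemma:infinite_seq}, but the bookkeeping that every finite prefix is eventually frozen is the part most easily handled too loosely. Everything else is a direct appeal to the lemmas already proved, so the theorem is essentially a corollary of \cref{lemma:infinite_seq} and \cref{lemma:limit_is_refl_oracle} once the algorithm's computability and stabilization are secured.
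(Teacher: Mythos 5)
Your proposal is correct and follows essentially the same route as the paper: run the backtracking search over the DAG of partially reflective partial oracles, use \cref{lemma:infinite_seq} to guarantee non-termination, argue stabilization via finite backtracking per level, and invoke \cref{lemma:limit_is_refl_oracle} for reflectivity of the pointwise limit. Your explicit construction of the approximating function $\phi$ merely spells out the limit-computability step that the paper leaves implicit in the search algorithm's stabilization, so the two arguments coincide in substance.
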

\begin{proof} The search algorithm produces a reflective oracle in the limit by \cref{lemma:limit_is_refl_oracle}.
\end{proof}

%%%%%%%%%%%%%%%%%%%%%%%%%%%%%%%%%%%%%%%%%%%%%%%%%%%%%%%%%%%%%%%
\section{General Reflective Oracle Computability of Completed Semimeasures} \label{sec:general_refl_oracle_comp}
%%%%%%%%%%%%%%%%%%%%%%%%%%%%%%%%%%%%%%%%%%%%%%%%%%%%%%%%%%%%%%%

We have restricted our focus to step reflective oracles, but \cite{leike_formal_2016} uses a more general class (defined only for binary alphabets) that can randomize at any point between $\lambda_T^O(1|x)$ and $1 - \lambda_T^O(0|x)$. The oracle no longer needs to be indexed by symbol because the completed probability of 0 can be determined from the completed probability of 1. With this definition, there is not necessarily a unique crossover point $q$ where $O(T,x,\cdot)$ switches from 1 to 0. However, $\bar\lambda_T^O(\cdot | x)$ can still be defined by running a binary search. The only complication is that the limit is no longer deterministic. 

\begin{algorithm} 
	\caption{pTM $C_T$}
    \label{alg:C_T}
	\begin{algorithmic}[1]
        \Input $x \in \{ 0, 1 \}^*$
        \Require Random bit sequence $\omega$
        \Output $y \sim \bar{\lambda}^O_T(\cdot|x)$
        \State $l, h = 0, 1$
        \For{$i=1,2,...$}
            \State $m = \frac{l + h}{2}$
            \If{$\flip(O(T, x, m))$} $l \gets m$
            \Else \ $h \gets m$
            \EndIf 
            \If{$\omega_{1:i} + 2^{-i} < l$} Return 1
            \ElsIf{$\omega_{1:i} > h$} Return 0
            \EndIf
        \EndFor
    \end{algorithmic}
\end{algorithm}

Let $\bar\lambda_T^O = \lambda_{C_T}^O$, defined in \ref{alg:C_T}. On input string $x \in \{ 0, 1 \}^*$, let $p^*$ be the limit point of the binary search using $O(T,x,\cdot)$. This is a random variable depending on the stochasticity of queries to $O$. Fixing the oracle's random choices, $p^*$ is the probability (over random bits $\omega$) that $C_T$ returns $1$. This implies that the overall probability that $C_T$ returns $1$ is $\bar\lambda_T^O = \lambda_{C_T}^O = \mathbb{E}[p^*]$ (with the expectation over the oracle's responses). Because $C_T$ halts with probability $1$, $\bar\lambda_T^O$ is a measure and is $O$-estimable (this time with a deterministic binary search). 

%%%%%%%%%%%%%%%%%%%%%%%%%%%%%%%%%%%%%%%%%%%%%%%%%%%%%%%%%%%%%%%
\section{Lower Semicomputability of pTM sampled Semimeasures}\label{sec:lscm_vs_pTM}
%%%%%%%%%%%%%%%%%%%%%%%%%%%%%%%%%%%%%%%%%%%%%%%%%%%%%%%%%%%%%%%

%-------------------------------%
\begin{theorem}[l.s.c.\ of pTM semimeasures]\label{thm:lscsm_from_pTM}
%-------------------------------%
    For any pTM $T$, $\lambda_T$ has l.s.c.\ conditionals.
\end{theorem}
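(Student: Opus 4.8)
The plan is to exhibit, for each symbol $\alpha\in\cA$, a computable function $\phi$ that is monotone increasing in an integer parameter $k$ and converges from below to $\lambda_T(\alpha|x)$; this is precisely the $(\Leftarrow)$ direction sketched after \cref{thm:lscsm_vs_pTM}. The key observation is that the probabilistic computation of $T$ on input $x$ unfolds as a binary tree: each time $T$ consumes a bit from its random tape the computation branches into two subtrees, each receiving half of the current probability mass, and a branch terminates when $T$ halts. By definition $\lambda_T(\alpha|x)$ is the total probability mass of those branches that halt with the output tape containing exactly the single symbol $\alpha$. A branch that reads $b$ random bits before halting contributes mass $2^{-b}$, and since $T$ stops reading bits once it halts, the set of finite random-bit strings on which $T$ halts is prefix-free, so these contributions sum without overlap.

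First I would define $\phi(x,\alpha,k)$ to be the sum of $2^{-|b|}$ over all random-bit strings $b$ of length at most $k$ such that $T$, run on input $x$ with random bits $b$, halts within $k$ computation steps having consumed exactly $b$ and leaving output exactly $\alpha$. For fixed $k$ this is a finite computation: a run of at most $k$ steps consumes at most $k$ random bits, so it suffices to simulate $T$ along the (at most $2^{k}$) distinct random-bit prefixes reaching depth $\le k$, each for $\le k$ steps, and then test the decidable condition ``halted with output exactly $\alpha$'' on each terminated branch. Hence $\phi$ is computable.

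Next I would verify monotonicity and convergence. Monotonicity in $k$ is immediate: any branch counted at stage $k$ halts within $k$ (hence within $k+1$) steps and so is counted again at stage $k+1$, and no branch is ever removed; thus $\phi(x,\alpha,k)\le\phi(x,\alpha,k+1)$. For convergence, every branch that halts with output $\alpha$ does so after finitely many steps $s$ having read a finite string $b$, so it is counted in $\phi(x,\alpha,k)$ for all $k\ge\max\{s,|b|\}$, while conversely every term of $\phi$ is one such branch mass. By the prefix-freeness noted above the full sum over these branches equals $\lambda_T(\alpha|x)$, so by monotone convergence $\lim_{k\to\infty}\phi(x,\alpha,k)=\lambda_T(\alpha|x)$. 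This exhibits $\lambda_T(\cdot|x)$ as l.s.c., completing the proof.

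The only point requiring care is the bookkeeping of probability mass, namely ensuring no branch is double-counted and that each finite-stage sum is genuinely finite and exactly computable. This is handled entirely by the prefix-free structure of halting computations: because $T$ ceases to read random bits once it halts, each halting branch is identified with a unique minimal bit string $b$ contributing $2^{-|b|}$, and truncating the simulation at depth $k$ simply omits the not-yet-discovered branches, all of which carry nonnegative mass. That omission of nonnegative terms is exactly what makes the approximation monotone and one-sided (from below), which is what l.s.c.\ requires; an analogous argument would not yield an upper bound, since undiscovered halting mass could be arbitrarily large relative to the current stage.
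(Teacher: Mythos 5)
Your proposal is correct and follows essentially the same route as the paper's proof: unfolding $T$'s computation as a binary tree over random bits and summing the $2^{-|b|}$ masses of halting branches labeled $\alpha$ discovered by a depth-$k$ truncated simulation, which is exactly the paper's breadth-first search to increasing depths. Your version is simply more explicit about the prefix-freeness of halting bit strings, the precise stage-$k$ approximant, and the monotone convergence argument.
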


\begin{proof} We can see this by constructing a binary tree with each edge corresponding to the next random bit received by $T$. For some sequences of random bits, $T$ halts after reading a prefix without requesting any further bits. If we mark the leaves of this tree with the output of $T$ when it halts (which is deterministic once the bits are fixed) then $\lambda_T(\alpha|x)$ is the sum of the probabilities for each random string along a path from the root to a leaf labeled with $\alpha$, which is $2^{-l}$ for a path of length l. Though there are uncountably many infinite sequences of random bits, so we cannot literally run $T$ on all random bit sequences in parallel, we can run a breadth-first search on the binary tree to increasing depths and sum the probabilities of each leaf marked $\alpha$ encountered, so $\lambda_T$ is l.s.c.\ 
\end{proof}

%-------------------------------%
\begin{theorem}[l.s.c.\ conditionals are sampled by a pTM]\label{thm:pTM4lscsm}
%-------------------------------%
    If $\mu$ has l.s.c.\ conditionals, we can find a pTM $T$ such that $\mu = \lambda_T$. 
\end{theorem}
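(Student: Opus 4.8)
The plan is to give the detailed version of the $(\Rightarrow)$ direction of \cref{thm:lscsm_vs_pTM}, which is exactly this statement. Since $\mu$ has l.s.c.\ conditionals, I would fix for each $\alpha \in \cA = \{1,\dots,d\}$ a computable function $\phi_\alpha(x,k)$, monotone nondecreasing in $k$, with $\lim_{k\to\infty}\phi_\alpha(x,k) = \mu(\alpha|x)$; set $\phi_\alpha(x,0) := 0$ and $\Delta_\alpha(k) := \phi_\alpha(x,k) - \phi_\alpha(x,k-1) \geq 0$. The core is a stick-breaking construction: carve $[0,1)$ into countably many half-open intervals indexed by pairs $(\alpha,k)$, laid out in the order $I_1(1),\dots,I_d(1),I_1(2),\dots$, where $I_\alpha(k)$ has length $\Delta_\alpha(k)$. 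Because $\sum_k \Delta_\alpha(k) = \mu(\alpha|x)$, the union $\bigcup_{k'} I_\alpha(k')$ has total length exactly $\mu(\alpha|x)$, and together these intervals tile $[0;\sum_\alpha \mu(\alpha|x)) \subseteq [0;1)$, leaving the defect mass $1-\sum_\alpha\mu(\alpha|x)$ uncovered.

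Next I would describe the pTM $T$, which reads $x$ from its input tape (the construction is uniform in $x$ since $\phi_\alpha$ is jointly computable). On input $x$, $T$ consumes its random bits $\omega_1,\omega_2,\dots$ one at a time; after $k$ bits it holds the dyadic interval $J_k := [0.\omega_{1:k};\, 0.\omega_{1:k} + 2^{-k})$. At stage $k$ it tests whether $J_k \subseteq \bigcup_{k'} I_\alpha(k')$ for some $\alpha$, and if so halts with that output. This test is effective: the cumulative interval endpoints are computable from the $\Delta_\alpha(k')$, and only the finitely many intervals whose left endpoint lies below $0.\omega_{1:k}+2^{-k}$ can meet $J_k$, so containment reduces to a finite computation over those relevant pieces.

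I would then verify correctness. Fix an infinite bit string $\omega$ and let $0.\omega \in [0,1)$ be the real it encodes. If $0.\omega$ lies in the interior of some $I_\alpha(k')$, then since the nested $J_k$ shrink to the point $0.\omega$, eventually $J_k \subseteq I_\alpha(k')$ and $T$ halts with output $\alpha$ after finitely many bits; if $0.\omega \geq \sum_\alpha \mu(\alpha|x)$, then $J_k$ is never swallowed by a covered interval and $T$ runs forever with no output, which is permitted. The only ambiguous $\omega$ are those with $0.\omega$ an endpoint of some $I_\alpha(k')$, a countable and hence Lebesgue-null set, so they do not affect the output probabilities. Since $\omega$ is uniform, $\Pr[T(x) = \alpha] = \bigl|\bigcup_{k'} I_\alpha(k')\bigr| = \mu(\alpha|x)$, i.e.\ $\lambda_T(\alpha|x) = \mu(\alpha|x)$ for each $\alpha$; taking the product over the prefix gives $\lambda_T = \mu$.

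The main obstacle is the interplay of effectiveness, genuine finite-time halting, and the null-set bookkeeping: I must argue that whenever $0.\omega$ lies strictly inside a covered interval the test succeeds after finitely many bits (so $T$ really halts rather than merely ``halting in the limit''), and that the boundary ambiguities contribute probability zero. Both follow because $J_k$ contracts to a point and the endpoints form a null set, but I would state these two facts carefully, since they are precisely what force the output probabilities to equal $\mu(\alpha|x)$ exactly rather than only approximately.
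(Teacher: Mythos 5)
Your construction is the same one the paper uses (the sketch in the ($\Rightarrow$) direction of \cref{thm:lscsm_vs_pTM}, implemented as \cref{alg:sample} in the appendix proof of this theorem), and your probability bookkeeping --- interior points are eventually captured because $J_k$ shrinks to a point, boundary points form a countable null set, the defect mass yields non-halting --- is exactly the paper's argument. However, there is a genuine flaw in the step where you argue the stage-$k$ containment test is effective. You claim that only finitely many intervals have left endpoint below $b := 0.\omega_{1:k} + 2^{-k}$, so that testing $J_k \subseteq \bigcup_{k'} I_\alpha(k')$ is a finite computation. Write $S := \sum_\alpha \mu(\alpha|x)$. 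The left endpoints of the intervals, taken in layout order, form a nondecreasing sequence converging to $S$, so your finiteness claim holds only when $b < S$; when $b \geq S$ (and infinitely many intervals have positive length) \emph{every} left endpoint lies below $b$, and the natural procedure ``enumerate intervals until one has left endpoint $\geq b$, then check the finitely many found'' never terminates. Since $S$ is only l.s.c., the machine cannot recognize this situation and simply hangs without reading further random bits. This is not a measure-zero nuisance: let $\mu(\cdot|x)$ be a proper measure, so $S = 1$, with infinitely many positive-length intervals. If $\omega_1 = 1$ then $J_1 = [1/2,1)$ and $b = 1 = S$, so your test hangs at stage $1$, and the machine produces no output for \emph{every} $\omega$ beginning with $1$. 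Then $\sum_\alpha \lambda_T(\alpha|x) \leq 1/2 < 1 = \sum_\alpha \mu(\alpha|x)$, i.e.\ $\lambda_T \neq \mu$.

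The repair is precisely what \cref{alg:sample} does: at stage $k$, test containment only against the finitely many intervals produced by approximation stages $k' \leq k$ (the growing list $P$), never against the full infinite union. Each such test is trivially finite, and correctness is unaffected: the endpoints of $I_\alpha(k')$ are already final once stage $k'$ has been computed, so if $0.\omega$ is interior to some $I_\alpha(k')$, then as soon as $k \geq k'$ and $2^{-k}$ is small enough, $J_k$ is contained in an interval already on the list and the machine halts with output $\alpha$; the exceptional $\omega$ remain a null set, and the success events for distinct labels stay disjoint, so $\lambda_T(\alpha|x) = \mu(\alpha|x)$ exactly. (Equivalently, you could dovetail the containment checks over pairs of bits-read and approximation-stage indices.) With that one change, the rest of your proof goes through verbatim.
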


\begin{proof} Let $\phi_\alpha(x, k)$ lower semicompute $\mu(\alpha|x)$ with $O$ access. 
The trick (similar to the proof of the coding theorem \cite{livitanyi}) is to partition the interval into a list P of subintervals, each labeled with a symbol from $\cA$. We will define some subroutines for manipulating P by adding a new subinterval on the right of a given length and checking whether a point is in a labeled subinterval. See \cref{alg:sample} for details.

\noindent\begin{minipage}[t]{0.48\textwidth}
%-------------------------------%
\begin{algorithm}[H]
%-------------------------------%
	\caption{push} 
	\begin{algorithmic}[1]
        \Input P, $\alpha \in \cA$, $\Delta \in \mathbb{Q}$ 
        \Effect A subinterval of label $\alpha$ and length $\Delta$ is added to P
        \State left $\gets$ P[-1].right
        \State P.append(label: $\alpha$, right: left + $\Delta$)
    \end{algorithmic}
\end{algorithm}\vspace{-3ex}
%-------------------------------%
\begin{algorithm}[H]
%-------------------------------%
	\caption{check} 
	\begin{algorithmic}[1]
        \Input P and $\omega$
        \Effect If $\omega$ is in a subinterval, return its label
        \State left $\gets$ 0
        \For{each label, right $\in$ P}
            \If{left $\leq \omega \leq$ right}
                \State return label
            \EndIf
            \State left $\gets$ right
        \EndFor
    \end{algorithmic}
\end{algorithm}
\end{minipage}
\hfill
\begin{minipage}[t]{0.48\textwidth}
%-------------------------------%
\begin{algorithm}[H]
%-------------------------------%
	\caption{sample}
    \label{alg:sample}
	\begin{algorithmic}[1]
        \Input $\phi_\alpha, x$ 
        \Require random stream $\omega$
        \Output $\alpha \sim \mu(\cdot|x)$
        \State Let P be an empty list
        \State $\psi_\alpha \gets 0$
        \For{$k \gets 1$ to $\infty$}
            \For{$\alpha \in \cA$}
                \State $\Delta \gets \phi_\alpha(x,k) - \psi_\alpha$
                \State $\psi_\alpha \gets \phi_\alpha(x,k)$
                \State push(P,$\alpha$,$\Delta$)
            \EndFor
            \State check(P,$\omega_{1:k})$
        \EndFor
    \end{algorithmic}
\end{algorithm}
\end{minipage}

Note that $w_{1:k}$ is only specified to precision $2^{-k}$. We extend the $\leq$ and $\geq$ comparisons against it to only succeed when they succeed in the worst case. It is fairly easy to see that in the limit the total area of the partitions for each $\alpha$ is equal to $\mu(\alpha|x)$. Because $\mu$ is only assumed to be a semimeasure, it is possible that some of the interval is unallocated and in this case \cref{alg:sample} may not halt. 
\end{proof}

%%%%%%%%%%%%%%%%%%%%%%%%%%%%%%%%%%%%%%%%%%%%%%%%%%%%%%%%%%%%%%%
\section{The Subjective Environment is Well-Defined}\label{app:subjective_environment}
%%%%%%%%%%%%%%%%%%%%%%%%%%%%%%%%%%%%%%%%%%%%%%%%%%%%%%%%%%%%%%%

For the subjective environment $\sigma_i$ to qualify as a true environment, it must not depend on the strategy $\pi_i$. With a bit of algebra we can show that it depends on $\sigma$ and $\pi_j$ for $j \neq i$ but not $\pi_i$:
\[
\sigma_i(e^i_T|\text{\ae}^i_{<T}a^i_T) = \frac{\sum_{\text{\ae}^j_{\leq T},\ j \neq i} \sigma^\pi(\text{\ae}_{\leq T})}{\sum_{\text{\ae}^j_{< T}a^j_T,\ j \neq i} \sigma^\pi(\text{\ae}_{< T}a_T)} 
\]
\[
= \frac{\sum_{\text{\ae}^j_{\leq T},\ j \neq i} \prod_{t=1}^T \sigma^\pi(e_t|\text{\ae}_{<t}a_t) \prod_{t=1}^T \sigma^\pi(a_t|\text{\ae}_{<t})}{\sum_{\text{\ae}^j_{< T}a^j_T,\ j \neq i} \prod_{t=1}^{T-1} \sigma^\pi(e_t|\text{\ae}_{<t}a_t) \prod_{t=1}^T \sigma^\pi(a_t|\text{\ae}_{<t})}
\]
\[
= \frac{\sum_{\text{\ae}^j_{\leq T},\ j \neq i} \prod_{t=1}^T \sigma(e_t|\text{\ae}_{<t}a_t) \prod_{t=1}^T \prod_{j=1}^n \pi_j(a^j_t|\text{\ae}^j_{<t})}{\sum_{\text{\ae}^j_{< T}a^j_T,\ j \neq i} \prod_{t=1}^{T-1} \sigma(e_t|\text{\ae}_{<t}a_t) \prod_{t=1}^T \prod_{j=1}^n \pi_j(a^j_t|\text{\ae}^j_{<t})}
\]
\[
= \frac{\prod_{t=1}^T \pi_i(a^i_t|\text{\ae}^i_{<t}) \sum_{\text{\ae}^j_{\leq T},\ j \neq i} \prod_{t=1}^T \sigma(e_t|\text{\ae}_{<t}a_t) \prod_{t=1}^T \prod_{j \neq i} \pi_j(a^j_t|\text{\ae}^j_{<t})}{\prod_{t=1}^T \pi_i(a^i_t|\text{\ae}^i_{<t}) \sum_{\text{\ae}^j_{< T}a^j_T,\ j \neq i} \prod_{t=1}^{T-1} \sigma(e_t|\text{\ae}_{<t}a_t) \prod_{t=1}^T \prod_{j \neq i} \pi_j(a^j_t|\text{\ae}^j_{<t})}
\]
\[
= \frac{ \sum_{\text{\ae}^j_{\leq T},\ j \neq i} \prod_{t=1}^T \sigma(e_t|\text{\ae}_{<t}a_t) \prod_{t=1}^T \prod_{j \neq i} \pi_j(a^j_t|\text{\ae}^j_{<t})}{\sum_{\text{\ae}^j_{< T}a^j_T,\ j \neq i} \prod_{t=1}^{T-1} \sigma(e_t|\text{\ae}_{<t}a_t) \prod_{t=1}^T \prod_{j \neq i} \pi_j(a^j_t|\text{\ae}^j_{<t})}
\]
As desired, all appearances of $\pi_i$ cancel.

\end{document}